\newcommand{\Y}[1]{\textcolor{yellow}{#1}}
\numberwithin{equation}{section}
\newcounter{mnotecount}[section]
\newcommand{\half}{\frac{1}{2}}
\newcommand{\eps}{\epsilon}
\newcommand{\veps}{\varepsilon}
\newcommand{\R}{r^2+a^2}
\newcommand{\di}{\mathrm{d}} %{\text{d}}
\newcommand{\Horizon}{\mathcal{H}_+}
\newcommand{\CHorizon}{\mathcal{CH}_+}
\newcommand{\Sphere}{\mathbb{S}^2}
\providecommand{\abs}[1]{\lvert#1\rvert}
  \def\moverlay{\mathpalette\mov@rlay}
  \def\mov@rlay#1#2{\leavevmode\vtop{%
     \baselineskip\z@skip \lineskiplimit-\maxdimen
     \ialign{\hfil$#1##$\hfil\cr#2\crcr}}}
\newcommand{\reg}{\mathbf{s}}
\providecommand{\ellmode}[2]{{#1}_{#2}}
\newcommand{\PJ}{\mathbb{P}}
\newcommand{\Naturals}{\mathbb{N}}
\newcommand{\rb}{\rho}
\def\tb{\tau}
\def\Sigmainit{\Sigma_{\text{init}}}
\newcommand{\bsub}{\begin{subequations}}
\newcommand{\esub}{\end{subequations}}
\def\ptr{\partial}
\def\th{\theta}
\newcommand{\MM}{\mathcal{M}}
\def\HH{\mathcal{H}}
\def\CHH{\mathcal{CH}}
\newcommand{\I}{\mathrm{\mathbf{I}}}
\newcommand{\II}{\mathrm{\mathbf{II}}}
\newcommand{\IIf}{\mathrm{\mathbf{II}_\Gamma}}
\newcommand{\IIfp}{\mathrm{\mathbf{II}'_{\Gamma'}}}
\newcommand{\rred}{r_{\textrm{red}}}
\newcommand{\rblue}{r_{\mathfrak{b}}}
\newcommand{\ub}{\underline{u}}
\def\rmod{r_{\text{mod}}}
\newcommand{\ubin}{\ub_{\textrm{in}}}
\newcommand{\uout}{u_{\textrm{out}}}
\def\phiin{\phi_{\textrm{in}}}
\def\phiout{\phi_{\textrm{out}}}
\newcommand{\rin}{r_{\textrm{in}}}
\newcommand{\rout}{r_{\textrm{out}}}
\newcommand{\thin}{\th_{\textrm{in}}}
\newcommand{\thout}{\th_{\textrm{out}}}
\newcommand{\tub}{\underline{w}}
\newcommand{\tu}{w}
\newcommand{\Cu}{\mathcal{C}}
\newcommand{\pu}{\partial_u}
\newcommand{\pv}{\partial_{\ub}}
\newcommand{\Lxi}{T}
\newcommand{\Leta}{\Phi}
\newcommand{\Th}{\Theta}
\newcommand{\Carter}{\mathbf{C}}
\newcommand{\psilarge}{\psi_{\ell\geq 1}}
\newcommand{\Ckpsilarge}{(\Carter^{k_2}\psi)_{\ell\geq 1}}
\newcommand{\fv}{\mathfrak{F}_{\ub}}
\newcommand{\fu}{\mathfrak{F}_u}
\newcommand{\Entub}{\mathbb{T}_{\tub}[\psi] }
\newcommand{\Enr}{\mathbb{T}_r[\psi] }
\newcommand{\Enn}{\mathbb{E}}
\def\Enw{\tilde{\mathbb{E}}_{deg}}
\def\Ende{\bar{\mathbb{E}}_{deg}}
\def\Endeh{\hat{\mathbb{E}}_{deg}}
\def\Endeg{\mathbb{E}_{deg}}
\newcommand{\psiz}{\psi_{\ell=0}}
\newcommand{\psih}{\psi_{\ell\geq 1}}
\newcommand{\lmu}{|\log(-\mu)|}
\def\dSp{\di\sigma_{\mathbb{S}^2}}
\def\Sp{\Sphere_{u,\ub}}
\def\Y{\hat{e}_3}
\def\Cur{\gamma_{\tub_1,(\phiin)_1}}
\providecommand{\palpha}{p_{\alpha}}
\def\DD{\mathcal{D}^+_{\text{init}}}
\def\DDr{{_{r}}\mathcal{D}^{+}_{\text{init}}}
\def\DDl{{_{l}}\mathcal{D}^{+}_{\text{init}}}
\def\avint{\mathop{\,\rlap{-}\!\!\int}\nolimits}
\def\g{\mathbf{g}}
\def\gma{\g_{M,a}}
\def\gme{\g_{M,e}}
\theoremstyle{plain}
\newtheorem{thm}{Theorem}[section]
\newtheorem{cor}[thm]{Corollary}
\newtheorem{lemma}[thm]{Lemma}
\newtheorem{prop}[thm]{Proposition}
\theoremstyle{definition}
\newtheorem{remark}[thm]{Remark}
\title{Precise late-time asymptotics of scalar field in the interior of a subextreme Kerr black hole and its application in Strong Cosmic Censorship conjecture}
\author[S. Ma and L.Zhang]{Siyuan Ma$^\dagger$ and Lin Zhang$^\star$}
\email{$^\dagger$siyuan.ma@aei.mpg.de, $^\star$lzhang\_math@cqu.edu.cn}
\address{$^\dagger$Laboratoire Jacques-Louis Lions,
Sorbonne Universit\'{e},
4 place Jussieu 75005 Paris, France\\
and\\
Max Planck Institute for Gravitational Physics, Am M\"{u}hlenberg 1, 14476 Potsdam,
Germany\\
and\\
Academy of Mathematics and Systems Science, The Chinese Academy of Sciences, Beijing 100190, China\\
$^\star$College of Mathematics and Statistics, Chongqing University, Chongqing 401331, China.}
\begin{document}

\allowdisplaybreaks

\begin{abstract}

In this work, we compute the precise late-time asymptotics for the scalar field in the interior of a non-static subextreme Kerr black hole,
based on recent progress on deriving its precise asymptotics in the Kerr exterior region. 
This provides a new proof of the generic $H^1_{\text{loc}}$-inextendibility of the Kerr Cauchy horizon against scalar perturbations that is first shown by Luk--Sbierski \cite{LukSbi16InstabKerr}.  The analogous results in Reissner--Nordstr\"{o}m spacetimes are also discussed.

\end{abstract}

\maketitle

\tableofcontents

%%%%%%%%%%%%%%%%%%%%%%%%%%%%%

%%%%%%%%%%%%%%%%%%%%%%
\section{Main statement and introduction}
\label{sect:intro}
%%%%%%%%%%%%%%%%%%%%%%

A Kerr metric $\gma$  is a Lorentzian metric
that takes the following form in the Boyer--Lindquist coordinates $(t,r,\theta,\phi)$: 
\begin{align}\label{eq:KerrMetricBoyerLindquistCoord}
\gma ={} & - \di t^2  + \frac{q^2}{\Delta} \di r^2+ q^2 \di \theta^2+\frac{2Mr}{q^2}(a\sin^2\theta \di \phi -\di t)^2 +(r^2+a^2)\sin^2\theta \di \phi^2.
\end{align}
Here, $q=\sqrt{r^2+a^2\cos^2\theta}$,   $\Delta= r^2 -2Mr+a^2$, $M$ and $Ma$ are the mass and the angular momentum of the Kerr black hole. In this work, we consider the \textbf{non-static subextreme Kerr}, that is, we require $0<\abs{a}<M$. The two roots $r_+$ and $r_-$ of function $\Delta$, which satisfy $0<r_-<M <r_+$,  are the locations of the event horizon $\HH$ and the Cauchy horizon $\CHH$.

We fix a time orientation on the manifold $(\MM\doteq\mathbb{R}\times (r_-,r_+)\times \Sphere, \gma)$ by requiring $-\ptr_r$ to be future-directed, and we call this time-oriented manifold $(\MM, \gma)$ as the \textit{Kerr interior}.
The event horizon $\HH$ and the Cauchy horizon $\CHH$ can be properly attached as the boundaries to the manifold $\MM$ in Kruskal-type coordinates. $\HH$ is a union of two bifurcate parts $\Horizon$ and $\Horizon'$, and $\CHH$ is a union of two bifurcate parts $\CHorizon$ and $\CHorizon'$, and we denote their intersection spheres  by $\Sphere_{\HH}\doteq\Horizon\cap \Horizon'$ and $\Sphere_{\CHH}\doteq\CHorizon\cap \CHorizon'$. 
In fact, Kruskal-type coordinates can be defined such as to extend the manifold beyond the event horizon to include the exterior regions $\MM_{\text{ext}}$ and $\MM_{\text{ext}}'$ of the Kerr black hole, with the extended manifold still being a proper subset of the Kruskal maximal extension of the Kerr spacetime; see for instance \cite{Neil95GeometryofKerr}. See Figure \ref{fig:1} for the Penrose diagram of the Kerr interior and Kerr exterior regions.

\begin{figure}[htbp]
  \begin{minipage}[t]{0.5\linewidth}
  \begin{center}
\begin{tikzpicture}[scale=0.9]
  \fill[green!20] (0,4)--(-2,2)--(0,0)--(2,2)--cycle;
  \fill[red!30] (-2,2)--(-4,0)--(0,0)--cycle;
  \fill[red!30] (2,2)--(4,0)--(0,0)--cycle;
  \draw[] (0.05,0.05) -- (1.95,1.95);
  \draw[] (1.95,2.05) -- (0.05,3.95);
  \draw[] (-1.95,1.95) -- (-0.05,0.05);
  \draw[] (-1.95,2.05) -- (-0.05,3.95);
  \draw[dashed] (2.05,1.95) -- (4,0);
  \draw[dashed] (-4,0) -- (-2.05,1.95);
  \draw[] (2,2) circle (0.05);
  \draw[] (-2,2) circle (0.05);
  \draw[fill=black] (0,0) circle (0.05);
  \draw[fill=black] (0,4) circle (0.05);
  \draw[] (-2.5,1.5) arc(225:315:3.53 and 2.9);
  \draw[->] (3,2)--(1.4,0.95);
  \node at (0.2,-0.3) {$\Sphere_{\HH}$};
  \node at (0.2,4.3) {$\Sphere_{\CHH}$};
  \node at (0.85,1.3) {$\Horizon$};
  \node at (-0.85,1.3) {$\Horizon'$};
  \node at (1.25,3.3) {$\CHorizon$};
  \node at (-1.25,3.3) {$\CHorizon'$};
  \node at (2.25,2.2) {$i_+$};
  \node at (-2.25,2.2) {$i_+'$};
  \node at (0,2) {$\MM$};
  \node at (3.65,0.75) {$\mathcal{I}_+$};
  \node at (-3.65,0.75) {$\mathcal{I}_+'$};
  \node at (2,0.4) {$\MM_{\text{ext}}$};
  \node at (-2,0.4) {$\MM_{\text{ext}}'$};
  \node at (3.4,2) {$\Sigmainit$};
\end{tikzpicture}
\end{center}
\caption{Kerr interior and Kerr exterior regions}
\label{fig:1}
\end{minipage}%
\begin{minipage}[t]{0.5\linewidth}
  \begin{center}
\begin{tikzpicture}[scale=0.9]
  \fill[green!20] (0,4)--(-2.5,1.5) arc(225:315:3.53 and 2.9)--cycle;
  \fill[red!30] (-2.5,1.5) arc(225:258.5:3.53 and 2.9)--(-2,2)--cycle;
  \fill[red!30] (0.71,0.71) arc(281.5:315:3.53 and 2.9)--(2,2)--cycle;
  \draw[] (1.95,2.05) -- (0.05,3.95);
  \draw[] (-1.95,2.05) -- (-0.05,3.95);
  \draw[dashed] (2.05,1.95) -- (4,0);
  \draw[dashed] (-4,0) -- (-2.05,1.95);
  \draw[fill=white] (2,2) circle (0.05);
  \draw[fill=white] (-2,2) circle (0.05);
  \draw[fill=black] (0,0) circle (0.05);
  \draw[fill=black] (0,4) circle (0.05);
  \draw[] (0.05,0.05) -- (1.95,1.95);
  \draw[] (-1.95,1.95) -- (-0.05,0.05);
  \draw[] (-2.5,1.5) arc(225:315:3.53 and 2.9);
  \draw[very thick] (-2.5,1.5) arc(225:258.5:3.53 and 2.9);
  \draw[very thick] (0.71,0.71) arc(281.5:315:3.53 and 2.9);
  \node at (0.2,-0.3) {$\Sphere_{\HH}$};
  \node at (0.2,4.3) {$\Sphere_{\CHH}$};
  \node at (1.25,3.3) {$\CHorizon$};
  \node at (-1.25,3.3) {$\CHorizon'$};
  \node at (2.25,2.2) {$i_+$};
  \node at (-2.25,2.2) {$i_+'$};
  \node at (3.65,0.75) {$\mathcal{I}_+$};
  \node at (-3.65,0.75) {$\mathcal{I}_+'$};
  \node at (0,2) {$\DD$};
  \node at (1.8,0.6) {$\Sigma_{\text{ext}}$};
  \node at (-1.8,0.7) {$\Sigma'_{\text{ext}}$};
  \node at (0,0.464) {$\Sigma_{\text{int}}$};
\end{tikzpicture}
\end{center}
\caption{Two-ended initial value problem}
\label{fig:2}
\end{minipage}
\end{figure}

In this work, we derive the precise late-time asymptotics of a (real) scalar field $\psi$, solving the (rescaled) scalar wave equation 
\begin{align}
\label{eq:wave}
q^2 \Box_{\gma} \psi={}&q^2 \nabla^{\alpha}\nabla_{\alpha}\psi= 0,
\end{align}
in the Kerr interior $\MM$, with smooth, compactly supported initial data imposed on a two-ended hypersurface $\Sigmainit$. 
This initial hypersurface $\Sigmainit$ is the union of $\Sigma'_{\text{ext}}\doteq\{(\tb', \rb',\omega): \tb'=\text{const}, \rb'\geq r_+,\omega\in \Sphere\}$, $\Sigma_{\text{ext}}\doteq\{(\tb, \rb, \omega): \tb=\text{const}, \rb\geq r_+,\omega\in \Sphere\}$\footnote{Without loss of generality, we assume that on spheres $\Sigma'_{\text{ext}}\cap \Horizon'$ and $\Sigma_{\text{ext}}\cap \Horizon$, $u=2$ and $\ub=2$ are satisfied respectively, where $u$ and $\ub$ are defined as in Section \ref{sect:E--F}. In addition, $\tau$ and $\tau'$ are hyperboloidal time functions, and $\rho=\rho'=r$. The explicit way of defining the hyperboloidal time functions is not relevant in this work; instead, the only requirement for this hyperboloidal function is such that its level hypersurfaces are spacelike hypersurfaces that are transversal to the event horizon and asymptotic to the future null infinity. We refer to \cite{andersson2019stability,MZ21PLKerr} for an explicit construction of the hyperboloidal  time function in the Kerr black hole exterior regions.}
 and a spacelike hypersurface $\Sigma_{\text{int}}$ in $\MM$ connecting\footnote{The way how  this hypersurface $\Sigma_{\text{int}}$ connects these two spheres is not relevant, as long as this hypersurface is strictly spacelike.} the two spheres $\Sigma'_{\text{ext}}\cap \Horizon'$ and $\Sigma_{\text{ext}}\cap \Horizon$, with $(\tb', \rb',  \omega)$ and $(\tb,\rb,\omega)$ being hyperboloidal coordinates in $\MM_{\text{ext}}'$ and  $\MM_{\text{ext}}$ respectively.  This initial value problem is well-posed in the future maximal Cauchy development of $\Sigmainit$, which is denoted by $\mathcal{D}^+(\Sigmainit)$. We moreover denote $\DD\doteq \mathcal{D}^+(\Sigmainit)\cap\MM$, in which the precise asymptotics for scalar field will be shown. See Figure \ref{fig:2}.
 
The main motivation of this current work arises from its intimate connection to the \textbf{Strong Cosmic Censorship} (SCC) conjecture which concerns the fundamental question of determinism in the theory of General Relativity. The method developed here is expected to be employed to consider more complicated model problems, such as the linearized gravity, in addressing the SCC conjecture about the Kerr spacetimes. We guide the readers to Section \ref{sect:liter}  for more background on SCC conjecture.

To properly state our main result, let $u=r^*-t$ and $\ub=r^*+t$ be the retarded time and forward time in $\MM$ respectively,  where $r^*$ be the tortoise coordinate, see Section \ref{sect:uub} for the explicit definition. 
 Further, define $\Lxi\doteq \ptr_{t}$ and two principal null vectors\footnote{We remark that the null vectors  $e_3$, $e_4$  are  regular at both the right event horizon $\Horizon$ and the left Cauchy horizon $\CHorizon'$ and $e_{4}$ is degenerate at both the right Cauchy horizon $\CHorizon$ and the left event horizon $\Horizon'$. A similar argument applies to the vectors $e_3'$ and $e_4'$ by interchanging the primed and unprimed notations.}
\begin{align}
\label{def:e3e4:1}
e_3\doteq {}&\frac{1}{2}\bigg(\frac{r^2+a^2}{\Delta}\ptr_t +\frac{a}{\Delta}\ptr_{\phi}- \ptr_r\bigg), & e_4\doteq {}&\frac{1}{2}\bigg(\ptr_t +\frac{a}{r^2+a^2}\ptr_{\phi} +\frac{\Delta}{\R}\ptr_r\bigg).
\end{align}
Also, define
\begin{align}
	e_3'\doteq (-\mu) e_3,\qquad\qquad e_4'\doteq (-\mu)^{-1}e_4,
\end{align} 
where $\mu\doteq\frac{\Delta}{\R}$. 
Let $\bar{\nabla}$ be the set of tangential derivatives on $\Horizon$, and let $\bar{\nabla}'$ be the set of tangential derivatives on $\Horizon'$.

\begin{figure}[htbp]
  \begin{minipage}[t]{0.5\linewidth}
  \begin{center}
\begin{tikzpicture}
  \fill[blue!40] (-0.7,2.7) arc(90:19.8:2.83 and 1)--(0,4)--(-0.96,3.04) arc(218:218.5:23.1);
  \fill[red!60] (-0.7,2.7) arc(90:19.8:2.83 and 1)--(1.95,1.95)--(0.91,0.91) arc(225:218.5:23.1);
  \fill[white] (0,0)--(-1.1,1.1) arc(255:285:4.25 and 3.6)--cycle;
  \draw[] (0.05,0.05) -- (1.95,1.95);
  \draw[] (1.95,2.05) -- (0.05,3.95);
  \draw[] (-1.95,1.95) -- (-0.05,0.05);
  \draw[] (-1.95,2.05) -- (-0.05,3.95);
  \draw[] (-0.96,3.04) arc (218:225:23.1);
  \draw[] (-0.7,2.7) arc(90:19.8:2.83 and 1);
  \draw[] (2,2) circle (0.05);
  \draw[] (-2,2) circle (0.05);
  \draw[] (-1.1,1.1) arc(255:285:4.25 and 3.6);
  \draw[fill=black] (0,0) circle (0.05);
  \draw[fill=black] (0,4) circle (0.05);
  \node at (0.2,-0.3) {$\Sphere_{\HH}$};
  \node at (0.2,4.3) {$\Sphere_{\CHH}$};
  \node at (1.25,3.3) {$\CHorizon$};
  \node at (-1.25,3.3) {$\CHorizon'$};
  \node at (2.25,2.15) {$i_+$};
  \node at (-2.25,2.15) {$i_+'$};
  \node[rotate=-52] at (-0.17,1.83) {$\ub=1$};
  \node at (0,3.1) {$2r^*>\ub^{\gamma_0}$};
  \node at (1,1.85) {$2r^*<\ub^{\gamma_0}$};
  \draw[->] (0.5,0.3)--(1.5,1.3);
  \node at (1,0.5) {$\ub$};
  \node at (0,0.8) {$\Sigma_{\text{int}}$};
\end{tikzpicture}
\end{center}
\caption{Region $\DDr$ and its subregions}
\label{fig:3}
\end{minipage}%
\begin{minipage}[t]{0.5\linewidth}
  \begin{center}
\begin{tikzpicture}
  \fill[blue!40] (0.7,2.7) arc(90:160.2:2.83 and 1)--(0,4)--(0.96,3.04) arc(-38:-38.5:23.1);
  \fill[red!60] (0.7,2.7) arc(90:160.2:2.83 and 1)--(-1.95,1.95)--(-0.91,0.91) arc(-45:-38.5:23.1);
  \fill[white] (0,0)--(-1.1,1.1) arc(255:285:4.25 and 3.6)--cycle;
  \draw[] (0.05,0.05) -- (1.95,1.95);
  \draw[] (1.95,2.05) -- (0.05,3.95);
  \draw[] (-1.95,1.95) -- (-0.05,0.05);
  \draw[] (-1.95,2.05) -- (-0.05,3.95);
  \draw[] (0.96,3.04) arc (-38:-45:23.1);
  \draw[] (0.7,2.7) arc(90:160.2:2.83 and 1);
  \draw[] (2,2) circle (0.05);
  \draw[] (-2,2) circle (0.05);
  \draw[] (-1.1,1.1) arc(255:285:4.25 and 3.6);
  \draw[fill=black] (0,0) circle (0.05);
  \draw[fill=black] (0,4) circle (0.05);
  \node at (0.2,-0.3) {$\Sphere_{\HH}$};
  \node at (0.2,4.3) {$\Sphere_{\CHH}$};
  \node at (1.25,3.3) {$\CHorizon$};
  \node at (-1.25,3.3) {$\CHorizon'$};
  \node at (2.25,2.15) {$i_+$};
  \node at (-2.25,2.15) {$i_+'$};
  \node[rotate=45] at (0.15,1.85) {$u=1$};
  \node at (0,3.1) {$2r^*>u^{\gamma_0}$};
  \node at (-1,1.85) {$2r^*<u^{\gamma_0}$};
  \draw[->] (-0.5,0.3)--(-1.5,1.3);
  \node at (-1,0.5) {$u$};
  \node at (0,0.8) {$\Sigma_{\text{int}}$};
\end{tikzpicture}
\end{center}
\caption{Region $\DDl$ and its subregions}
\label{fig:4}
\end{minipage}
\end{figure}

\begin{thm}[Global precise late-time asymptotics for the scalar field and its derivatives in Kerr interior]\label{thm:main}
Let $0<\abs{a}<M$. Let $\psi$ be a solution to the scalar wave equation \eqref{eq:wave} arising from smooth, compactly supported initial data on $\Sigmainit$. 

Let $j_1\in \Naturals$, $s_1>0$ be suitably large\footnote{By examining the proof, we find that $j_1=4$ and $s_1=9$ are sufficient. The reason that we loss much regularity is partly due to the application of Sobolev-imbedding on spheres to obtain pointwise decay estimates from energy decay estimates.}. 
Assume that for $\psi_{\ell\geq 1}=\psi -\avint_{\Sphere}\psi$,\footnote{$\psi_{\ell\geq 1}$ and $\avint_{\Sphere}\psi$ are actually the $\ell\geq 1$ spherical harmonic modes and the $\ell=0$ spherically symmetric mode of $\psi$, respectively.} where $\avint_{\Sphere}\psi$ is the spherical mean of $\psi$, there exists a $\delta>0$ and a $D_0\geq 0$ such that  all multiindex $\reg$ with $\abs{\reg}\leq s_1$ and all $0\leq j\leq j_1$,
\begin{align}
\label{assump:eq:HH:highmodes}
|\Lxi^j\bar{\nabla}^\reg\ellmode{\psi}{\ell\geq 1}|\leq D_0\ub^{-4-j-\delta}, \quad \text{on } \Horizon\cap\DD,\\
\label{assump:eq:HH:highmodes:LHS}
|\Lxi^j(\bar{\nabla}')^\reg\ellmode{\psi}{\ell\geq 1}|\leq D_0 u^{-4-j-\delta}, \quad \text{on } \Horizon'\cap\DD.
\end{align}
Assume moreover that there are real constants $c_0$ and $c_0'$ and an $\eps>0$\footnote{Without loss of generality, we assume $\eps\leq \delta$.} such that for all $0\leq j\leq j_1$,
\begin{subequations}
\label{ass:thm:horizon:psi}
\begin{align}
\label{eq:asympp:rred}
|\Lxi^{j}\psi - c_0 \Lxi^j(\ub^{-3})|\leq D_0 \ub^{-3-j-\eps}, \quad \text{on } \Horizon\cap\DD,\\
\label{eq:asympp:rred:LHS}
|\Lxi^{j}\psi- c_0' \Lxi^j(u^{-3})|\leq D_0 u^{-3-j-\eps}, \quad \text{on } \Horizon'\cap\DD.
\end{align}
\end{subequations}
Then, in region $\DDr\doteq\DD\cap\{\ub\geq 1\}$,\footnote{Figures \ref{fig:3} and \ref{fig:4} have depicted the different regions stated in this theorem.}
 \begin{enumerate}[label=\arabic*)]
 \item
 \label{item:mainthm:1}
 for $\psi$ itself, there exists a smooth function $\Psi(u,\omega)$ and $\Upsilon_0(\omega)$,  $\omega$ being the spherical coordinates on $\Sp$, such that\footnote{We use the notation $A\lesssim B$ to denote the bound of the form $A\leq CB$, where $A$ and $B$ are nonnegative and $C$ is a positive constant depending only on the black hole parameters $a$ and $M$ and the constant $D_0$ appearing in \eqref{ass:thm:horizon:psi}.}
 \begin{subequations}
  \label{eq:main:R:psi}
 \begin{align}
 \label{eq:main:R:psi:I}
    |\psi-c_0\ub^{-3}|\lesssim {}& \ub^{-3-\epsilon}\ \ \text{in}\ \DDr\cap \{2r^*\leq \ub^{\gamma_0}\},\\
 \label{eq:main:R:psi:II}
    \biggl|\psi-\Psi(u,\omega)-\half c_0\bigg(1+\frac{r_+^2+a^2}{r^2+a^2}\bigg)\ub^{-3}\biggr|\lesssim {}&\ub^{-3-\epsilon} \ \ \text{in}\ \DDr\cap \{2r^*\geq \ub^{\gamma_0}\} ,
  \end{align}
  where $\gamma_0\in (0,1)$ is an arbitrary constant and
   \begin{align}
 \label{eq:main:R:psi:II:1}
 \biggl|\Psi(u,\omega) +\half c_0\bigg(1-\frac{r_+^2+a^2}{r_-^2+a^2}\bigg){u}^{-3}\biggr|\lesssim{}& \abs{u}^{-3-\epsilon}\ \ \text{as}\ u\to-\infty,\\
 \label{eq:main:R:psi:II:2}
    \biggl|\Psi(\gamma_\omega(u))-\Upsilon_0(\omega)-\half c_0'\bigg(1+\frac{r_+^2+a^2}{r_-^2+a^2}\bigg){u}^{-3}\biggr|\lesssim {}& u^{-3-\epsilon} \ \ \text{as}\ u\to +\infty,
  \end{align}
 \end{subequations}
where $\gamma_\omega(u)$ is the integral curve of $e_3'\big|_{\CHorizon}$ starting from the sphere $\{u=1\}\cap\CHorizon$;
 
 \item for $e_4\psi$, we have
 \begin{align}
    \label{eq:main:R:e4psi}
      \bigg|e_4\psi+\frac{3}{2}c_0\bigg(1+\frac{r_+^2+a^2}{r^2+a^2}\bigg)\ub^{-4}\bigg|\lesssim \ub^{-4-\epsilon} \ \ \text{in}\ \DDr ;
    \end{align}

\item for $e_3'\psi$, we have
\begin{subequations}
    \label{eq:main:R:e3psi}
\begin{align}
\label{eq:main:R:e3psi:I}
    \bigg|e_3'\psi-\frac{3}{2}c_0\bigg(1-\frac{r_+^2+a^2}{r^2+a^2}\bigg)\ub^{-4}\bigg|\lesssim {}&(r_+-r) \ub^{-4-\epsilon} &&\ \ \text{in}\ \DDr\cap \{2r^*\leq \ub^{\gamma_0}\},\\
\label{eq:main:R:e3psi:II}
    \biggl|e_3'\psi-e_3'\vert_{\CHorizon}(\Psi(u,\omega))\biggr|\lesssim {}& -\mu &&\ \ \text{in}\ \DDr\cap \{r^*\geq 0\},
  \end{align}
  where $\Psi(u,\omega)$ and $\gamma_0$ are as in Point \ref{item:mainthm:1} and it holds that
  \begin{align}
\label{eq:main:R:e3psi:II:1}
\bigg|e_3'\vert_{\CHorizon}(\Psi(u,\omega))-\frac{3}{2}c_0\bigg(1-\frac{r_+^2+a^2}{r_-^2+a^2}\bigg){u}^{-4}\bigg|&\lesssim \abs{u}^{-4-\epsilon} \ \ \text{as}\ u\to -\infty.\\
\bigg|e_3'\vert_{\CHorizon}(\Psi(u,\omega))+\frac{3}{2}c_0'\bigg(1+\frac{r_+^2+a^2}{r_-^2+a^2}\bigg){u}^{-4}\bigg|&\lesssim u^{-4-\epsilon} \ \ \text{as}\ u\to +\infty.
\end{align}
\end{subequations}
\end{enumerate}

Meanwhile, there exist  smooth functions $\Psi'(\ub, \omega)$ and $\Upsilon_0'(\omega)$ such that the above estimates are valid in $\DDl\doteq\DD\cap \{u\geq 1\}$ if we make the replacements $u\rightarrow \ub$, $\ub\rightarrow u$, $e_3'\rightarrow e_4$, $e_4 \rightarrow e_3' =-\mu e_3$, $\Psi(u,\omega)\rightarrow \Psi'(\ub,\omega)$, $\Upsilon_0(\omega)\rightarrow \Upsilon_0'(\omega)$, $\Psi(\gamma_\omega(u))\rightarrow \Psi'(\gamma_\omega(\ub))$, $\DDr\rightarrow \DDl$, $\{2r^*\leq \ub^{\gamma_0}\}\rightarrow \{2r^*\leq u^{\gamma_0}\}$, $\CHorizon\rightarrow \CHorizon'$,  respectively.

Further, each of these estimates remains valid by taking $\Lxi$ on all terms (except that $\Upsilon_0(\omega)$ is replaced by some function $\Upsilon_j(\omega)$ if taking $j$ times $\Lxi$ derivatives and that the term $-\mu$ on the right-hand side of \eqref{eq:main:R:e3psi:II} remains undifferentiated).
\end{thm}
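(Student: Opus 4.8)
The plan is to regard this as a characteristic initial value problem in $\DD$: the hypotheses \eqref{assump:eq:HH:highmodes}--\eqref{ass:thm:horizon:psi} prescribe the precise asymptotic data on the two components $\Horizon$, $\Horizon'$ of the event horizon, and one propagates this data toward the Cauchy horizon. The guiding principle is that the late-time behavior is governed entirely by the spherical mean $\avint_{\Sphere}\psi$, whereas the remainder $\psilarge$ is lower order, and that near $\CHorizon$ the potential $\mu=\frac{\Delta}{\R}$ together with all associated connection coefficients vanishes exponentially in $r^*$, so the wave equation degenerates to an essentially free $1+1$ transport problem there. Accordingly I would (i) establish that $\psilarge$ decays one power faster than the spherical mean throughout $\DD$, and (ii) analyze the spherical mean by integrating transport equations along the regular null directions $e_4$ and $e_3'$, which on spherically symmetric functions coincide with $\partial_{\ub}$ and $\partial_u$ in the $(u,\ub)$ coordinates.

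For the first step I would project \eqref{eq:wave} onto $\psilarge$ and combine the horizon decay \eqref{assump:eq:HH:highmodes}--\eqref{assump:eq:HH:highmodes:LHS} with energy estimates propagated from $\Horizon\cup\Horizon'\cup\Sigma_{\text{int}}$ into $\DD$. Since the data decay at the integrable-improved rate $\ub^{-4-\delta}$, and since the blueshift near $\CHorizon$ is felt only by the genuinely transversal derivative (while $e_3'$ and the tangential derivatives remain regular), the tangential energies and the pointwise amplitude of $\psilarge$ can be controlled up to $\CHorizon$ with a gain of one power. A Sobolev embedding on $\Sphere$ then converts these energies into the pointwise bound $\ub^{-4-\delta}$ for $\psilarge$ and its relevant derivatives; this is where the large regularity thresholds $j_1$ and $s_1$ from the footnote are consumed. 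Consequently $\psilarge$ never enters the leading $\ub^{-3}$ asymptotics.

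For the spherical mean I would run a triangular (bootstrap) scheme. First one integrates the $u$-derivative of $e_4\psi$, using that the wave equation reads $e_3'(e_4(\text{rescaled }\psi))=\frac14\mu\,W(r)\,(\text{rescaled }\psi)+(\text{spherical-Laplacian and }a\,\ptr_\phi\text{ coupling})$; because $\mu$ is integrable along $e_3'$ (uniformly small in region I, exponentially small in region II), the leading value $-3c_0\ub^{-4}$ carried by the data on $\Horizon$ is transported into $\DDr$, acquiring the $r$-dependent weight $\frac12(1+\frac{r_+^2+a^2}{\R})$ produced by the radial rescaling $(\R)^{1/2}\psi$, which yields \eqref{eq:main:R:e4psi}. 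Integrating $e_4\psi$ in $\ub$ then recovers $\psi$: in region I, where $r$ is bounded away from $r_-$, one gets $\psi\approx c_0\ub^{-3}$ directly, whereas in region II the exponential smallness of $\mu$ forces the rescaled field to split as $G(\ub)+\Psi(u,\omega)$, with $G$ supplying the modified $\ub^{-3}$ term and $\Psi$ identified as the trace on $\CHorizon$; applying $e_3'$ and using $e_3'=(-\mu)e_3$ then gives \eqref{eq:main:R:e3psi} with limit $e_3'\vert_{\CHorizon}(\Psi(u,\omega))$. The asymptotics of $\Psi$ as $u\to-\infty$ follow by matching to the incoming $c_0\ub^{-3}$ data through a neighborhood of $i_+$, and those as $u\to+\infty$ — the angular profile $\Upsilon_0(\omega)$ together with the $c_0'u^{-3}$ correction — by tracking the data from the left component $\Horizon'$ across $\Sphere_{\CHH}$ and down $\CHorizon$. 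Since $\Lxi=\ptr_t$ is Killing and commutes with $\Box_{\gma}$, every estimate commutes with $\Lxi^j$, giving the final assertion, and the statement in $\DDl$ follows from the $u\leftrightarrow\ub$, $\Horizon\leftrightarrow\Horizon'$ symmetry.

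The principal obstacle I anticipate is the global matching in step (ii): connecting the $u\to-\infty$ asymptotics of $\Psi$ to the right-horizon data near $i_+$, and its $u\to+\infty$ asymptotics — including the angular function $\Upsilon_0(\omega)$ — to the left-horizon data near $\Sphere_{\CHH}$, requires following the sharp leading coefficient through a transition region (across $2r^*=\ub^{\gamma_0}$ and near the two corners) where neither the near-$\Horizon$ nor the near-$\CHorizon$ approximation is uniformly valid, all while retaining the $\eps$-improved error terms. The second delicate point is controlling $\psilarge$ with enough precision all the way up to $\CHorizon$, against the blueshift, so that it provably does not perturb the sharp constants in \eqref{eq:main:R:psi}--\eqref{eq:main:R:e3psi}.
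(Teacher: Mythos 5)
Your blueprint --- splitting $\psi=\psiz+\psilarge$, showing the $\ell\geq 1$ part decays faster via energy estimates plus Sobolev embedding on spheres, obtaining the $\ell=0$ asymptotics by integrating transport equations along the principal null directions with the integrability/exponential smallness of $\mu$ doing the work, and producing the $u\to\pm\infty$ asymptotics of $\Psi$ by matching with the right- and left-horizon regions --- is indeed the paper's strategy. However, two of your steps would fail as written. First, in the red-shift region $\I$ you propose to recover $\psi\approx c_0\ub^{-3}$ ``directly'' by integrating $e_4\psi$ in $\ub$. This cannot produce the sharp constant: a constant-$u$ ray inside $\I$ has no decaying anchor at its past end (at $\ub=1$ the solution is merely bounded), while anchoring at $\CHorizon$ and integrating down from $\ub=+\infty$ runs into the fact that the coefficient $1+\frac{r_+^2+a^2}{\R}$ in your own asymptotics for $e_4\psi$ changes by an $O(1)$ amount along the ray as $r$ runs down to $r_-$; for points with $r$ comparable to $\rblue$ this change occurs within a $\ub$-interval of length $O(1)$, so it contributes an error of size $\ub^{-3}$, the same order as the main term. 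Extracting $c_0\ub^{-3}$ from that route would require the sharp $u\to-\infty$ asymptotics \eqref{eq:main:R:psi:II:1} of $\Psi$ --- but those are themselves derived by matching with the region-$\I$ asymptotics, so the logic is circular. The correct direction, which the paper uses and which you already use for $e_4\psi$, is to integrate $\pu\psi$ transversally from $\Horizon$ at fixed $\ub$: Cauchy--Schwarz with $\int(-\mu)\,\di u\lesssim r_+-r$ and the red-shift energy decay \eqref{energy:decay:redshift-region:zeromode} transfer the assumed horizon value into $\I$ with an $\ub^{-3-\epsilon}$ error, giving \eqref{eq:main:R:psi:I} there.

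Second, your claim that ``the tangential energies and the pointwise amplitude of $\psilarge$ can be controlled up to $\CHorizon$ with a gain of one power'' is not achievable uniformly, and this is where the key missing device sits. Energy \emph{decay} on constant-time slices is only provable in the subregion $\IIf=\II\cap\{2r^*\leq\ub^{\gamma}\}$, and even there with a $\ub^{\gamma}$ loss (cf.\ \eqref{energy:decacy:full:blue:w}); beyond the hypersurface $\Gamma=\{2r^*=\ub^{\gamma}\}$ the weight available for converting energy inequalities into decay is $\lmu^{-1}\sim (r^*)^{-1}$, which degenerates to $\ub^{-1}$ near $\CHorizon$, and only \emph{boundedness} of logarithmically weighted energies (e.g.\ \eqref{energy:bound:blue:IniEnergyplus}, \eqref{energy:decay:blue:psiz:nearHor}) survives. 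The argument closes only because in $\II\backslash\IIf$ one has $-\mu\lesssim e^{-|\kappa_-|\ub^{\gamma}}$, so that, integrating the transport equations from $\Gamma$ rather than from $r=\rblue$, the exponentially small coefficient beats the merely bounded energies. Your ``exponentially small $\mu$'' heuristic is exactly this mechanism, but to make it work you must introduce the intermediate hypersurface $\Gamma$, prove decay-with-loss inside it and boundedness beyond it, and choose $\gamma$ small compared to $\delta-\epsilon$ so the losses do not contaminate the sharp constants; without this splitting, the uniform energy-decay claim your scheme rests on is false near $\CHorizon$.
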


A couple of remarks are ready to illustrate that Theorem \ref{thm:main} indeed contains the global precise late-time asymptotics for the scalar field, as well as its principal null derivatives, in $\DD$.

\begin{remark}[Global late-time precise asymptotics in $\DDl\cup\DDr$]
\label{rem:DDrcupDDl}
From the above theorem, we have $|(-\mu e_3)\psi+\frac{3}{2}c_0'(1+\frac{r_+^2+a^2}{r^2+a^2})u^{-4}|\lesssim u^{-4-\epsilon}$ in $\DDl$ which is an analog of the estimate \eqref{eq:main:R:e4psi} by making the replacements as stated in the theorem, hence this estimate provides the asymptotics of $(-\mu e_3)\psi$ in the region $\DDl\cap\DDr\cap\{r^*\geq 0\}$. This complements the estimates \eqref{eq:main:R:e3psi:II}--\eqref{eq:main:R:e3psi:II:1} and provides global precise asymptotics for $(-\mu e_3)\psi$ in $\DDr$. Therefore, this provides the global precise late-time asymptotics of $\psi$ and its principal null derivatives in $\DDr$, and similarly in $\DDl$.
\end{remark}

\begin{remark}[Global late-time precise asymptotics for $\psi$ and its principal null derivatives in $\DD$]
Theorem \ref{thm:main} actually provides global precise late-time asymptotics for $\psi$ and its principal null derivatives in $\DD$, since the asymptotics in $\DDl\cup\DDr$ is proved as demonstrated in Remark \ref{rem:DDrcupDDl} and the remaining region $\DD\cap\{\ub\leq 1\}\cap\{u\leq 1\}$ is a compact region with both $u$ and $\ub$ uniformly bounded from above and below.
\end{remark}

We have chosen to state the asymptotics for the null derivatives $e_4$ and $(-\mu e_3)$ in Theorem \ref{thm:main}. Nevertheless, Theorem \ref{thm:main} has already included the late-time asymptotics for the regular, nondegenerate derivative $e_3\psi$ near the right event horizon. This is explained in the following remark. 

\begin{remark}[Precise late-time asymptotics for nondegenerate transverse derivative near event horizon]
The above estimate \eqref{eq:main:R:e3psi:I} in fact includes the precise late-time asymptotics for the regular derivative $e_3\psi$ near the right event horizon $\Horizon$. Indeed, one finds $|e_3\psi+\frac{3}{2}c_0\frac{r+r_+}{r-r_-}\ub^{-4}| \lesssim \ub^{-4-\epsilon}$ in $\DDr\cap\{r_0\leq r\leq r_+\}$ for any given $r_0\in (r_-, r_+)$. Similarly, it holds that $|(-\mu)^{-1}e_4 \psi+\frac{3}{2}c_0\frac{r+r_+}{r-r_-}u^{-4}| \lesssim u^{-4-\epsilon}$ in $\DDl\cap\{r_0\leq r\leq r_+\}$.
\end{remark}

With minor modification in the proof of Theorem \ref{thm:main}, we obtain an analogous result in the charged Reissner--Nordstr\"{o}m  spacetimes. We state it here but omit the proof.

\begin{thm}[Global precise late-time asymptotics for scalar field in the interior of a Reissner--Nordstr\"{o}m spacetime]
In the interior $(\MM, \gme)$ of a charged subextreme R--N spacetime, where $M$ and $e$ are the mass and charge of the black hole and satisfy $0<\abs{e}<M$, the statements in Theorem \ref{thm:main} are valid for the scalar field if we set $a=0$ and $\Delta=r^2-2Mr+e^2$ in both the formula \eqref{def:e3e4:1} of $e_3$ and $e_4$ and the estimates \eqref{eq:main:R:psi}--\eqref{eq:main:R:e3psi}, still let $r_{\pm}$ be the roots of function $\Delta$, and redefine $\mu=\frac{\Delta}{r^2}=\frac{r^2-2Mr-e^2}{r^2}$. 
\end{thm}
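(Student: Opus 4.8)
The plan is to replay the proof of Theorem~\ref{thm:main} in the spherically symmetric setting obtained by setting $a=0$, taking advantage of the fact that the Reissner--Nordstr\"om interior is a genuine simplification of the Kerr interior. First I would record the degenerations of the geometry under $a=0$: one has $q=\sqrt{r^2+a^2\cos^2\theta}\to r$, so the rescaled equation \eqref{eq:wave} becomes $r^2\Box_{\gme}\psi=0$; the metric \eqref{eq:KerrMetricBoyerLindquistCoord} loses its frame-dragging terms and becomes static and spherically symmetric with $\Delta=r^2-2Mr+e^2$; and the principal null vectors \eqref{def:e3e4:1} lose their $\ptr_\phi$ components, so that $e_3=\frac{1}{2}(\frac{r^2}{\Delta}\ptr_t-\ptr_r)$, $e_4=\frac{1}{2}(\ptr_t+\frac{\Delta}{r^2}\ptr_r)$, and $\mu=\Delta/r^2$, in agreement with the stated redefinitions. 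The two principal null congruences are now hypersurface-orthogonal, and $u,\ub$ remain the associated double-null coordinates.

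Next I would carry over the modal decomposition. Because the angular part of $r^2\Box_{\gme}$ is exactly the Laplacian $\laplacianslash$ on the unit sphere $\Sphere$, the field splits cleanly into its spherical mean $\avint_{\Sphere}\psi$ (the $\ell=0$ mode) and the higher part $\psilarge=\psi-\avint_{\Sphere}\psi$, and these decouple without any of the frequency-space and spheroidal-harmonic machinery required when $a\neq 0$; each fixed $\ell$-mode then obeys a $1{+}1$-dimensional wave equation in $(u,\ub)$ with a potential governed by $\mu$ and the eigenvalue $\ell(\ell+1)$. The $\ell=0$ mode carries the leading $\ub^{-3}$ (respectively $u^{-3}$) behaviour, while the assumed bounds \eqref{assump:eq:HH:highmodes}--\eqref{assump:eq:HH:highmodes:LHS} force the $\ell\geq 1$ contribution to be lower order by a factor $\ub^{-\delta}$. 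The transport estimates along $e_3$ and $e_4$ used in the proof of Theorem~\ref{thm:main} then propagate the assumed horizon asymptotics \eqref{ass:thm:horizon:psi} off $\Horizon\cap\DD$ and $\Horizon'\cap\DD$ into the bulk, producing the region split $\{2r^*\leq\ub^{\gamma_0}\}$ versus $\{2r^*\geq\ub^{\gamma_0}\}$ and the limit profiles $\Psi(u,\omega)$, $\Upsilon_0(\omega)$ exactly as before. The numerical coefficients specialize under the substitutions $\frac{r_+^2+a^2}{r^2+a^2}\mapsto\frac{r_+^2}{r^2}$ and $\frac{r_+^2+a^2}{r_-^2+a^2}\mapsto\frac{r_+^2}{r_-^2}$, reproducing \eqref{eq:main:R:psi}--\eqref{eq:main:R:e3psi}; the $\Lxi=\ptr_t$ commutation and the $\DDl$ statement obtained by the primed interchange are unaffected, since $\ptr_t$ remains Killing.

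What turns this from a transcription into a proof is the verification that \emph{no} step of the argument for Theorem~\ref{thm:main} used $a\neq 0$ in an essential way. The only inputs that are not purely geometric are the horizon hypotheses \eqref{assump:eq:HH:highmodes}--\eqref{ass:thm:horizon:psi}; in the Kerr case these are supplied by the exterior late-time asymptotics, so for Reissner--Nordstr\"om one must instead invoke the $a=0$ specialization of those exterior results, which furnish the same $\ub^{-3}$ leading term with coefficient $c_0$ together with the $\ub^{-4-\delta}$ control of $\psilarge$ along the event horizons. The remaining structures --- the hyperboloidal slices $\Sigma_{\text{ext}},\Sigma_{\text{ext}}'$, the interior connecting surface $\Sigma_{\text{int}}$, and the double-null foliation --- are, if anything, simpler under spherical symmetry, so the energy and pointwise estimates transfer without change.

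The main obstacle I anticipate is therefore one of bookkeeping rather than of analysis: confirming these exterior inputs and checking that the interior transport equations for $e_4\psi$ and $e_3'\psi$ --- which are precisely what generate the nontrivial radial coefficients $\frac{r_+^2+a^2}{r^2+a^2}$ and their $r_-$-analogues --- specialize correctly under $a=0$ and $\mu=\Delta/r^2$. Since these coefficients arise from integrating the transport operators against the now spherically symmetric null structure, re-deriving them reduces to re-evaluating a handful of elementary geometric quantities (such as $e_4\ub$ and the tortoise relation $\partial_r r^*=(r^2+a^2)/\Delta$, which at $a=0$ reads $r^2/\Delta$). This is the only algebra that must be redone, and it is routine; with it in place the estimates \eqref{eq:main:R:psi}--\eqref{eq:main:R:e3psi} follow verbatim, completing the argument.
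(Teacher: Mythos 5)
Your proposal is correct and takes essentially the same route as the paper, which omits the proof precisely because it is the $a=0$ specialization of the proof of Theorem \ref{thm:main} that you describe (with $q\to r$, $\mu=\Delta/r^2$, exact decoupling of the $\ell=0$ and $\ell\geq 1$ parts since every coupling term in the mode-projected equations carries a factor $a^2$, and the coefficient substitutions $\tfrac{r_+^2+a^2}{r^2+a^2}\to\tfrac{r_+^2}{r^2}$, $\tfrac{r_+^2+a^2}{r_-^2+a^2}\to\tfrac{r_+^2}{r_-^2}$). One small clarification: the horizon asymptotics \eqref{assump:eq:HH:highmodes}--\eqref{ass:thm:horizon:psi} remain hypotheses of the Reissner--Nordstr\"{o}m statement just as in Theorem \ref{thm:main}, so no exterior $a=0$ late-time result needs to be invoked to prove this theorem itself; that input is only required to apply it unconditionally, as in the analogue of Theorem \ref{thm:PL}.
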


The assumed estimates \eqref{ass:thm:horizon:psi} for $\psi$ on $\Horizon\cap\DD$ and $\Horizon'\cap\DD$ in Theorem \ref{thm:main} are recently shown to hold true, and the constants $c_0$ and $c_0'$ can be computed from the initial data on $\Sigma_{\text{ext}}$ and $\Sigma'_{\text{ext}}$, respectively.

\begin{thm}[Precise asymptotics for the scalar field on the Kerr event horizon \cite{hintz2022sharp,angelopoulos2021late,MZ21PLKerr}]
\label{thm:PL}
For scalar field $\psi$ arising from smooth, compactly supported initial data on a two-ended hypersurface $\Sigmainit$, 
the assumed estimates \eqref{ass:thm:horizon:psi} on $\Horizon\cap\DD$ and $\Horizon'\cap\DD$ are satisfied for any $0<\delta\leq 1$ and
\begin{align}
\label{eq:c0:value}
c_0=&-\frac{2M}{\pi} \bigg((r_+^2+a^2)\int_{\Sigma_{\text{ext}}\cap \Horizon} \psi \di \omega -\int_{\Sigma_{\text{ext}}}
\langle\nabla\tau, \nabla\psi\rangle_{\gma} q^2\di\rho\di \omega\bigg),\\
\label{eq:c0':value}
c_0'=&-\frac{2M}{\pi} \bigg((r_+^2+a^2)\int_{\Sigma'_{\text{ext}}\cap \Horizon'} \psi \di \omega -\int_{\Sigma'_{\text{ext}}}\langle\nabla\tau', \nabla\psi\rangle_{\gma} q^2\di \rho'\di\omega\bigg).
\end{align}
Further, for generic such initial data, the constants $c_0$ and $c_0'$ are non-zero.
\end{thm}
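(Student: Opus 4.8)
The statement of Theorem~\ref{thm:PL} is not established here from first principles but is assembled from the precise exterior asymptotics of \cite{hintz2022sharp,angelopoulos2021late,MZ21PLKerr}; the plan is therefore to (i) quote those results and restrict them to the two event horizons to obtain \eqref{assump:eq:HH:highmodes}, \eqref{assump:eq:HH:highmodes:LHS} and \eqref{ass:thm:horizon:psi}, (ii) identify the leading coefficient with the initial-data functional in \eqref{eq:c0:value}--\eqref{eq:c0':value}, and (iii) deduce genericity from the linearity of that functional. The only genuinely new bookkeeping is the matching of normalization conventions so that the coefficient emerges exactly as stated.

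First I would invoke the mode-by-mode late-time asymptotics for $\psi$ in the Kerr exterior. Decomposing $\psi=\psiz+\psih$ with $\psiz=\avint_\Sphere\psi$, the cited works give, along a hyperboloidal foliation transversal to $\Horizon$ and asymptotic to $\mathcal{I}_+$, that $\psiz$ carries a leading tail $c_0\tb^{-3}$ with remainder $O(\tb^{-3-\eps})$, whereas the $\ell\geq 1$ part $\psih$ decays strictly faster, at the rate $\tb^{-4-\delta}$ for any $\delta\in(0,1]$, consistent with the improved mode-by-mode Price law. Restricting to $\Horizon$, where the hyperboloidal time is comparable to the advanced time $\ub$, and converting the leading coefficient to the $\ub$-variable yields the $\ub^{-3}$ form in \eqref{eq:asympp:rred}. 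Commuting with the Killing field $\Lxi=T$, which is tangent to $\Horizon$ and preserves the equation \eqref{eq:wave}, lowers each power by one and produces the $j$-derivative bounds; commuting with the horizon-tangential derivatives $\bar\nabla$ preserves the rates since those commutators are controlled on the compact horizon spheres. This gives \eqref{assump:eq:HH:highmodes} and \eqref{ass:thm:horizon:psi} on $\Horizon\cap\DD$, and the reflected statements \eqref{assump:eq:HH:highmodes:LHS} and \eqref{eq:asympp:rred:LHS} on $\Horizon'\cap\DD$ follow from the $u\leftrightarrow\ub$, primed/unprimed symmetry of the two-ended data.

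The core computation is the evaluation of $c_0$, and this is where I expect the main obstacle. For smooth compactly supported data the zeroth Newman--Penrose constant of $\psi$ at null infinity vanishes, so the $\ub^{-3}$ tail is instead governed by the Newman--Penrose constant of the time integral $\psi^{(1)}$, the solution determined by the data with $\Lxi\psi^{(1)}=\psi$. I would compute this constant and then, using the conservation law for the $\ell=0$ Newman--Penrose quantity together with the equation \eqref{eq:wave} written in divergence form, rewrite it as a flux over $\Sigma_{\text{ext}}$: the horizon boundary term produces $(r_+^2+a^2)\int_{\Sigma_{\text{ext}}\cap\Horizon}\psi\,\di\omega$, the bulk term produces $\int_{\Sigma_{\text{ext}}}\langle\nabla\tau,\nabla\psi\rangle_{\gma}q^2\,\di\rho\,\di\omega$, and the prefactor $-2M/\pi$ arises from the surface gravity and the normalization of $\Sphere$. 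The same manipulation on the left wedge gives \eqref{eq:c0':value}. The delicate point is precisely this constant-matching: one must track the hyperboloidal time normalization of \cite{andersson2019stability,MZ21PLKerr}, the chosen scaling of $u$ and $\ub$, and the $q^2$-weight in \eqref{eq:wave} so that the numerical factor is exactly $-2M/\pi$ rather than an unspecified spacetime constant.

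Finally, the genericity is soft. Formulas \eqref{eq:c0:value}--\eqref{eq:c0':value} exhibit $c_0$ and $c_0'$ as continuous linear functionals of the Cauchy data on $\Sigma_{\text{ext}}$ and $\Sigma'_{\text{ext}}$. Each is not identically zero: it suffices to exhibit one compactly supported datum with $\int_{\Sigma_{\text{ext}}}\langle\nabla\tau,\nabla\psi\rangle_{\gma}q^2\,\di\rho\,\di\omega\neq 0$ and vanishing horizon term, which is immediate. Hence $\{c_0=0\}$ and $\{c_0'=0\}$ are closed hyperplanes in the space of data, their complements are open and dense, and the intersection of the two complements is again generic; thus for generic data both $c_0$ and $c_0'$ are nonzero. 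The Reissner--Nordstr\"om case is identical after setting $a=0$, the absence of mode coupling only simplifying the higher-mode analysis.
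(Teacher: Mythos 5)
Your proposal is correct and takes essentially the same route as the paper: the paper offers no proof of Theorem \ref{thm:PL} at all — it is stated purely as an import from \cite{hintz2022sharp,angelopoulos2021late,MZ21PLKerr} — and your outline (restricting the exterior late-time asymptotics of those works to $\Horizon$ and $\Horizon'$, identifying $c_0$, $c_0'$ with the time-inverted Newman--Penrose constants computed there, and obtaining genericity from the fact that $c_0$, $c_0'$ are nontrivial linear functionals of the data) is exactly the content those citations are being asked to supply. The only caveat is the one you already flag yourself: the normalization matching that produces the precise factor $-2M/\pi$ is delegated to the cited works rather than verified, which is also what the paper does.
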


\begin{remark}[Generic inextendibility for scalar field in Kerr interior]
Theorems \ref{thm:main} and \ref{thm:PL} together imply that the regular derivative $(-\mu)^{-1} e_4 \psi$ generically blows up towards the right Cauchy horizon $\CHorizon$ and the nondegenerate energy of $\psi$ on hypersurface $\Cu_{w}\cap \{\ub\geq \ub_0\}$\footnote{$\Cu_{w}$ is a constant-$w$ spacelike hypersurface with $w=u-r+r_-$.}, which in particular bounds $\int_{\Cu_{w}\cap\{\ub\geq \ub_0\}} \abs{\mu}^{-1}\abs{e_4\psi}^2 \di \ub$, generically goes to $+\infty$ for any $\ub_0$. This proves the SCC conjecture for the model of scalar field in the class of $H^1_{\text{loc}}$, an argument that is first demonstrated by Luk--Sbierski \cite{LukSbi16InstabKerr}.\footnote{Theorems \ref{thm:main} and \ref{thm:PL} together also show that the scalar field is $C^0$-extendible to the Cauchy horizon; this argument has been shown by Hintz \cite{hintz2017boundedness} and Franzen \cite{franzen2016boundedness,franzen2020boundedness} independently.} The same argument applies to the Reissner--Nordstr\"om spacetimes. Further, these precise asymptotics can be used to examine the validity of SCC conjecture for scalar field in a given weak regularity space. 
\end{remark}

Let us briefly discuss the above two theorems in the following Figure \ref{fig:5}. The statements in Theorem \ref{thm:PL} are actually to prove the so-called ``Price's law (PL)'' (see more in Section \ref{sect:liter}) in the exterior of a Kerr black hole and correspond to the green-colored region. The statements in Theorem \ref{thm:main} contain the precise asymptotics of the scalar field in the red-colored and blue-colored regions, in which the red-shift and blue-shift effects are present respectively. In particular, the blue-colored region is where we show the $H^1_{\text{loc}}$-inextendibility.

\begin{figure}[htbp]
  \begin{center}
\begin{tikzpicture}
  \fill[green!30] (2,2)--(1.3,1.3) arc (225:314.6:1)--cycle;
\fill[green!30] (-2,2)--(-2.75,1.25) arc (225:315:1.06)--cycle;
  \fill[red!60] (-2,2)--(-1.3,1.3) arc (-45:45:1)--cycle;
  \fill[red!60] (2,2)--(1.4,2.6) arc(135:215:0.8 and 1)--cycle;
  \fill[blue!40] (-2,2) cos (0.3,3.7)--(0,4)--cycle;
  \fill[blue!40] (2,2) cos (-0.3,3.7)--(0,4)--cycle;
  \draw[] (0.05,0.05) -- (1.95,1.95);
  \draw[] (1.95,2.05) -- (0.05,3.95);
  \draw[] (-1.95,1.95) -- (-0.05,0.05);
  \draw[] (-1.95,2.05) -- (-0.05,3.95);
  \draw[dashed] (2.05,1.95) -- (4,0);
  \draw[dashed] (-4,0) -- (-2.05,1.95);
  \draw[] (2,2) circle (0.05);
  \draw[] (-2,2) circle (0.05);
  \draw[fill=black] (0,0) circle (0.05);
  \draw[fill=black] (0,4) circle (0.05);
  \node at (0.2,-0.3) {$\Sphere_{\HH}$};
  \node at (0.2,4.3) {$\Sphere_{\CHH}$};
  \node at (0.75,1.2) {$\Horizon$};
  \node at (-0.75,1.2) {$\Horizon'$};
  \node at (1.25,3.3) {$\CHorizon$};
  \node at (-1.25,3.3) {$\CHorizon'$};
  \node at (2.25,2.2) {$i_+$};
  \node at (-2.25,2.2) {$i_+'$};
  \node at (0,2) {$\MM$};
  \node at (3.65,0.75) {$\mathcal{I}_+$};
  \node at (-3.65,0.75) {$\mathcal{I}_+'$};
  \node at (2,0.3) {$\MM_{\text{ext}}$};
  \node at (-2,0.3) {$\MM_{\text{ext}}'$};
  \node at (2,1.45) {$\text{PL}$};
  \node at (-2,1.45) {$\text{PL}$};
  \node[rotate=45] at (-0.85,2.85) {$\text{SCC}$};
  \node[rotate=-45] at (0.85,2.85) {$\text{SCC}$};
\end{tikzpicture}
\end{center}
\caption{Different asymptotic regions in the exterior and interior of a Kerr black hole}
\label{fig:5}
\end{figure}

In the remainder of this section, we explain the Strong Cosmic Censorship conjecture and review the relevant works in the literature.

%%%%%%%%%%%%%%%%%%%%%%%%%%%%
\subsection{Literature on the Strong Cosmic Censorship conjecture}
\label{sect:liter} 
%%%%%%%%%%%%%%%%%%%%%%%%%%%%

It is well-known that the Kerr metric and the Reissner--Nordstr\"{o}m (R--N) metric  can be extended smoothly beyond the Cauchy horizon in non-unique ways. The SCC conjecture, proposed by Penrose \cite{penrose1974gravitational}, concerns the global uniqueness  for the initial value problem of the Einstein's equation and is currently formulated as:
\\

\fbox{%
  \parbox{0.9\textwidth}{%
\textbf{SCC Conjecture:}
\textit{For generic vacuum, asymptotically flat initial data, the maximal Cauchy development is inextendible as a suitably regular Lorenztian manifold.}}
}
\\

This conjecture suggests that the above scenario in exact R--N and Kerr spacetimes that it admits non-uniques smooth extensions beyond the Cauchy horizon is not generic and, in particular, is unstable in a suitable regularity sense against small initial data perturbations. 

For relevant physics literature towards the SCC conjecture, see for instance \cite{MaNamara78InstabInnerHori,Hiscock81InteriorCharge,chandrasekhar1982crossing,PoissonIsrael89InstabMassInflation,Ori91InnerCharged,ori1998evolution,Cardoso18PRL}. Here, we provide a brief overview of the mathematical results on this conjecture in the literature. 

The first complete result on this problem is for the Einstein--Maxwell--(real) scalar field system under spherical symmetry. Dafermos \cite{dafermos2003stability} and Dafermos--Rodnianski \cite{dafermos2005proof} showed the existence of the Cauchy horizon and proved the $C^0$-extendibility across the Cauchy horizon. Afterwards, Luk--Oh \cite{lukoh19SCCI,lukoh19SCCII} proved generic $C^2$-inextendibility for this model based on deriving  the precise asymptotics for the scalar field in the black hole exterior; this result is recently improved by Sbierski \cite{Sbierski21Holonomy}  to $C^{0,1}_{\text{loc}}$-inextendibility. See also the works \cite{Moortel18stabilityRN,Moortel21mass} by Van de Moortel for different nonlinear model problems under spherical symmetry.

The above results are however restricted to the spherical symmetry assumption, which itself does not generically hold. For the full Einstein's equation without imposing any symmetry condition, the only available result is by Dafermos--Luk \cite{DafLuk17C0stability} who proved the $C^0$-stability of the Kerr Cauchy horizon under suitable assumptions on the dynamical metric settling down to Kerr on a hypersurface inside the black hole.

Alternatively, one can consider simplified models of the Einstein's equation and impose no symmetry condition, and the scalar wave equation is a simple ideal model equation to start with. 

The first result for the scalar field along this direction is by Franzen \cite{franzen2016boundedness} who proved the $C^0$-extendiblity across the R--N Cauchy horizon. This result is further extended to Kerr by Hintz \cite{hintz2017boundedness} and Franzen \cite{franzen2020boundedness}.
On the other hand, it is expected from the SCC conjecture that the field shall be generically intextendible in a suitable regularity class. Sbierski \cite{sbierski2015characterisation}  used the Gaussian beam approximation to construct a sequence of solutions to the scalar wave equation whose $H^1$ energy near the Kerr (or R--N) Cauchy horizon tends to infinity. See also \cite{Dafermos17Blueshi}. The first generic $H^1_{\text{loc}}$-inextendibility in any neighborhood near the  Cauchy horizon is proved by Luk--Oh \cite{Luk2017LinInstabRN} in  R--N spacetimes. Afterwards, Luk--Sbierski \cite{LukSbi16InstabKerr} imposed suitable upper and lower bounds of decay for the horizon flux of the scalar field on the event horizon of  the Kerr spacetimes and showed the $H^1_{\text{loc}}$-inextendibility. In these two $H^1_{\text{loc}}$-inextendibility results, they both rely on conservation laws associated to Killing vector fields to derive the lower bound for the energy flux on a hypersurface transversally asymptotic to the Cauchy horizon. Recently, Luk--Oh--Shlapentokh-Rothman \cite{Luk22ScatteringCauchyHor} reproved the $H^1_{\text{loc}}$-inextendibility result of \cite{Luk2017LinInstabRN} in R--N spacetimes by analyzing the scattering map near $0$ time-frequency, and Sbierski \cite{Sbierski22InstabKerrCH} adapted this approach to show the instability of the Kerr Cauchy horizon for the linearized gravity by assuming suitable integrated upper and lower bounds of decay on the event horizon\footnote{These assumed bounds of decay on event horizon are satisfied in view of our work \cite{MZ21PLKerr}.}.

This generic $H^1_{\text{loc}}$-inextendibility in fact represents the so-called ``weak null singularity''  proposed by Christodoulou \cite{christodoulou2002global}, which prevents the metric to be interpreted as even a weak solution to the Einstein's equation. Luk \cite{Luk2018weaknull} constructed the weak null singularities in vacuum and showed that they locally propagate.

As discussed already, the above results on $H^1_{\text{loc}}$-inextendibility for the scalar field on Kerr and R--N are based on suitable lower bounds of decay on the event horizon. These lower bounds of decay rates correspond to the so-called ``Price's law'' \cite{Price1972SchwScalar,Price1972SchwIntegerSpin,bo99,price2004late,gleiser2008late} which predicts that the sharp decay rate, as both an upper and a lower bounds, is $\ub^{-3}$ in any finite radius region in the Kerr exterior and on the event horizon. 

This upper bound of decay for the scalar field is verified by Tataru \cite{tataru2013local} and Donninger--Schlag--Soffer \cite{donninger2011proof}. The corresponding sharp upper bound of decay rates  on  dynamic backgrounds is obtained by Metcalfe--Tataru--Tohaneanu \cite{metcalfe2012price}, and recently by Looi \cite{looi2022improved}. Recently, Hintz \cite{hintz2022sharp} and Angelopoulos--Aretakis--Gajic \cite{angelopoulos2021late,angelopoulos2021price} derived the precise late-time asymptotics and confirmed the Price's law for scalar field outside the Kerr and R--N black holes. We \cite{Ma20almost,MZ2022Dirac,MaZhang21PriceSchw,MZ21PLKerr} built on the the first author's works \cite{Ma2017Maxwell,Ma17spin2Kerr} and derived the precise late-time asymptotics, thus proving the corresponding Price's law, for the solutions to the Teukolsky equation \cite{Teu1972PRLseparability}, which governs the dynamics of the scalar field, the massless Dirac field, the electromagnetic field and the linearized gravity, in the  exterior and on the event horizon of Schwarzschild and Kerr black holes.

We end this subsection with a brief summary of the works on the nonlinear stability of the Kerr exterior. On one hand, the Price's law is intimated connected to the Kerr exterior stability problem; on the other hand, a complete proof of the Kerr exterior stability serves as an indispensable precursor for a full resolution of the SCC conjecture about Kerr. The linear stability of Schwarzschild and subextreme  R--N spacetimes has been shown in \cite{dafermos2019linear,Giorgi2019linearRNfullcharge}, and linear stability of Kerr spacetimes is proved in \cite{andersson2019stability,hafner2021linear,andersson2021nonlinear,Andersson22ModeKerrLargea}. 
For  nonlinear stability results, we refer to \cite{klainermanszeftel2020global,dafermos2021non} for Schwarzschild and \cite{HintzKds2018,fang2021nonlinear,fang2022linear}  for  slowly rotating Kerr-de Sitter. The nonlinear stability of slowly rotating Kerr is recently proved  in a series of works by Giorgi--Klainerman--Szeftel \cite{klainerman2019constructions,giorgi2020general,klainerman2021kerr,GKS22WaveNonlinearKerr} and \cite{Shen22GCMHyper}.

%%%%%%%%%%%%%%%%%%%%%%%%%%%%%%
\subsection{Sketch of the proof}
%%%%%%%%%%%%%%%%%%%%%%%%%%%%%%%

To obtain precise asymptotics in the interior region of Kerr spacetime, we decompose the scalar field $\psi$ into its lower mode part $\psi_{\ell=0}$ and higher mode part $\psi_{\ell\geq 1}$. By assuming suitable late-time behavior for each part on the event horizon, we can prove that the higher mode part decays faster than the lower mode part in the black hole interior region. Consequently, we can determine the leading order term of $\psi$ solely from the asymptotics of the lower mode part $\psi_{\ell=0}$, which can in turn be obtained through an ODE argument. 
This technique was  employed in \cite{angelopoulos2021price,angelopoulos2021late} to derive the precise late-time asymptotics of the scalar field in the exterior region of the R--N and Kerr spacetimes and by us \cite{Ma20almost,MaZhang21PriceSchw, MZ21PLKerr} to study the precise late-time behaviors of the solutions to the Teukolsky equation \cite{Teu1972PRLseparability} in Schwarzschild and Kerr spacetimes. 
The current paper  builds on the detailed information (sharp decay for modes) of the scalar field on the event horizon  established by previous works, and extends the idea beyond the event horizon, thus completing the precise asymptotics globally in the Kerr spacetimes.

To clarify the method, let us briefly outline the procedure in computing the leading order term for only $T^j\psi_{\ell=0}$ and $\pv T^j\psi_{\ell=0}$ near the right Cauchy horizon $\CHorizon$. This can be achieved by integrating either $\pu(T^j\psi_{\ell=0})$ or the wave equation for $\psi_{\ell=0}$
\begin{align}\label{intro:00000}
	\hspace{4ex}&\hspace{-4ex}
	\pu\biggl((\R) \pv\psiz-\half(\R)\Lxi\psiz\biggr)
	=-\half(\R)\pu\Lxi\psiz-\frac{1}{4}a^2\mu \PJ_{\ell=0} (\sin^2\theta \Lxi^2\psi)
\end{align}
along $\ub=\text{constant}$ starting from the event horizon $\Horizon$ and using the energy decay estimates for $\int_{\ub=\text{constant}}(-\mu)^{-1}|\pv T^j\psi_{\ell=0}|^2\di u$. The proof is proceeded in three disjoint subregions $\I$, $\II_\Gamma$ and $\II\backslash\II_\Gamma$ as depicted in Figure \ref{fig:7}.

The first subregion $\I$ is the red-shift region, i.e., ${r\in (\rblue, r_+)}$. We can prove that
\begin{align}
	\int_{\{\ub=\ub_1\}\cap \I}(-\mu)^{-1}|\pv T^j\psi_{\ell=0}|^2\di u\lesssim \ub _1^{-8-2j},
\end{align}
which follows from standard red-shift estimates and the assumption of the asymptotics on the event horizon. Similar estimates have been derived in \cite{Luk2017LinInstabRN,LukSbi16InstabKerr}. The crucial point to keep in mind is that the right-hand side of \eqref{intro:00000} has faster decay because of the presence of the $T$ derivative, and hence can be regarded as error terms with faster decay.

The second subregion $\II_\Gamma$ is contained within the blue-shift region ${r\in (r_-,\rblue)}$ and satisfies $r^*(u,\ub)\leq \frac{1}{2}\ub^\gamma$ for a very small $\gamma>0$. We can prove that
\begin{align}
	\int_{\{\ub=\ub_1\}\cap \II_\Gamma}\lmu^{-\frac{1}{2}}|\pv T^j\psi_{\ell=0}|^2\di u\lesssim \ub _1^{-7-2j},
\end{align}
which follows from the blue-shift effect and the fact that $r^*(u,\ub)\lesssim \ub^\gamma$ for $(u,\ub)\in \II_\Gamma$.

To derive the asymptotics of $\pv\psi_{\ell=0}$ in the last subregion $\II\backslash\II_\Gamma$, we will integrate a modified version of \eqref{intro:00000}, which is given by
\begin{align}\label{intro:111111}
	\pu\bigl((\R)\pv\psiz\bigl)
	=\half \mu r\Lxi\psiz-\frac{1}{4}a^2\mu\PJ_{\ell=0} (\sin^2\theta \Lxi^2\psi),
\end{align}
and use the following energy bound:
\begin{align}
	\int_{\{\ub=\ub_1\}\cap \II\backslash\II_\Gamma} \lmu^{-\frac{5}{2}}|\Lxi^j\psiz|^2(u,\ub_1)\di u \lesssim {}&1
\end{align}
 and the fact  that $-\mu\lesssim e^{-c\ub^\gamma} $ in $\II\backslash\IIf$ to find that the right-hand side of \eqref{intro:111111} decays very rapidly.  Unlike in \cite{Luk2017LinInstabRN,LukSbi16InstabKerr} where the key part is to derive a quanlitative lower bound of $\int_{u=constant}\ub^p |\pv\psi|^2\di \ub$ using the blue-shift estimates, the benefit of using the ODE argument in our work is that we only need some suitable upper bound estimates in blue-shift region and exploit the exponential decay of $-\mu$ coefficient in region $\II\backslash\II_\Gamma$ to deduce the precise late-time asymptotics of $\pv \psiz$ (and thus $\pv \psi$).
Lastly, we note that $\psi_{\ell=0}$ can be continuously extended to the Cauchy horizon by simply using the estimates \eqref{eq:main:R:e4psi} for $\pv\psi_{\ell=0}$,  and  the precise late-time asymptotics for $\psi_{\ell=0}$ (for instance, \eqref{eq:main:R:psi:II} for  $\psi_{\ell=0}$) follow by integrating \eqref{eq:main:R:e4psi} along $u=constant$ starting from the Cauchy horizon. 

%%%%%%%%%%%%%%%%%%%%%%%%%%%%
\subsection{Overview of the paper}
%%%%%%%%%%%%%%%%%%%%%%%%%%%%

In Section \ref{sect:prel}, we discuss the geometry of Kerr spacetimes and introduce some preliminaries. We then prove the energy decay estimates and derive the precise asymptotics for the scalar field in a region closer to the event horizon in Section \ref{sect:stability} and  in the other region closer to the Cauchy horizon in Sections \ref{sect:EnerDec:II} and \ref{sect:precise:II}. In the last Section \ref{sect:pf:mainthm}, we combine these asymptotic estimates in different regions to complete the proof of our main Theorem \ref{thm:main}.

%%%%%%%%%%%%%%%%%%%%%%%%%%%%%
\section{Geometry of Kerr spacetimes and preliminaries}
\label{sect:prel}
%%%%%%%%%%%%%%%%%%%%%%%%%%%%%

In this section, we discuss various geometric aspects of the subextreme Kerr spacetimes, present the explicit form of the scalar wave equation, and introduce the mode decomposition.

%%%%%%
\subsection{Scalar functions in the Kerr interior}
\label{sect:uub}
%%%%%%

Recall that
$\mu=\frac{\Delta}{r^2+a^2}$. 
Note that function $\mu<0$ in $\MM$, $\mu=0$ on the horizons, and $\ptr_r\mu=\frac{2(r-M-\mu r)}{\R}=\frac{2M(r^2 -a^2)}{(\R)^2}$. 

Define a tortoise coordinate $r^*=r^*(r)$ by 
\begin{align}
\di r^*=\mu^{-1}\di r,\qquad \quad r^*(M)=0.
\end{align}
Note that function $r^*\to \mp\infty$ as $r\to r_{\pm}$. 

Let $\kappa_{+}=\frac{r_{+} - r_{-}}{4Mr_{+}}$ and
$\kappa_{-}=\frac{r_{-} - r_{+}}{4Mr_{-}}$ be the values of the surface gravity on the event and Cauchy horizons respectively. 
The asymptotics of $r^*$ towards the event and Cauchy horizons are given by
\begin{align}
\lim_{r\to r_+} \frac{r^*}{\frac{1}{2\kappa_+} \ln (r_+-r)}=1, \qquad \lim_{r\to r_-} \frac{r^*}{\frac{1}{2\kappa_-} \ln (r-r_-)}=1.
\end{align}

Define the forward time $\ub$ and retarded time $u$ functions by
\begin{align}
\ub\doteq{}& r^* + t,  \qquad \quad u\doteq r^*- t.
\end{align}
They satisfy $u+\ub=2r^*$.

In the Kerr interior, the constant-$r$ hypersurfaces are spacelike hypersurfaces since $\gma(\di r,\di r)=\frac{\Delta}{q^2}<0$, while the constant-$u$ and -$\ub$ hypersurfaces are timelike away from the poles and null at the poles since
$\gma(\di u, \di u) = \gma(\di \ub, \di \ub)=\frac{a^2\sin^2\theta}{q^2}\geq 0$. 
Instead, we define functions 
\begin{align}
w\doteq u-r+r_-, \qquad \quad \underline{w}\doteq \ub-r+r_+, 
\end{align}
and the constant-$w$ and -$\underline{w}$ hypersurfaces are spacelike with
\begin{align}
\gma(\di w, \di w)=\gma(\di \underline{w}, \di \underline{w})= -\frac{r^2+2Mr +a^2\cos^2\theta}{q^2} <-1.
\end{align}
The constant-$u$, -$\ub$, -$w$, and $-\underline{w}$ hypersurfaces are depicted in Figure \ref{fig:6}.

\begin{figure}[htbp]
  \begin{center}
\begin{tikzpicture}[scale=1.2]
  \draw[] (0.05,0.05) -- (1.95,1.95);
  \draw[] (1.95,2.05) -- (0.05,3.95);
  \draw[] (-1.95,1.95) -- (-0.05,0.05);
  \draw[] (-1.95,2.05) -- (-0.05,3.95);
  \draw[] (-1.06,2.94) arc(218:225:23.1);
  \draw[thick] (-1.3,2.7) arc(225:232:23.1);
  \draw[] (1.06,2.94) arc(-38:-45:23.1);
  \draw[thick] (1.3,2.7) arc(-45:-52:23.1);
  \draw[] (2,2) circle (0.05);
  \draw[] (-2,2) circle (0.05);
  \draw[fill=black] (0,0) circle (0.05);
  \draw[fill=black] (0,4) circle (0.05);
  \node at (0.2,-0.3) {$\Sphere_{\HH}$};
  \node at (0.2,4.3) {$\Sphere_{\CHH}$};
  \node[rotate=-45] at (1.1,3.2) {$\ub=+\infty$};
  \node[rotate=45] at (-1.1,3.2) {$u=+\infty$};
  \node[rotate=45] at (1.2,0.9) {$u=-\infty$};
  \node[rotate=-45] at (-1.2,0.9) {$\ub=-\infty$};
  \node at (2.2,2.2) {$i_+$};
  \node at (-2.2,2.2) {$i_+'$};
  \node[rotate=-51] at (-0.5,2.45) {$\ub= c$};
  \node[rotate=-42] at (-0.9,2.05) {$\tub= c$};
  \node[rotate=51] at (0.45,2.5) {$u= c'$};
  \node[rotate=42] at (0.9,2.05) {$\tu= c'$};
  \draw[->] (0,3.2)--(0.3,3.5);
  \draw[->] (0,3.2)--(-0.3,3.5);
  \node at (0.35,3.3) {$e_4$};
  \node at (-0.35,3.3) {$e_3$};
\end{tikzpicture}
\end{center}
\caption{$\ub=c$ and $u=c'$ are timelike hypersurfaces; $\tub=c$ and $\tu=c'$ are spacelike hypersurfaces}
\label{fig:6}
\end{figure}

%%%%%%%%%%%%%%%%%%%%
\subsection{Operators and derivatives}
%%%%%%%%%%%%%%%%%%%%

Recall that $\Lxi=\ptr_{t}$. Let us further define
\begin{align}
\Leta\doteq \partial_{\phi}, \qquad \quad \Th\doteq \ptr_{\th}.
\end{align}
When viewed as vector fields, $\Lxi$ and $\Leta$ are Killing vector fields of the Kerr spacetime. 

 Denote the Carter operator by 
 \begin{align}
 \Carter\doteq \Delta_{\Sphere}+a^2\sin^2\th\Lxi^2,
 \end{align}
   where $\Delta_{\Sphere}$ is the spherical Laplacian on unit $2$-sphere. This is a second-order symmetry operator for the scalar wave equation \eqref{eq:wave} in the sense that it commutes with equation \eqref{eq:wave}.

Recall formula \eqref{def:e3e4:1} of $e_3$ and $e_4$. Note that $e_3$ and $e_4$ are future-directed null vectors satisfying 
\begin{subequations}
\begin{align}
&\mu e_3 +e_4=\Lxi +\frac{a}{\R}\Leta,\\
&\gma(e_3, e_3)=\gma(e_4, e_4)=0, \quad \gma(e_3, e_4)=-\frac{q^2}{2(\R)},
\end{align}
\end{subequations}
and they are aligned with the principal null directions on the Kerr background. 

Further, it is convenient to define
\begin{align}
e_1\doteq \Th, \qquad e_2 \doteq \frac{1}{\sin\th}\Phi + a\sin\th \Lxi.
\end{align}
The frame $(\tilde{e}_3, \tilde{e}_4, \tilde{e}_1, \tilde{e}_2)=(2e_3, \frac{2(\R)}{q^2}e_4, q^{-1} e_1, q^{-1}e_2)$ constitutes a principal null frame of the Kerr background which satisfies $\gma(\tilde{e}_3, \tilde{e}_4)=-2$, $\gma(\tilde{e}_1, \tilde{e}_1)=\gma(\tilde{e}_2, \tilde{e}_2)=1$ and other products being zero.

%%%%%%%%%%%%%%
\subsection{Eddington--Finkelstein coordinates}
\label{sect:E--F}
%%%%%%%%%%%%%%

Apart from extending the Kerr metric beyond the Kerr interior $\MM$ using the Kruskal-type coordinates, the Kerr metric $\gma$ can be extended beyond $\MM$ in a different manner by using the Eddington--Finkelstein coordinates.

Define $\rmod=\rmod(r)$ by $\frac{\di\rmod}{\di r}=\frac{a}{\Delta}$ and $\rmod(M)=0$, and define
\begin{align}
\phiin\doteq {}&\phi +\rmod  \text{\quad mod\,\,} 2\pi, & \phiout\doteq{}&\phi -\rmod  \text{\quad mod\,\,} 2\pi.
\end{align}

We can extend $\MM$ to $\MM\cup (\Horizon\backslash \Sphere_{\HH})\cup (\CHorizon'\backslash \Sphere_{\CHH})$ using the ingoing Eddington--Finkelstein (E--F) coordinates $(\ubin=\ub,\rin=r, \thin=\th, \phiin)$, and  their partial derivatives in these ingoing E--F coordinates are related to the ones in B--L coordinates by 
\begin{align}
\ptr_{\ubin}=\ptr_{t}=\Lxi, \quad \ptr_{\rin}= \ptr_r -\frac{\R}{\Delta}\ptr_{t} - \frac{a}{\Delta} \ptr_{\phi}=-e_3, \quad \ptr_{\thin}=\ptr_{\th}=\Th, \quad \ptr_{\phiin}=\ptr_{\phi}=\Phi.
\end{align}

Analogously, the Kerr interior manifold $\MM$ can be extended to $\MM\cup (\Horizon'\backslash \Sphere_{\HH})\cup (\CHorizon\backslash \Sphere_{\CHH})$ using the outgoing E--F coordinates $(\uout=u,\rout=r, \thout=\th, \phiout)$, and one finds
\begin{align}
\ptr_{\uout}=\Lxi, \quad \ptr_{\rout}=\mu^{-1}e_4, \quad \ptr_{\thout}=\Th, \quad \ptr_{\phiout}=\Phi.
\end{align}

%%%%%%%%%%%%%%
\subsection{Double-null-like coordinate systems}
\label{subsect:almostnull}
%%%%%%%%%%%%%%

We define in $\MM$ a coordinate system $(u,\ub, \th,\phiin)$, and in this coordinate system,
\begin{align}
\label{eq:deris:uub:nullframe:I}
\pu = -\mu e_3, \quad \pv=\Lxi - \mu e_3=e_4 -\frac{a}{\R}\Phi, \quad \ptr_{\theta}=\Th,\quad \ptr_{\phiin}=\Leta.
\end{align}
We also define in $\MM$ a coordinate system $(u,\ub,\th,\phiout)$, and in these coordinates,
\begin{align}
\label{eq:deris:uub:nullframe:II}
\pu =e_4-\Lxi=-\mu e_3 +\frac{a}{\R}\Leta, \quad \pv=e_4, \quad \ptr_{\theta}=\Th,\quad \ptr_{\phiout}=\Phi.
\end{align}

We call these two coordinate systems as ``double-null-like'' coordinate systems for the reason that they are  in fact double null coordinates in the interior of a Schwarzschild black hole.

We should emphasis that these two different coordinates system will \underline{\textbf{only}} be used in the regions $\I\doteq\DD\cap\{\rblue\leq r\leq r_+\}\cap\{\ub\geq 1\}$ and $\II\doteq\DD\cap\{r_-\leq r\leq \rblue\}\cap\{\ub\geq 1\}$ that are close to $\Horizon$ and $\CHorizon$ (as well as regions $\I'$ and $\II'$ close to  $\Horizon'$ and  $\CHorizon'$, as defined in the beginning of Sections \ref{sect:stability} and \ref{sect:II}) separately in Sections \ref{sect:stability} and \ref{sect:II} for some constant $\rblue\in (r_-, r_+)$,  both of which are the  so-called {\it red-shift } and {\it blue-shift} subregions of $\MM$. Hence, in the context, whenever we write partial derivatives $\pu$ and $\pv$, it is clear of what they mean depending on which region we are considering. 

We relate the coordinates $(u,\ub,\th,\phiin)$ with the ingoing E--F coordinates $(\ubin, \rin, \thin,\phiin)$  by
\begin{align}
u=2r^*(\rin) - \ubin, \quad \ub=\ubin, \quad \th=\thin.
\end{align}
Hence, the right event horizon $\Horizon\backslash \Sphere_{\HH}$, on which $\rin=r_+$ and $r^*(\rin)=-\infty$, can be identified as  the limiting hypersurface $\{(u, \ub, \th, \phiin)\vert u=-\infty\}$. In the same manner, the left event horizon $\Horizon'\backslash \Sphere_{\HH}$, the right Cauchy horizon $\CHorizon\backslash \Sphere_{\CHH}$ and the left Cauchy horizon $\CHorizon'\backslash \Sphere_{\CHH}$  are identified as the limiting hypersurface $\{(u, \ub, \th, \phiout)\vert \ub=-\infty\}$,   the limiting hypersurface $\{(u, \ub, \th, \phiout)\vert \ub=+\infty\}$ and the limiting hypersurface $\{(u, \ub, \th, \phiin)\vert u=+\infty\}$, respectively.

%%%%%%%%%%%%%%
\subsection{Scalar wave equation in Kerr}
\label{subsect:scalar:Kerr:form}
%%%%%%%%%%%%%%

The scalar wave equation \eqref{eq:wave}  in the Boyer--Lindquist coordinates takes the following form:
\begin{align}
&  -\frac{(r^2+a^2)^2-a^2 \sin^2{\theta} \Delta}{\Delta}
\partial_{tt}^2\psi
+\partial_r( \Delta\partial_r\psi)
- \frac{4aMr}{\Delta}\partial_{t\phi}^2 \psi-\frac{a^2}{\Delta}\partial_{\phi\phi}^2\psi+\Delta_{\Sphere}\psi=0,
\end{align}
where $\Delta_{\Sphere}$ is the covariant spherical Laplacian on unit $2$-sphere and $\Delta_{\Sphere}\psi= \frac{1}{\sin{\theta}} \partial_{\theta}( \sin{\theta} \partial_{\theta}\psi)
+\frac{1}{\sin^2{\theta}}\partial_{\phi\phi}^2\psi$ in the Boyer-Lindquist coordinates.

In view of \eqref{def:e3e4:1}, the scalar wave equation can be rewritten as
\begin{align}
\label{eq:scalar:form:e3e4}
  -4(r^2+a^2)e_3e_4\psi+\Delta_{\Sphere} \psi+a^2 \sin^2 \theta\Lxi^2\psi
  +2a\Lxi\Leta\psi
  +2r(e_4-\mu e_3)\psi
  +\frac{2ar}{r^2+a^2}\Leta\psi=0.
\end{align}

%%%%%%%%%%%%%%
\subsection{Mode decomposition}
\label{subsect:modeproj}
%%%%%%%%%%%%%%

Let $\dSp$ be the volume element of the standard unit $2$-sphere $\Sphere$. At any point $(u,\ub,\omega)$ in $\MM$, for a real-valued function $\varphi$ on $\MM$, we define
\begin{align}
\varphi_{\ell=0}=\varphi_{\ell=0}(u,\ub)\doteq \frac{1}{4\pi}\int_{\Sphere(u,\ub)} \varphi\ \dSp, \qquad \varphi_{\ell\geq 1}=\varphi_{\ell\geq 1}(u,\ub,w)\doteq \varphi -\varphi_{\ell=0}.
\end{align}
These are the $\ell=0$ mode (i.e., the spherically symmetric part) and the $\ell\geq 1$ modes of the scalar function $\varphi$, respectively. Further, we define two mode projection operators by
\begin{align}
\PJ_{\ell=0}(\varphi)\doteq \varphi_{\ell=0},\qquad \PJ_{\ell\geq 1}(\varphi)\doteq \varphi_{\ell\geq 1}.
\end{align}

%%%%%%%%%%%%%%%%%%%%%%
%%%%%%%%%%%%%%%%%%%%%%
\section{Precise late-time asymptotics in the region $\DD\cap\{\rblue\leq r\leq r_+\}$}
\label{sect:stability}
%%%%%%%%%%%%%%%%%%%%%%
%%%%%%%%%%%%%%%%%%%%%%

 In this section, we derive the energy decay estimates and prove the precise late-time asymptotic estimates for the scalar field in the region $\I=\DD\cap\{\rblue\leq r\leq r_+\}\cap\{\ub\geq 1\}$ close to the right event horizon $\HH_{+}$, as well as in the region $\I'=\DD\cap\{\rblue\leq r\leq r_+\}\cap\{u\geq 1\}$ close to the left event horizon $\HH_+'$, under the assumed estimates \eqref{ass:thm:horizon:psi} on the event horizon. Note that $\rblue$ is a constant in $(r_-, r_+)$ and will be fixed in Section \ref{sect:II}. 
 
 Our main statement on the precise asymptotics in this region is included in the following theorem.
 
\begin{thm}[Precise late-time asymptotics in the region $\DD\cap\{\rblue\leq r\leq r_+\}$]
\label{thm:regionI}
Under the assumptions in Theorem \ref{thm:main}, we have
\begin{itemize}
\item in region $\I=\DD\cap\{\rblue\leq r\leq r_+\}\cap\{\ub\geq 1\}$,
 \begin{subequations}
 \label{eq:main:ptw:I}
  \begin{align}
  \label{eq:main:ptw:I:psi}
    |\psi-c_0\ub^{-3}| &\lesssim \ub^{-3-\epsilon},\\
     \label{eq:main:ptw:I:pubpsi}
    \bigg|e_4\psi+\frac{3}{2}c_0\bigg(1+\frac{r_+^2+a^2}{\R}\bigg)\ub^{-4}\bigg| &\lesssim \ub^{-4-\epsilon},\\
     \label{eq:main:ptw:I:pupsi}
    \bigg|e_3\psi+\frac{3}{2}c_0\frac{r+r_+}{r-r_-}\ub^{-4}\bigg| &\lesssim \ub^{-4-\epsilon};
  \end{align}
  \end{subequations}
\item and in region $\I'=\DD\cap\{\rblue\leq r\leq r_+\}\cap\{u\geq 1\}$,
  \begin{subequations}
 \label{eq:main:ptw:I'}
  \begin{align}
  \label{eq:main:ptw:I':psi}
    |\psi-c_0' u^{-3}| &\lesssim u^{-3-\epsilon},\\
     \label{eq:main:ptw:I':pupsi}
    \bigg|e_3'\psi+\frac{3}{2}c_0'\bigg(1+\frac{r_+^2+a^2}{\R}\bigg)u^{-4}\bigg| &\lesssim u^{-4-\epsilon},\\
     \label{eq:main:ptw:I':pvpsi}
    \bigg|e_4'\psi+\frac{3}{2}c_0'\frac{r+r_+}{r-r_-}u^{-4}\bigg| &\lesssim u^{-4-\epsilon}.
  \end{align}
  \end{subequations}
  \end{itemize}
\end{thm}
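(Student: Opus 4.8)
The plan is to reduce all three estimates in region $\I$ to the behaviour of the spherically symmetric mode $\psiz$, whose leading-order asymptotics are governed by a transport equation along characteristics, and then to recover the null derivatives. By the $u\leftrightarrow\ub$, $e_3\leftrightarrow e_4$, $c_0\leftrightarrow c_0'$ symmetry of the interior geometry and of the hypotheses, the estimates \eqref{eq:main:ptw:I'} in $\I'$ follow from those in $\I$ verbatim, so I treat only $\I$. First I decompose $\psi=\psiz+\psih$. Since $\I$ lies strictly inside the red-shift region, $r\geq\rblue>r_-$, and since by \eqref{assump:eq:HH:highmodes} the higher modes decay on $\Horizon\cap\DD$ at the rate $\ub^{-4-\delta}$, one full power faster than the $\ub^{-3}$ leading rate of $\psiz$ furnished by \eqref{ass:thm:horizon:psi}, the strategy is to propagate these two rates off the horizon, show $\psih$ is a genuine remainder, and extract the leading term of $\psi$ and of its derivatives from $\psiz$ alone.

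The first analytic step is the energy estimate, in which the red-shift of $\Horizon$ is the driving mechanism. Applying a red-shift vector-field multiplier (a future-directed timelike field transverse to $\Horizon$ with a weight adapted to the positive surface gravity $\kappa_+$) to \eqref{eq:wave}, commuted with $\Lxi^j$, with the Carter operator $\Carter$, and with the angular derivatives, propagates the horizon decay of \eqref{assump:eq:HH:highmodes}--\eqref{ass:thm:horizon:psi} into $\I$; the fact that $-\mu$ is bounded below on $r\geq\rblue$ makes the bulk red-shift term uniformly coercive. I aim for the fluxes
\begin{align}
\int_{\{\ub=\ub_1\}\cap\I}(-\mu)^{-1}\abs{\pv\Lxi^j\psiz}^2\di u\lesssim\ub_1^{-8-2j},\qquad
\int_{\{\ub=\ub_1\}\cap\I}(-\mu)^{-1}\abs{\pv\Lxi^j\psih}^2\di u\lesssim\ub_1^{-8-2j-2\delta},
\end{align}
together with their $\Carter$- and angularly-commuted analogues. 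Pointwise decay of $\psih$ at the rate $\ub^{-4-\delta}$, hence its subordination to the leading term, then follows from Sobolev embedding on the spheres and integration along the characteristics $\ub=\text{const}$.

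With $\psih$ controlled, the leading term is computed from the $\ell=0$ projection of \eqref{eq:scalar:form:e3e4}. Since $\Delta_{\Sphere}$, $\Leta$ and $\Lxi\Leta$ project away from $\ell=0$, and $e_3,e_4$ (having $\theta,\phi$-independent coefficients) commute with the spherical mean, this projection becomes a transport equation for $(\R)\pv\psiz-\tfrac12(\R)\Lxi\psiz$ along $u=\text{const}$ whose inhomogeneity consists entirely of terms carrying an extra $\Lxi$ derivative or the angular coupling $a^2\PJ_{\ell=0}(\sin^2\theta\,\Lxi^2\psi)$ --- each decaying, by the $\Lxi$-commuted estimates, strictly faster than the source. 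Integrating this equation in $u$ from $\Horizon\cap\DD$, where $e_4=\Lxi+\tfrac{a}{\R}\Leta$ is tangential so that the data \eqref{ass:thm:horizon:psi} determine the boundary value and the constant $c_0$, and controlling the inhomogeneity by the fluxes above, yields $\bigl|\pv\psiz+\tfrac32 c_0(1+\tfrac{r_+^2+a^2}{\R})\ub^{-4}\bigr|\lesssim\ub^{-4-\eps}$; combined with $e_4\psi=\pv\psi+\tfrac{a}{\R}\Leta\psi$ and the faster decay of $\psih$ this is \eqref{eq:main:ptw:I:pubpsi}. Integrating instead $\pu\psiz=-\mu e_3\psiz$ in $u$ from the horizon, whose boundary value is the data $c_0\ub^{-3}$ and whose integrand is $O(\ub^{-4})$, gives \eqref{eq:main:ptw:I:psi}. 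Finally the exact frame identity $\mu e_3\psi=\Lxi\psi+\tfrac{a}{\R}\Leta\psi-e_4\psi$, with the asymptotics just found and $\mu=\tfrac{(r-r_+)(r-r_-)}{\R}$, fixes the constant in \eqref{eq:main:ptw:I:pupsi}: the $\ub^{-4}$ terms combine to $\tfrac32 c_0\tfrac{r_+^2-r^2}{\R}\ub^{-4}$, and division by $\mu$ produces exactly $-\tfrac32 c_0\tfrac{r+r_+}{r-r_-}\ub^{-4}$.

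The main obstacle is twofold. First, the sharp extraction of the leading constant requires the inhomogeneity of the $\ell=0$ transport --- above all the angular coupling $a^2\PJ_{\ell=0}(\sin^2\theta\,\Lxi^2\psi)$, which genuinely mixes $\psiz$ with every higher mode --- to integrate in $u$ from the horizon to an error strictly subordinate to the $\ub^{-4}$ boundary term, uniformly in $r\in[\rblue,r_+]$; this hinges on the two-$\Lxi$ gain exceeding the loss incurred over the semi-infinite $r^*$-range abutting $\Horizon$, which is controlled precisely because the red-shift bulk term is uniformly coercive once $r\geq\rblue$ (so that $-\mu$ stays bounded below). Second, the frame identity used for \eqref{eq:main:ptw:I:pupsi} carries a factor $\mu^{-1}$ that diverges as $r\to r_+$, so naive division of the error $O(\ub^{-4-\eps})$ would be lossy at the horizon; the remedy is that $e_3=-\partial_{\rin}$ is the \emph{regular} ingoing transverse derivative there, so $e_3\psi$ is itself controlled uniformly up to $r_+$ by the red-shift estimate commuted with $e_3$, the apparent singularity is removable, and the clean bound $\lesssim\ub^{-4-\eps}$ holds, the degeneracy only becoming visible globally as the $(r_+-r)$-weight in \eqref{eq:main:R:e3psi:I}.
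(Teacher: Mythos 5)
Your strategy coincides in outline with the paper's (mode decomposition, red-shift multiplier estimates, Sobolev embedding on spheres, integration of the $\ell=0$ transport equation off $\Horizon$, frame identity for $e_3$), but two steps fail as written, and the second one is the heart of the theorem. First, your energy fluxes are divergent. In $\I$, with the coordinates $(u,\ub,\th,\phiin)$, the derivative $\pv=\Lxi-\mu e_3$ is \emph{non-degenerate} at $\Horizon$: on the horizon $\pv\psiz=\Lxi\psiz\approx -3c_0\ub^{-4}\neq 0$ (for $c_0\neq 0$), whereas $(-\mu)^{-1}$ grows exponentially in $\abs{u}$ along $\{\ub=\ub_1\}$, since $r_+-r\sim e^{\kappa_+(u+\ub)}$. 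Hence $\int_{\{\ub=\ub_1\}\cap\I}(-\mu)^{-1}\abs{\pv\Lxi^j\psiz}^2\di u=+\infty$, and generically likewise for your $\psih$ flux; no multiplier argument can prove these decay. What the red-shift current actually controls on $\{\ub=\mathrm{const}\}$ is the flux of the \emph{degenerate} derivative $\pu=-\mu e_3$, namely $\int(-\mu)^{-1}\abs{\pu\Lxi^j\psiz}^2\di u=\int(-\mu)\abs{e_3\Lxi^j\psiz}^2\di u$, which is the paper's \eqref{energy:decay:redshift-region:zeromode}. The same confusion appears in your integration step: $\int_{-\infty}^{u}\pu\psiz\,\di u'$ converges not because the integrand is ``$O(\ub^{-4})$'' uniformly in $u$ (a $u$-uniform bound leaves a divergent semi-infinite integral) but by Cauchy--Schwarz against the correct flux, giving $\bigl(\int(-\mu)\di u'\bigr)^{1/2}\bigl(\int(-\mu)^{-1}\abs{\pu\psiz}^2\di u'\bigr)^{1/2}\lesssim(-\mu)^{1/2}\ub^{-4}$. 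These points are repairable by exchanging the roles of $\pu$ and $\pv$ throughout.

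Second---and this is the genuine gap---your argument for \eqref{eq:main:ptw:I:pupsi} does not close. Writing $\pu\psi=\pv\psi-\Lxi\psi$, the leading terms you have established cancel, leaving $-\mu e_3\psi\approx\frac{3}{2}c_0\frac{r^2-r_+^2}{\R}\ub^{-4}$, of size $(r_+-r)\ub^{-4}$, while your errors are only $O(\ub^{-4-\eps})$. In the zone $r_+-r\lesssim\ub^{-\eps}$ the error swamps this leading term, and after dividing by $-\mu\sim(r_+-r)$ the error becomes $(r_+-r)^{-1}\ub^{-4-\eps}$, unbounded toward the horizon. Your proposed remedy---that $e_3\psi$ is regular and uniformly controlled up to $r_+$ by red-shift commutation, so the singularity is ``removable''---yields at best boundedness or crude decay of $e_3\psi$, but that cannot identify the coefficient $-\frac{3}{2}c_0\frac{r+r_+}{r-r_-}$: the transverse derivative on $\Horizon$ is not part of the assumed data \eqref{ass:thm:horizon:psi}, which constrain only $\psi$ and its tangential derivatives there, so this coefficient must be extracted from the equation with errors that \emph{vanish} at the horizon. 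Concretely, one needs the error in $\pv\psi-\Lxi\psi$ to carry a full factor of $(-\mu)$. This is what the paper's Step 3 supplies and your proposal lacks: (a) a Gronwall-type estimate for $e_4\bigl(\abs{\mu}^{-1}\abs{\pu\psi}^2\bigr)$ yielding the improved flux \eqref{energy:decay:scalar:red} with the crucial smallness factor $(r_+-r_1)$; (b) the resulting pointwise bounds \eqref{pointwise:decay:scalar:red:1} of the form $\abs{\Lxi^j\psi(u,\ub)-\Lxi^j\psi(-\infty,\ub)}\lesssim(-\mu)\ub^{-4-j}$, with a full power of $(-\mu)$ rather than $(-\mu)^{1/2}$; and (c) integration of the identity $4\pu\bigl((\R)(e_4\psi-\half\Lxi\psi)\bigr)=-2(\R)\pu\Lxi\psi-\mu\Carter\psi-2\mu a\Lxi\Leta\psi$, whose right-hand side carries explicit $\mu$ factors or differences controlled by (b), so that the accumulated error is $O\bigl((-\mu)\ub^{-4-\delta}\bigr)$ and survives the division by $-\mu$. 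Without this mechanism, the third estimate in \eqref{eq:main:ptw:I}---precisely the one whose sharp form near $\Horizon$ feeds the later analysis toward the Cauchy horizon---remains unproved.
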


We shall from now on consider only the region $\I$, since the proof of the other region $\I'$ follows in an exactly same manner by making the replacements $u\rightarrow \ub$, $\ub\rightarrow u$, $c_0\rightarrow c_0'$, $e_3\rightarrow e_4'$, and $e_4\rightarrow e_3'$. 

\textbf{Recall from Section \ref{subsect:almostnull} that in this region $\I$, we use the coordinates $(u,\ub, \th,\phiin)$, and
the derivatives in this coordinate system are expressed in \eqref{eq:deris:uub:nullframe:I}.}  In particular, $e_3=-\mu^{-1}\pu$ and $e_4=\pv+\frac{a}{\R}\Leta$.

%%%%%%%%%%%%%%%%%%%%%
 \subsection{Energy decay for scalar field in region $\I$}
 \label{sect:ED:scalar:I}
%%%%%%%%%%%%%%%%%%%%%

Multiplying the scalar wave equation \eqref{eq:scalar:form:e3e4} by $\mu$ on both sides, it becomes
\begin{align}\label{eq-wave-doublenull}
  4(r^2+a^2)\pu e_4(\psi)+\mu\Delta_{\Sphere}\psi
  +2\mu a\Lxi\Leta\psi+\mu a^2 \sin^2 \theta\Lxi^2\psi
  +2\mu r(\pu\psi+ e_4(\psi))
  +\frac{2\mu ar}{r^2+a^2}\Leta\psi=0.
\end{align}

We then multiply on both sides of \eqref{eq-wave-doublenull} by $X(\psi)\doteq f(r)\pu\psi+g(r)e_4(\psi)$ and integrate over $\Sp$, and we get the following energy identity:
 \begin{align}\label{eq:div:fullfield}
   \pv\biggl(\int_{\Sp}\fv[\psi]\,\dSp\biggr)+\pu\biggl(\int_{\Sp}\fu[\psi]\,\dSp\biggr) +\int_{\Sp}\mathcal{B}[\psi]\,\dSp =0,
 \end{align}
 where
 \begin{subequations}
 \begin{align}
    \label{exp:fv}
   \fv[\psi]&=2(r^2+a^2)f(r)|\pu\psi|^2-\half \mu g(r) \big((e_1(\psi))^2 + (e_2(\psi))^2\big)
   +a\mu\sin\theta e_2(\psi) X(\psi),\\
   \label{exp:fu}
   \fu[\psi]&=2(r^2+a^2)g(r)|e_4(\psi)|^2-\half \mu f(r) \big((e_1(\psi))^2 + (e_2(\psi))^2\big)
-a\mu\sin\theta e_2(\psi) X(\psi),
 \end{align}
 and
 \begin{align}\label{energy:spacetime-term}
   \mathcal{B}[\psi] ={}&2\big(r\mu g(r)-\pu((r^2+a^2)g(r))\big)|e_4(\psi)|^2+2\big(r\mu f(r)-e_4((r^2+a^2)f(r))\big)|\pu\psi|^2\notag\\
   &+\half[\pu(\mu f(r))+e_4(\mu g(r))]\big((e_1(\psi))^2 + (e_2(\psi))^2\big)\notag\\
   &+2\mu r(g(r)+f(r))e_4(\psi) \pu\psi
 +\frac{2\mu ar}{r^2+a^2} 
  \Leta\psi (-f(r)\pu\psi+g(r)e_4(\psi)).
 \end{align}
 \end{subequations}
 These computations are most convenient to perform by multiplying $f(r)\pu\psi$ and $g(r)e_4(\psi)$ on both sides of
 \begin{subequations}
 \label{eq:scalar:equivalent:2}
 \begin{align}
  &4(r^2+a^2)\pu e_4(\psi)+\mu
  \Big(\frac{e_1 (\sin\th e_1 (\psi))}{\sin\th}
  + e_2e_2(\psi)\Big)
  +2\mu r(\pu\psi+ e_4(\psi))
  +\frac{2\mu ar}{r^2+a^2}\Leta\psi=0\\
  &4(r^2+a^2)e_4\pu (\psi)+\mu
  \Big(\frac{e_1 (\sin\th e_1 (\psi))}{\sin\th}
  + e_2e_2(\psi)\Big)
  +2\mu r(\pu\psi+ e_4(\psi))
  -\frac{2\mu ar}{r^2+a^2}\Leta\psi=0
\end{align}
\end{subequations}
respectively, both of the above subequations in \eqref{eq:scalar:equivalent:2} being alternative forms of equation \eqref{eq-wave-doublenull}.

\begin{prop}\label{lemma:decay:estimate:full:red}
Under the assumed estimates \eqref{ass:thm:horizon:psi} in Theorem \ref{thm:main}, we have for any  $1\leq \tub_1<\tub_2$, 
  \begin{align}\label{energy:decay:redshift-region:full}
    \iint\limits_{\{\tub=\tub_1\}\cap\I}\Enn[\Lxi^j\psi](-\mu)\dSp\di u\lesssim \tub_1^{-8-2j},
  \end{align}
  and
  \begin{align}\label{energy:decay:red:spacetime-esti}
    \iint\limits_{\{r=\rblue\}\cap\{\tub_1\leq \tub\leq \tub_2\}}\Enn[\Lxi^j\psi]\dSp\di \ub+
    \iint\limits_{\{\tub_1\leq \tub\leq \tub_2\}\cap\I} \Enn[\Lxi^j\psi](-\mu)\dSp\di u\di \ub
    \lesssim
   \tub_1^{-8-2j}+\int_{\tub_1}^{\tub_2}\tub^{-8-2j}\di \tub,
  \end{align}
  where 
  \begin{align}
    \Enn[\psi]&=\sum_{j=1}^4 \abs{e_j(\psi)}^2.
  \end{align}
\end{prop}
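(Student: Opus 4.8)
The plan is to prove both bounds by a single red-shift energy estimate built on the divergence identity \eqref{eq:div:fullfield}, after commuting with the Killing field $\Lxi=\partial_t$. Since $\Lxi$ has $t$-independent coefficients it commutes with $e_3$, $e_4$ and with \eqref{eq:wave}, so $\Lxi^j\psi$ solves the same equation and it suffices to carry out the estimate with $\Lxi^j\psi$ in place of $\psi$, feeding in the horizon data supplied by \eqref{ass:thm:horizon:psi} and \eqref{assump:eq:HH:highmodes}. On $\Horizon$ we have $\mu=0$, whence $e_4(\Lxi^j\psi)=\tfrac12(\Lxi^{j+1}\psi+\tfrac{a}{r_+^2+a^2}\Leta\Lxi^j\psi)=O(\ub^{-4-j})$, and the angular derivatives are no larger; thus the horizon energy density $\Enn[\Lxi^j\psi]$ is $O(\ub^{-8-2j})$. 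The geometric input that fixes the sharp rate is that each spacelike slice $\{\tub=\tub_1\}$ hits $\Horizon$ at $\ub=\tub_1$ and remains in the bounded range $\ub\in[\tub_1+\rblue-r_+,\tub_1]$, so $\ub\approx\tub_1$ throughout it.

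For the multiplier I would take $X(\psi)=f(r)\pu\psi+g(r)e_4\psi$ with $g>0$ increasing and $f=(-\mu)^{-1}$. Because $\pu=-\mu e_3$, this choice gives $X(\psi)=e_3\psi+g(r)e_4\psi$, a future-timelike vector built from the two principal null directions, i.e.\ exactly a red-shift multiplier. With $-\mu>0$ in $\MM$, the fluxes \eqref{exp:fv}--\eqref{exp:fu} are then coercive: $\fv\gtrsim(-\mu)(|e_3\psi|^2+|e_1\psi|^2+|e_2\psi|^2)$ and $\fu\gtrsim|e_4\psi|^2+|e_1\psi|^2+|e_2\psi|^2$, the Kerr cross-terms $a\mu\sin\theta\,e_2\psi\,X(\psi)$ being absorbed by Cauchy--Schwarz, and the induced energy through the uniformly spacelike slice $\{\tub=\tub_1\}$ is comparable to $\int\Enn(-\mu)\,\di u\,\dSp$. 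The decisive feature is the sign of the bulk \eqref{energy:spacetime-term}: the transversal term $-\mu(r^2+a^2)f'|\pu\psi|^2$ is of size $(-\mu)(r^2+a^2)f'\mu^2|e_3\psi|^2$, and since $f'\sim(-\mu)^{-2}\partial_r\mu$ with $\partial_r\mu|_{r_+}=2\kappa_+>0$ (positive surface gravity), it controls $(-\mu)|e_3\psi|^2$; together with the $e_4$-term made coercive by $g'>0$, and with the remaining terms $2\mu r(f+g)e_4\psi\,\pu\psi$ and $\tfrac{2\mu ar}{r^2+a^2}\Leta\psi(\cdots)$ treated as $O(-\mu)\Enn$ errors, one obtains $\int\mathcal{B}\gtrsim\int\Enn(-\mu)$ in a neighbourhood of $\Horizon$.

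Granting this coercivity, I would integrate \eqref{eq:div:fullfield} for $\Lxi^j\psi$ over the part of $\I$ to the past of $\{\tub=\tub_1\}$, whose boundary is that slice, the horizon piece $\Horizon\cap\{\ub\le\tub_1\}$, the spacelike hypersurface $\{r=\rblue\}$, and the initial slice $\{\ub=1\}$. Writing $E(\tub)$ for the slice energy, the red-shift bulk produces a differential inequality $\tfrac{\di}{\di\tub}E+cE\lesssim\ub^{-8-2j}$; Gr\"onwall makes the $O(1)$ initial contribution decay exponentially in $\tub$ and, using $\ub\approx\tub_1$ on the slice, yields $E(\tub_1)\lesssim\tub_1^{-8-2j}$, which is \eqref{energy:decay:redshift-region:full}. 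Integrating the same identity between $\{\tub=\tub_1\}$ and $\{\tub=\tub_2\}$ and discarding the (positive) slice energy at $\tub_2$ then bounds the spacetime integral $\iint\Enn(-\mu)$ together with the outgoing flux on $\{r=\rblue\}$ by $E(\tub_1)$ plus the horizon flux over $[\tub_1,\tub_2]$, giving $\tub_1^{-8-2j}+\int_{\tub_1}^{\tub_2}\tub^{-8-2j}\di\tub$ as in \eqref{energy:decay:red:spacetime-esti}.

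The main obstacle is establishing the bulk coercivity uniformly on $\I$ in the genuinely rotating case $a\neq0$: the frame $e_3,e_4$ is not hypersurface-orthogonal, so the $a$-proportional cross-terms in both the fluxes and in $\mathcal{B}$ must be shown to be strictly dominated by the coercive main terms after the choice $f=(-\mu)^{-1}$; this is precisely why one localises to $r\in[\rblue,r_+]$ and is free to push $\rblue$ towards $r_+$, where $-\mu$ and all Kerr error terms are small. The remaining care is bookkeeping: verifying that the assumed asymptotics \eqref{ass:thm:horizon:psi} and the faster decay of the higher modes \eqref{assump:eq:HH:highmodes} indeed combine to the clean $\ub^{-8-2j}$ horizon flux, which, through the bounded $\ub$-range on each $\{\tub=\tub_1\}$, is what upgrades a generic polynomial decay to the sharp exponent $-8-2j$.
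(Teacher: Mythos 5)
Your overall architecture coincides with the paper's: commute with $\Lxi$, multiply by $X(\psi)=f(r)\pu\psi+g(r)e_4(\psi)$, integrate the identity \eqref{eq:div:fullfield} over the slab between constant-$\tub$ slices of $\I$, bound the horizon flux by $\ub^{-8-2j}$ using \eqref{ass:thm:horizon:psi}--\eqref{assump:eq:HH:highmodes}, and convert ``bulk controls slice energy plus decaying source'' into decay (your Gr\"onwall step is a legitimate substitute for the paper's Lemma \ref{lemma:mean:decay} with $\alpha=0$). The gap is in the multiplier: with your choice $f=(-\mu)^{-1}$ and $g$ bounded, positive, increasing, the asserted coercivity $\int\mathcal{B}\gtrsim\int\Enn(-\mu)$ is false. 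Since $\mu f\equiv -1$, the term $\pu(\mu f)$ vanishes identically, and the angular coefficient in \eqref{energy:spacetime-term} reduces to $\tfrac12 e_4(\mu g)=\tfrac{\mu}{4}\big(\partial_r\mu\, g+\mu g'\big)$. Near $\Horizon$ one has $\partial_r\mu\to 2\kappa_+>0$ and $\mu\to 0^-$, so this coefficient is $\approx -\tfrac12\kappa_+(-\mu)g<0$: a wrong-sign bulk term of exactly the strength $(-\mu)\big((e_1(\psi))^2+(e_2(\psi))^2\big)$ that you need to control, with no smallness factor, while your coercive terms only control $|\pu\psi|^2$ and $|e_4(\psi)|^2$. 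Fed into your differential inequality this produces $E'\lesssim +\kappa_+E+\mathrm{source}$, allowing exponential growth rather than forcing decay. The paper's weights $f=(\R)^p(-\mu)^{-1}$, $g=(\R)^p$ with $p$ large are precisely the fix, not a refinement: they make $\pu(\mu f)=-\mu\, p\, r(\R)^{p-1}>0$ proportional to $p$, which dominates the negative part of $e_4(\mu g)$ and also absorbs the Kerr cross terms after Cauchy--Schwarz.

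Second, your escape route ``one is free to push $\rblue$ towards $r_+$'' is not available. The constant $\rblue$ is fixed by the blue-shift analysis of Section \ref{sect:II} and must be taken close to $r_-$, so Proposition \ref{lemma:decay:estimate:full:red} has to hold on all of $\I=\DD\cap\{\rblue\leq r\leq r_+\}\cap\{\ub\geq1\}$, most of which is far from the horizon. In particular, since $r_-r_+=a^2$ gives $r_-<\abs{a}<r_+$, the region $\I$ contains radii with $r<\abs{a}$, where $\partial_r\mu=\tfrac{2M(r^2-a^2)}{(\R)^2}<0$; there even your coefficient of $|\pu\psi|^2$, namely $(\R)(-\mu)^{-1}\partial_r\mu$, turns negative, so the surface-gravity heuristic genuinely only covers a neighbourhood of $r_+$. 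The paper's large-$p$ weight generates coercivity on the whole of $\I$ from the monotonicity of $r$ (a timelike function throughout the interior), independently of the sign of $\partial_r\mu$, which is what allows the estimate to reach an arbitrary, pre-assigned $\rblue\in(r_-,r_+)$. To repair your argument you would either have to adopt these weights, or supplement your near-horizon estimate with a separate finite-``time'' energy argument propagating the $\tub^{-8-2j}$ decay across the compact interval in $r$ down to $\rblue$; neither step is addressed in the proposal.
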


\begin{proof} 
It suffices to prove $j=0$ case since $\Lxi$ commutes with the scalar wave equation. 

By integrating \eqref{eq:div:fullfield} in the region $\{(u,\ub)| \tub_1\leq \tub(u,\ub)\leq \tub_2 \}\cap \I$ with respect to $\di u \di\ub$,  we have
  \begin{align}
    &\iint\limits_{\{\tub=\tub_2\}\cap\I}\mathbb{T}_{\tub}[\psi]\dSp\di u
   + \iint\limits_{\{r=\rblue\}\cap\{\tub_1\leq \tub\leq \tub_2\}}\mathbb{T}_{r}[\psi]\dSp\di\ub
    +\iint\limits_{\{\tub_1\leq \tub\leq \tub_2\}\cap\I}\mathcal{B}[\psi]\dSp\di u\di\ub \notag\\
    =&\iint\limits_{\{\tub=\tub_1\}\cap\I}\mathbb{T}_{\tub}[\psi]\dSp\di u
    +\iint\limits_{\Horizon\cap \{\tub_1\leq \tub\leq \tub_2\}}\mathbb{T}_{u}[\psi]\dSp\di \ub,
  \end{align}
  where $\Entub=(1-\half\mu)\fv[\psi]-\half\mu\fu[\psi]$, $\Enr=-\half\mu (\fv[\psi]+\fu[\psi])$ and $\mathbb{T}_{u}[\psi]=\fu[\psi]$. 
We choose
 \begin{align}\label{test-functions}
   f(r)=(r^2+a^2)^p(-\mu)^{-1},\ \ g(r)=(r^2+a^2)^p,
 \end{align} 
with $p$ being a positive number to be fixed.

 Given this choice of functions $f$ and $g$, we calculate the coefficients in $\mathcal{B}[\psi]$ as follows:
 \begin{subequations}
 \begin{align}
    r\mu g(r)-\pu((r^2+a^2)g(r))&=-\mu pr(r^2+a^2)^p,\\
    r\mu f(r)-e_4((r^2+a^2)f(r))&=(-\mu)^{-1}
    [(r-M)
    -\mu(p+1)r](r^2+a^2)^p,\\
    \pu(\mu f(r))+e_4(\mu g(r))&=-\mu(r^2+a^2)^{p-1}\bigg(\frac{2Mpr^2}{r^2+a^2}+r\mu-(r-M)\bigg).
\end{align}
\end{subequations}
Since $(r-M)
-\mu(p+1)r\geq (r-M)+\frac{(p+1)r_-(\rblue-r_-)}{r_+^2+a^2}(r_+-r)$ {in} $\I$,  we have $(r-M)
+(p+1)r(-\mu)\gtrsim - \mu p +1$ in $\I$. By taking $p$ large, this implies that the first two lines of \eqref{energy:spacetime-term} are bounded below by
\begin{align}
 c(\R)^p \big( p \big[(-\mu)  \big(|e_4(\psi)|^2+(e_1(\psi))^2 + (e_2(\psi))^2\big)\big]+(p+(-\mu)^{-1})|\pu\psi|^2\big),
\end{align}  
for a constant $c>0$. 
We then use the following formula 
\begin{align}
 \frac{q^2}{\R} \Leta=\sin\th e_2-a\sin^2\th \Big(\Lxi+\frac{a}{\R}\Leta\Big)
  =\sin\th e_2-a\sin^2\th e_4 +a\sin^2\th \pu
\end{align}
to expand the last line of \eqref{energy:spacetime-term} and apply 
 Cauchy--Schwarz inequality to the last line, and one finds the absolute value of the last line of \eqref{energy:spacetime-term} is bounded by
 \begin{align}
C(\R)^p \big[\veps_0((-\mu)^{-1} +1)\abs{\pu \psi}^2 - \veps_0^{-1} \mu (\abs{e_4(\psi)}^2 + \abs{e_2 (\psi)}^2)
\big].
 \end{align}
 By taking $\veps_0$ small and then $p$ sufficiently large (compared to $\veps_0^{-1}$), the last line is absorbed by the first two lines of \eqref{energy:spacetime-term}. Consequently, we achieve
 \begin{align}
  \mathcal{B} [\psi]\gtrsim {}&  (-\mu)  \Enn[\psi], \quad \text{in}\ \I.
\end{align}

Now fix the value of $p$, and consider the boundary terms.  Observe that
 \begin{align}
   \Entub ={}&(1-\half\mu)\fv[\psi]-\half\mu\fu[\psi]\notag\\
   ={}&2(r^2+a^2)f(r)|\pu\psi|^2-\half \mu g(r) \big((e_1(\psi))^2 + (e_2(\psi))^2\big)
   +a\mu\sin\theta e_2(\psi) X(\psi)
   \notag\\
   &+(-\mu)\Big[(r^2+a^2)\big(f(r)|\pu\psi|^2+g(r)|e_4(\psi)|^2\big)+ \frac{1}{4} (-\mu) (f(r)+g(r) )\big((e_1(\psi))^2 + (e_2(\psi))^2\big)\Big]\notag\\
   \sim {}&
   (-\mu)\big(|e_4(\psi)|^2+\abs{\mu^{-1}\pu\psi}^2+(e_1(\psi))^2 + (e_2(\psi))^2\big)\notag\\
   \sim {}& (-\mu)   \Enn[\psi], \qquad \text{in} \,\,\, \I,\\
   \Enr={}&-\half\mu (\fv[\psi]+\fu[\psi])\notag\\
   ={}&(-\mu)\Big[(r^2+a^2)f(r)|\pu\psi|^2+(r^2+a^2)g(r)|e_4(\psi)|^2- \frac{1}{4} \mu (f(r)+g(r) )\big((e_1(\psi))^2 + (e_2(\psi))^2\big)\Big]\notag\\
   \sim {} & \Enn[\psi],\qquad \text{on}\ \{r=\rblue\}.
 \end{align}
 For the boundary term on $\Horizon$, since $\mathbb{T}_{u}[\psi]=\fu[\psi]$, one finds from \eqref{exp:fu} that 
 \begin{align}
\bigg| \iint\limits_{\Horizon\cap \{\tub_1\leq \tub\leq \tub_2\}}\mathbb{T}_{u}[\psi]\dSp\di \ub\bigg|
\lesssim{}&\bigg| \iint\limits_{\Horizon\cap \{\tub_1\leq \tub\leq \tub_2\}}\big(\abs{e_4(\psi)}^2 +(e_1(\psi))^2 + (e_2(\psi))^2\big) \dSp\di \ub\bigg|\notag\\
\lesssim{}&\int_{\tub_1}^{\tub_2}\tub^{-8}\di \tub.
 \end{align}
 Here, we have used the fact that when restricted on $\Horizon$,
 \begin{align}
 \hspace{4ex}&\hspace{-4ex}\abs{e_4(\psi)}^2 +(e_1(\psi))^2 + (e_2(\psi))^2\notag\\
 \lesssim{}&\big(\abs{e_4(\psi_{\ell=0})}^2 +(e_1(\psi_{\ell=0}))^2 + (e_2(\psi_{\ell=0}))^2\big) 
 +\big(\abs{e_4(\psi_{\ell\geq 1})}^2 +(e_1(\psi_{\ell\geq 1}))^2 + (e_2(\psi_{\ell\geq 1}))^2\big) 
 \notag\\
 \lesssim {}&\abs{\Lxi (\psi_{\ell=0})}^2 +\big(\abs{e_4(\psi_{\ell\geq 1})}^2 +(e_1(\psi_{\ell\geq 1}))^2 + (e_2(\psi_{\ell\geq 1}))^2\big) \notag\\
 \lesssim {}& \ub^{-8}\lesssim \tub^{-8},
 \end{align} 
 by the assumption \eqref{ass:thm:horizon:psi} in Theorem \ref{thm:main} and by the formula
$\ub =\tub$ on $\Horizon$.

The above discussions together thus yield that
 \begin{align}
    &\iint\limits_{\{\tub=\tub_2\}\cap\I}\Enn[\psi](-\mu)\dSp\di u
   + \iint\limits_{\{r=\rblue\}\cap\{\tub_1\leq \tub\leq \tub_2\}}\Enn[\psi]\dSp\di\ub
    +\iint\limits_{\{\tub_1\leq \tub\leq \tub_2\}\cap\I}\Enn[\psi](-\mu)\dSp\di u\di\ub \notag\\
    \lesssim{}&\iint\limits_{\{\tub=\tub_1\}\cap\I}\Enn[\psi](-\mu)\dSp\di u
    +\int_{\tub_1}^{\tub_2}\tub^{-8}\di \tub.
  \end{align}
In the end, the energy decay estimate \eqref{energy:decay:redshift-region:full} follows by applying the following Lemma \ref{lemma:mean:decay} with $\alpha=0$ to the above inequality, and  \eqref{energy:decay:red:spacetime-esti} follows from the above inequality and the estimate \eqref{energy:decay:redshift-region:full}.
\end{proof}
\begin{remark}\label{remark:boundedness:rblue}
	As a consequence of the proof, we can also obtain the boundedness of the energy on the spacelike hypersurface ${r=\rblue}$, given by
	\begin{align}
		\iint_{r=\rblue}\Enn[\psi]\lesssim 1.
	\end{align}
 In fact, this statement follows from  combining the estimates in $\I$ and $\I'$, along with a standard estimate in the remaining region $\DD\cap\{\ub\leq 1\}\cap\{u\leq 1\}$. 
\end{remark}

\begin{lemma}\label{lemma:mean:decay}
 Let $p>1$ and $0\leq \alpha<1$, and let  $f(x): [1,+\infty)\to [0,+\infty)$ be a continuous function. Assume that there are  constants $C_0>0$, $C_1>0$, $C_2\geq 0$ and $C_3\geq 0$ such that  for any $1\leq  x_1<x_2$, 
    \begin{align}\label{function:property:3}
      f(x_2)+C_1\int_{x_1}^{x_2}x^{-\alpha}f(x)\di x\leq C_0  f(x_1)+C_2\int_{x_1}^{x_2}x^{-p}\di x +C_3 x_1^{-p}.
    \end{align}
    Then, we have, for any $x_1\geq 1$,
    \begin{align}
    \label{function:decay:1} 
      f(x_1)  \leq {}&  C x_1^{-p+\alpha},
    \end{align}
    where $C$ is a constant depending only on $f(1)$, $C_0, C_1,C_2, C_3$, $p$ and $\alpha$.
 \end{lemma}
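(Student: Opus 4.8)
The plan is to run a standard iteration/bootstrap argument exploiting the dyadic structure of the integral inequality \eqref{function:property:3}. First I would observe that setting $C_2=C_3=0$ reduces the problem to the classical statement that a Grönwall-type inequality with a weight $x^{-\alpha}$ forces polynomial decay; the presence of the inhomogeneous terms $C_2\int x^{-p}\di x$ and $C_3 x_1^{-p}$ only adds a forcing that already decays at the target rate, so they should not destroy the conclusion. The core mechanism is a pigeonhole argument on dyadic intervals: on each interval $[x_1,2x_1]$ the weight satisfies $x^{-\alpha}\sim x_1^{-\alpha}$, so the integral term $C_1\int_{x_1}^{2x_1}x^{-\alpha}f(x)\di x$ is comparable to $C_1 x_1^{-\alpha}\int_{x_1}^{2x_1}f(x)\di x$, and by the mean value property there exists a point $\tilde{x}\in[x_1,2x_1]$ where $f(\tilde{x})$ is controlled by the average, hence by $C_0 x_1^{\alpha-1} f(x_1)$ plus forcing.

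The key steps, in order, are as follows. Step one: restrict \eqref{function:property:3} to a dyadic interval $[x_1,2x_1]$ and use monotone control of the left-hand integral together with $x^{-\alpha}\geq (2x_1)^{-\alpha}$ to extract a bound of the form $\int_{x_1}^{2x_1}f(x)\di x\lesssim x_1^{\alpha}(C_0 f(x_1)+C_2\int_{x_1}^{2x_1}x^{-p}\di x + C_3 x_1^{-p})$. Step two: by the mean value theorem for integrals, locate $\tilde{x}\in[x_1,2x_1]$ with $f(\tilde{x})\lesssim x_1^{-1}\int_{x_1}^{2x_1}f\di x\lesssim x_1^{\alpha-1}(C_0 f(x_1)+ \text{forcing})$. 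Step three: apply \eqref{function:property:3} once more on $[\tilde{x}, x_2]$ for arbitrary $x_2\geq \tilde{x}$ and discard the (nonnegative) integral term to obtain $f(x_2)\leq C_0 f(\tilde{x}) + C_2\int_{\tilde{x}}^{x_2}x^{-p}\di x + C_3\tilde{x}^{-p}$, which bounds $f$ on the whole tail $[\tilde{x},\infty)$ in terms of $f(\tilde{x})$ and therefore in terms of $x_1^{\alpha-1}f(x_1)$. Step four: set up the decay rate by a dyadic induction. Define $a_n\doteq\sup_{x\geq 2^n}f(x)$, or more robustly track $f$ at the dyadic sampling points produced by Steps two and three; the above yields a recursion $a_{n+1}\lesssim 2^{n(\alpha-1)}a_n + (\text{forcing of size } 2^{-np})$. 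Since $\alpha-1<0$, iterating this recursion drives $f$ to zero, and one checks that the forcing terms, decaying like $2^{-np}$ with $p>1$, are consistent with—and in fact dominate, since $p-\alpha < p$—the homogeneous decay, so the sharp exponent is $p-\alpha$. Finally I would sum the geometric-type series and convert the dyadic bound $f(2^n)\lesssim 2^{-n(p-\alpha)}$ into the continuous statement \eqref{function:decay:1} by a last application of Step three between consecutive dyadic points.

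The main obstacle I anticipate is bookkeeping the interplay between the two distinct decay scales, namely the homogeneous decay rate $p-\alpha$ coming from the weighted integral term and the forcing rate $p$ coming from $C_2,C_3$: one must verify that $p-\alpha\leq p$ so that the forcing does not slow the decay below the claimed rate, and simultaneously that the weight $x^{-\alpha}$ with $\alpha<1$ genuinely produces a \emph{gain} of $x_1^{\alpha-1}$ per dyadic step rather than a loss. A secondary technical point is that \eqref{function:property:3} controls $f(x_2)$ only in terms of $f(x_1)$ at a \emph{fixed} left endpoint, so to propagate decay one cannot integrate from a single base point; the pigeonhole selection of the intermediate point $\tilde{x}$ in Step two, where $f$ is small, is exactly what allows the iteration to restart from a better constant at each dyadic scale, and making this selection quantitatively uniform (with constants depending only on $f(1),C_0,C_1,C_2,C_3,p,\alpha$) is the crux of the argument.
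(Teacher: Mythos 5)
Your proposal has the right skeleton (dyadic pigeonhole via the mean value theorem, then forward propagation with the inequality), and indeed this is close to the paper's mechanism for the homogeneous part. But there is a genuine quantitative gap in how you handle the $C_2$ term, and it is exactly at the point you flag as the "main obstacle." In Step three you discard the left-hand integral and bound the tail by $f(x_2)\leq C_0 f(\tilde{x})+C_2\int_{\tilde{x}}^{x_2}x^{-p}\di x+C_3\tilde{x}^{-p}$. The integrated forcing here is \emph{not} of size $\tilde{x}^{-p}$: over a tail (or even over a single dyadic interval) one has $C_2\int_{\tilde{x}}^{x_2}x^{-p}\di x\sim \tilde{x}^{1-p}$. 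The same loss already occurs in your Step one, where $C_2\int_{x_1}^{2x_1}x^{-p}\di x\sim x_1^{1-p}$. Consequently the recursion you actually obtain is $a_{n+1}\lesssim 2^{n(\alpha-1)}a_n+2^{-n(p-1)}$, not $a_{n+1}\lesssim 2^{n(\alpha-1)}a_n+2^{-np}$, and its fixed point is only $a_n\lesssim 2^{-n(p-1)}$, i.e. $f(x)\lesssim x^{-(p-1)}$. Since $p-1<p-\alpha$ for every $\alpha<1$, this falls strictly short of the claimed rate \eqref{function:decay:1}; the shortfall cannot be repaired by re-running the same steps, because every application of \eqref{function:property:3} across an interval of dyadic length re-introduces an error $\sim C_2 x^{1-p}$ as long as the left-hand integral is thrown away.

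The missing idea — and the key device in the paper's proof — is to use the left-hand integral to \emph{absorb} the $C_2$ forcing before iterating. The paper sets $g(x)\doteq\max\{f(x)-\tfrac{C_2}{C_1}x^{-p+\alpha},0\}$ and observes that on any interval where $g>0$ one has $C_1x^{-\alpha}f(x)> C_2x^{-p}$ pointwise (precisely because $C_1x^{-\alpha}\cdot\tfrac{C_2}{C_1}x^{\alpha-p}=C_2x^{-p}$), so the $C_2\int x^{-p}\di x$ terms cancel and $g$ satisfies the homogeneous-type inequality $g(x_2)+C_1\int_{x_1}^{x_2}x^{-\alpha}g\,\di x\leq C_0g(x_1)+\tilde{C}_3x_1^{-p+\alpha}$, whose forcing is now a \emph{point value} at the target rate. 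Your dyadic mean-value iteration (which, per step, gains the factor $x^{\alpha-1}$, just as in the paper) then runs correctly on $g$: the paper iterates it a fixed finite number $\palpha\geq p/(1-\alpha)$ of times, combines with the uniform bound $f(x)\leq C_0f(1)+\tfrac{C_2}{p-1}+C_3$ coming from \eqref{function:property:3} with $x_1=1$, and treats separately the easy case where $g$ vanishes somewhere in $[x,2^{\palpha}x]$ (there \eqref{function:property:3} started at such a zero of $g$ propagates the bound $f\leq(\tfrac{C_0C_2}{C_1}+C_3)x^{-p+\alpha}$ forward directly). Note that if $C_2=0$ your argument is essentially complete and agrees with the paper's; the gap is confined to, but fatal for, the case $C_2>0$.
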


 \begin{proof}
 Using \eqref{function:property:3}, we get, for $x\geq 1$, 
   \begin{align}
   \label{eq:meanvaluelemma:step1:943}
    f(x)\leq C_0 f(1) + C_2 \int_{1}^{x} (x')^{-p} \di x' + C_3\leq C_0 f(1)+\frac{C_2}{p-1} +C_3.
   \end{align}  
   
   To prove \eqref{function:decay:1}, we define
   \begin{align}
   \label{def:g(x)}
     g(x):=\max\{f(x)-\frac{C_2}{C_1}x^{-p+\alpha},0\}.
   \end{align}
   Take an arbitrary $x\geq 1$.  Let $\palpha$ be the minimal integer greater than or equal to $\frac{p}{1-\alpha}$. If $g(2^{\palpha}x)=0$, then we have $f(2^{\palpha}x)\leq \frac{C_2}{C_1}(2^{\palpha}x)^{-p+\alpha}$. 
Otherwise, it suffices to consider $g(2^{\palpha}x)>0$ case, and we divide it into the following two subcases. 

The first subcase is $g(y)>0$ for all $y\in[x,2^{\palpha}x]$. The definition of $g$ and inequality \eqref{function:property:3} yield
   \begin{align}\label{lemma:function:property:3}
    g(x_2)+C_1\int_{x_1}^{x_2}x^{-\alpha}g(x)\di x\leq C_0 g(x_1)+\tilde{C}_3 x_1^{-p+\alpha}
   \end{align}
   for any $x\leq  x_1<x_2\leq 2^{\palpha}x$, with $\tilde{C}_3=\frac{C_0C_2}{C_1} +C_3$.
   By the mean-value principle,  there exists an $\tilde{x}\in [x,2x]$, such that $x\tilde{x}^{-\alpha} g(\tilde{x})=\int_{x}^{2x}x^{-\alpha}g(x')\di x'$. Hence, by \eqref{lemma:function:property:3}, we have $2^{-\alpha}C_1 g(\tilde{x})\leq C_0x^{-1+\alpha}g(x)+\tilde{C}_3 x^{-p-1+2\alpha}$ and $g(2x)\leq C_0g(\tilde{x})+\tilde{C}_3 \tilde{x}^{-p+\alpha}$, which means
   \begin{align}
     g(2x)\leq \frac{2^\alpha C_0^2}{C_1}x^{-1+\alpha}g(x)+\bigg(\frac{2^\alpha C_0}{C_1}+1\bigg)\tilde{C}_3 x^{-p+\alpha},
   \end{align}
   where we have used $-1+2\alpha<\alpha$ for $0\leq \alpha<1$. 
   Iterating the above inequality $\palpha$ times, we obtain
   \begin{align}
     g(2^{\palpha}x)\leq C\big(x^{-{\palpha} (1-\alpha)}g(x)+ x^{-p+\alpha}\big)\leq C(x^{-p} g(x) + x^{-p+\alpha}),
   \end{align}
   with $C$ being a constant depending only on $\{C_j\}_{j=0,1,2,3}$, $p$ and $\alpha$.
   Together with the definition \eqref{def:g(x)} of $g(x)$ and the estimate \eqref{eq:meanvaluelemma:step1:943} of $f(x)$, we derive $g(2^{\palpha}x)\leq C (2^{\palpha}x)^{-p+\alpha}$,  and hence $f(2^{\palpha} x)\leq C (2^{\palpha} x)^{-p+\alpha}$, with $C$ being a constant depending only on $\{C_j\}_{j=0,1,2,3}$,  $p$ and $\alpha$. 
   
The remaining subcase is that there exists a $\bar{x}\in [x,2^{\palpha}x)$ such that $g(\bar{x})=0$ and $g(y)>0$ for $y\in(\bar{x},2^{\palpha}x]$. This yields $f(\bar{x})=\frac{C_2}{C_1}\bar{x}^{-p+\alpha}$ and $f(y)>\frac{C_2}{C_1}y^{-p+\alpha}$ for $y\in(\bar{x},2^{\palpha}x]$. Then, because of this fact, we use \eqref{function:property:3} with $x_2=2^{\palpha} x$ and $x_1=\bar{x}$ and deduce
   \begin{align}
    f(2^{\palpha}x)\leq C_0f(\bar{x})+ C_3\bar{x}^{-p}\leq \bigg(\frac{C_0C_2}{C_1}+C_3\bigg)x^{-p+\alpha}.
   \end{align}
   
   Since $x\geq 1$ is arbitrary, we have proved from the above that for any $x\in[2^{\palpha},+\infty)$, the estimate \eqref{function:decay:1} holds true. The estimate \eqref{function:decay:1} for $x\in [1, 2^{\palpha}]$ follows easily from \eqref{eq:meanvaluelemma:step1:943}. 
   \end{proof}
 
%%%%%%%%%%%%%%%%%
\subsection{Energy decay for $\psi_{\ell=0}$ and $\psi_{\ell\geqslant 1}$ in $\I$}
%%%%%%%%%%%%%%%%%

By projecting the wave equation \eqref{eq-wave-doublenull} onto $\ell\geq 1$ modes, we achieve
\begin{align}\label{eq-wave-doublenull-mode}
 \hspace{4ex}&\hspace{-4ex} 
 4(r^2+a^2)\pu e_4(\psilarge)
  +2a\mu\Lxi\Leta(\psilarge)
+\mu\Delta_{\Sphere}(\psilarge)
+2\mu r(\pu(\psilarge)+ e_4(\psilarge))
+\frac{2ar\mu}{r^2+a^2}\Leta(\psilarge)\notag\\
  ={}&-\mu a^2\PJ_{\ell\geq 1} (\sin^2\theta  \Lxi^2\psi)\notag\\
  ={}&-\mu a^2 \sin^2\theta  \Lxi^2(\psilarge)
  -\mu a^2[\PJ_{\ell\geq 1}, \sin^2\theta]  (\Lxi^2\psi).
\end{align}

Instead, if we project equation \eqref{eq-wave-doublenull} onto the $\ell=0$ mode, we obtain the same equation as \eqref{eq-wave-doublenull-mode} but replacing $\ell\geq 1$ by $\ell=0$ everywhere. Since $\Leta\psiz=\Delta_{\Sphere}(\psiz)=0$, one can further simply its equation to
\begin{align}\label{eq-wave-doublenull-mode:=0}
 \hspace{4ex}&\hspace{-4ex} 
 4(r^2+a^2)\pu \pv(\psiz)
+2\mu r(\pu(\psiz)+ \pv(\psiz))+\mu a^2 \sin^2\theta  \Lxi^2(\psiz)
\notag\\
  ={}&
  -\mu a^2[\PJ_{\ell=0}, \sin^2\theta]  (\Lxi^2\psi).
\end{align}

\begin{lemma}\label{Lemma:mode:red}
  Under the assumed estimates \eqref{ass:thm:horizon:psi} in Theorem \ref{thm:main}, we have for any $\ub_1\geq 1$ that
  \begin{align}\label{energy:decay:redshift-region:zeromode}
    \int_{\{\ub=\ub_1\}\cap\{\rblue\leq r<r_+\}}(-\mu)^{-1}|\pu\Lxi^j\psiz|^2\di u\lesssim \ub_1^{-8-2j},
  \end{align}
  and, for any $\tub_1\geq 1$ and  $k_1\geq 0$, $k_2\geq0$, 
  \begin{align}\label{energy:decay:redshift-region:highermode}
    \iint\limits_{\{\tub=\tub_1\}\cap\I}\Enn[\Lxi^j(\Carter^{k_2}\Leta^{k_1}\psi)_{\ell\geq 1}](-\mu)\dSp\di u\lesssim \tub_1^{-8-2j-2\delta},
  \end{align}
  where $\delta$ is given in \eqref{assump:eq:HH:highmodes}.
\end{lemma}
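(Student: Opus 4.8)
The plan is to obtain both estimates from the red-shift energy method already used for Proposition~\ref{lemma:decay:estimate:full:red}, now applied to the mode-projected equations \eqref{eq-wave-doublenull-mode:=0} and \eqref{eq-wave-doublenull-mode}, with the same class of multipliers $X=f\pu+ge_4$, $f=(r^2+a^2)^p(-\mu)^{-1}$, $g=(r^2+a^2)^p$. The governing principle is that the decay rate on each slice is inherited from the flux of the relevant quantity on $\Horizon$: for the $\ell=0$ part this horizon flux is of size $\ub^{-8-2j}$ (driven by the $\ub^{-3-j}$ behavior of $\Lxi^j\psiz$ in \eqref{ass:thm:horizon:psi}), whereas for the $\ell\geq1$ part it is of the improved size $\ub^{-8-2j-2\delta}$ (driven by \eqref{assump:eq:HH:highmodes}); these are exactly the two rates asserted in \eqref{energy:decay:redshift-region:zeromode} and \eqref{energy:decay:redshift-region:highermode}.

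For \eqref{energy:decay:redshift-region:zeromode}, note first that in the coordinates $(u,\ub,\theta,\phiin)$ the operator $\pu$ commutes with the spherical mean, so $\pu\Lxi^j\psiz=(\pu\Lxi^j\psi)_{\ell=0}$ and, via Cauchy--Schwarz on $\Sphere$, the $\ell=0$ density is dominated by the full-field one, $(-\mu)^{-1}|\pu\Lxi^j\psiz|^2\lesssim\int_{\Sphere}(-\mu)|e_3\Lxi^j\psi|^2\,\dSp$, where $\pu=-\mu e_3$. Since $\{\ub=\ub_1\}$ is timelike, rather than a flux positivity argument I would run a red-shift estimate adapted to constant-$\ub$ slices, contracting \eqref{eq-wave-doublenull-mode:=0} with a suitable multiple of $\pu\Lxi^j\psiz$ so that the divergence structure $\pv(|\pu\Lxi^j\psiz|^2)$ produces precisely the weighted flux $\int(-\mu)^{-1}|\pu\Lxi^j\psiz|^2\,\di u$ on $\{\ub=\ub_1\}$. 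Integrating this identity over the region of $\I$ lying to the past of $\{\ub=\ub_1\}$, the flux is controlled by the horizon data, giving $\ub_1^{-8-2j}$ by \eqref{ass:thm:horizon:psi}, together with the spacetime bulk already bounded in Proposition~\ref{lemma:decay:estimate:full:red}; the transverse $\pv\psiz$ contributions arising in the bulk are absorbed using the full-field energy estimate \eqref{energy:decay:red:spacetime-esti}.

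For \eqref{energy:decay:redshift-region:highermode}, which is posed on the \emph{same} spacelike slices $\{\tub=\tub_1\}$ as Proposition~\ref{lemma:decay:estimate:full:red}, I would repeat the identical energy estimate with $\Lxi^j\psi$ replaced by $\Phi\doteq\Lxi^j(\Carter^{k_2}\Leta^{k_1}\psi)_{\ell\geq1}$. Because $\Leta$ is Killing and $\Carter$ is a symmetry operator commuting with \eqref{eq:wave}, the function $\Carter^{k_2}\Leta^{k_1}\psi$ again solves the scalar wave equation, so its $\ell\geq1$ part solves \eqref{eq-wave-doublenull-mode} up to commutator inhomogeneities. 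The gain of $2\delta$ is purely a matter of the horizon data: on $\Horizon$ both $\Carter$ and $\Leta$ reduce to tangential operators built from $\bar{\nabla}$ (of order at most two) and $\Lxi$, so \eqref{assump:eq:HH:highmodes}---available as long as the regularity budget $\abs{\reg}\leq s_1$ accommodates the $2k_2+k_1$ angular derivatives---bounds the horizon flux of $\Phi$ by $\ub^{-8-2j-2\delta}$. Feeding this into the energy identity, whose bulk term still obeys $\mathcal{B}[\Phi]\gtrsim(-\mu)\Enn[\Phi]$, and applying Lemma~\ref{lemma:mean:decay} with $p=8+2j+2\delta$ and $\alpha=0$ then yields the stated decay.

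The step I expect to be the main obstacle is the control of the inhomogeneities produced by projection and commutation. Projecting \eqref{eq-wave-doublenull} onto $\ell\geq1$ already generates the coupling source $-\mu a^2[\PJ_{\ell\geq1},\sin^2\theta](\Lxi^2\psi)$, whose $\ell=0$ component is proportional to $\mu\,\Lxi^2\psiz\sim\mu\,\ub^{-5}$, and each additional power of $\Carter$ fails to commute with $\PJ_{\ell\geq1}$ and with the coefficient $\sin^2\theta$, producing further lower-order sources that mix the $\ell=0$ and $\ell\geq1$ parts. I would organize all of them by an induction on the total order $2k_2+k_1+j$: every source is a product of $(-\mu)$, or its $r$-derivatives, with strictly lower-order quantities that are already estimated---at worst $\psiz$-derivatives controlled by \eqref{energy:decay:redshift-region:zeromode} and Proposition~\ref{lemma:decay:estimate:full:red}. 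After a Cauchy--Schwarz splitting, the part involving $X(\Phi)$ is absorbed into the good bulk term $\mathcal{B}[\Phi]$, while the remaining part decays at least as fast as $\ub^{-8-2j-2\delta}$ (here one uses $\delta\leq1$) and enters Lemma~\ref{lemma:mean:decay} as admissible forcing. The remaining bookkeeping---tracking the regularity count $2k_2+k_1\leq s_1$ and, in \eqref{energy:decay:redshift-region:zeromode}, the interplay between the timelike slices $\{\ub=\ub_1\}$ and the spacelike ones---is routine once this hierarchy of sources is in place.
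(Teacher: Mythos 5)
Your treatment of the higher-mode estimate \eqref{energy:decay:redshift-region:highermode} is essentially the paper's own proof: the same multiplier $X=f\pu+ge_4$ applied to \eqref{eq-wave-doublenull-mode} on the slabs $\{\tub_1\leq \tub\leq \tub_2\}\cap\I$, Cauchy--Schwarz on the projection-commutator source $\mu a^2[\PJ_{\ell\geq 1},\sin^2\theta](\Lxi^2\psi)$ with the small part absorbed into the coercive bulk and the large part controlled by the $\Lxi$-commuted spacetime estimate \eqref{energy:decay:red:spacetime-esti} (using $\delta\leq 1$), the reduction to $j=0$, $k_1=k_2=0$ by commutation, and Lemma \ref{lemma:mean:decay} with $\alpha=0$ at the end. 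One cosmetic difference: since $\Carter^{k_2}\Leta^{k_1}\psi$ solves the wave equation \emph{exactly}, its $\ell\geq 1$ projection satisfies \eqref{eq-wave-doublenull-mode} verbatim with $\psi$ replaced by $\Carter^{k_2}\Leta^{k_1}\psi$, so no hierarchy of extra commutators accumulates and your induction on total order is unnecessary; only the single projection-commutator term ever appears.

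The genuine gap is in your $\ell=0$ argument for \eqref{energy:decay:redshift-region:zeromode}. You integrate the identity over the region of $\I$ to the past of $\{\ub=\ub_1\}$ and assert that the flux on $\{\ub=\ub_1\}$ is "controlled by the horizon data, giving $\ub_1^{-8-2j}$". That implication fails: this integration bounds the slice flux by the flux on the initial boundary of $\I$ plus the horizon flux integrated over $\Horizon\cap\{\ub\leq \ub_1\}$ (the coercive bulk sits on the favorable side of the inequality), and by \eqref{ass:thm:horizon:psi} the horizon contribution is $\int_1^{\ub_1}\ub^{-8-2j}\di\ub$, which is $O(1)$, as is the initial flux. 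So this argument proves boundedness of the flux, not decay; pointwise decay of the horizon flux density does not transfer to decay of the slice flux through a single past-region estimate. The decay must come from the same mechanism you correctly use for the higher modes: write the energy identity between two \emph{arbitrary} slices $\{\ub=\ub_1\}$ and $\{\ub=\ub_2\}$, note that the bulk obeys $\mathcal{B}[\psiz]\gtrsim (-\mu)\Enn[\psiz]\geq (-\mu)^{-1}|\pu\psiz|^2$ so that the spacetime term dominates the $\ub$-integral of the slice flux, which puts the inequality exactly in the form \eqref{function:property:3}, and then apply Lemma \ref{lemma:mean:decay} with $\alpha=0$, $p=8+2j$. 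This is precisely what the paper means when it says the $\psiz$ case is "the same proof but integrating in the region $\{\ub_1\leq\ub\leq\ub_2\}$". Since you already run this pigeonhole argument for $\ell\geq 1$, the repair is immediate; but as written, your $\ell=0$ step does not yield the rate claimed in \eqref{energy:decay:redshift-region:zeromode}.
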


\begin{proof}
The proof is the same as the one of Lemma \ref{lemma:decay:estimate:full:red} by taking the multiplier $X(\psilarge):=f(r)\pu(\psilarge)+g(r)e_4(\psilarge)$ with the same choice of functions $f(r)$ and $g(r)$ and integrating over $\{(u,\ub)\times \Sphere(u,\ub) |  \tub_1\leq \tub\leq \tub_2\}$. For the same reason, we only consider $j=0$ case. Moreover, since both the Carter tensor $\Carter$ and the vector field $\Leta$ commute with the wave operator, it suffices to consider $k_1=k_2=0$. To complete the proof, it suffices to control the error term arising from the last term of equation \eqref{eq-wave-doublenull-mode} by
\begin{align}
&
\bigg| \iint\limits_{\{\tub_1\leq \tub\leq \tub_2\}\cap\I}\mu a^2[\PJ_{\ell\geq 1}, \sin^2\theta]  (\Lxi^2\psi) \times X(\psilarge)\dSp\di u\di \ub\bigg|\notag\\
&\lesssim \iint\limits_{\{\tub_1\leq \tub\leq \tub_2\}\cap\I}\Big(\veps \Enn[\psilarge] +\veps^{-1}\big|a^2[\PJ_{\ell\geq 1}, \sin^2\theta]  (\Lxi^2\psi) \big|^2\Big)
(-\mu)\dSp\di u\di \ub\notag\\
&\lesssim \veps\iint\limits_{\{\tub_1\leq \tub\leq \tub_2\}\cap\I} \Enn[\psilarge] (-\mu)\dSp\di u\di \ub
+\veps^{-1}\iint\limits_{\{\tub_1\leq \tub\leq \tub_2\}\cap\I} \Enn[\Lxi\psi](-\mu)\dSp\di u\di \ub\notag\\
&\lesssim \veps\iint\limits_{\{\tub_1\leq \tub\leq \tub_2\}\cap\I} \Enn[\psilarge] (-\mu)\dSp\di u\di \ub
+\veps^{-1}\bigg(\tub_1^{-10}+\int_{\tub_1}^{\tub_2}\tub^{-10}\di \tub\bigg),
\end{align} 
where we have used the estimate \eqref{energy:decay:red:spacetime-esti} in the last step. By taking $\veps$ small enough such that the first term in the last line is absorbed, the desired estimate \eqref{energy:decay:redshift-region:highermode} is proved.

For the mode $\psiz$, the proof is the same but integrating in the region $\{(u,\ub)\times \Sphere(u,\ub) | \ub_1\leq \ub\leq \ub_2\}$, and we omit it.
\end{proof}

\subsection{Precise late-time asymptotics in region $\I$}

We now prove the estimates \eqref{eq:main:ptw:I} of Theorem \ref{thm:main} under the assumed estimates \eqref{ass:thm:horizon:psi}.

Prior to the proof, a lemma on Sobolev-imbedding estimates is necessary. 

\begin{lemma}
  We have\footnote{For a functional $G: X\to Y$, where $X, Y$ are function spaces, we use the notation $G(P^{\leq k}u) $  to denote the sum $\sum_{j=0}^k G(P^ju)$, where $u\in X$ and $P$ is an operator from $X\to X$.}
  \begin{subequations}\label{SI:estimate:sph}
  \begin{align}
    |\psih|_{L^\infty(\Sphere_{u,\ub})}^2 \lesssim &\int_{\Sphere_{u,\ub}}(\Endeg[(\Carter^{\leq 1}\psi)_{\ell\geq 1}]+\Endeg[T\psi])\dSp, \label{SI:estimate:sph:1}\\
    |\mu e_3\psih|_{L^\infty(\Sphere_{u,\ub})}^2 +|e_4\psih|_{L^\infty(\Sphere_{u,\ub})}^2 \lesssim & \int_{\Sphere_{u,\ub}}(\Endeg[(\Carter^{\leq 1}\psi)_{\ell\geq 1}]+\Endeg[T^2\psi])\dSp,\label{SI:estimate:sph:2}\\
    |e_1\psih|_{L^\infty(\Sphere_{u,\ub})}^2 +|e_2\psih|_{L^\infty(\Sphere_{u,\ub})}^2 \lesssim & \int_{\Sphere_{u,\ub}}(\Endeg[\Lxi^{\leq 1}(\Carter^{\leq 1}\psi)_{\ell\geq 1}]+\Endeg[T^{\leq 1}\Lxi\psi])\dSp,\label{SI:estimate:sph:3}
  \end{align}
\end{subequations}
  where $\Endeg[\psi]=|e_1(\psi)|^2+|e_2(\psi)|^2+|\mu e_3\psi|^2+|e_4\psi|^2$.
  \end{lemma}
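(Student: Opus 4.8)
All three estimates rest on one engine, which I would set up once and apply repeatedly. The engine is the $L^{2}$ Sobolev embedding $H^{2}(\Sphere)\hookrightarrow L^{\infty}(\Sphere)$ on the round unit sphere, in the form $\abs{h}_{L^{\infty}(\Sphere)}^{2}\lesssim\int_{\Sphere}\big(\abs{h}^{2}+\abs{\Delta_{\Sphere}h}^{2}\big)\dSp$, combined with the Poincar\'e inequality $\int_{\Sphere}\abs{h}^{2}\dSp\leq\tfrac12\int_{\Sphere}\abs{\nabla_{\Sphere}h}^{2}\dSp\leq\tfrac14\int_{\Sphere}\abs{\Delta_{\Sphere}h}^{2}\dSp$. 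The latter is valid precisely because every function to which we apply it ($\psih$, $(\Carter\psi)_{\ell\geq1}$, $\mu e_{3}\psih$, $e_{4}\psih$) is supported on $\ell\geq1$ modes; this is where the $\ell\geq1$ projection enters.

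The decisive structural input is that $\int_{\Sphere}\Endeg[f]\dSp$ controls not only the angular gradient of $f$ but also its transverse derivative $\Lxi f$, with coefficients depending only on $a,M$ and uniformly bounded on $\Sphere$ (including the poles). Indeed, inverting the two relations $\mu e_{3}+e_{4}=\Lxi+\frac{a}{\R}\Leta$ and $e_{2}=\frac{1}{\sin\theta}\Leta+a\sin\theta\,\Lxi$ yields the pointwise identity
\begin{align*}
\Lxi f=\frac{\R}{q^{2}}(\mu e_{3}+e_{4})f-\frac{a\sin\theta}{q^{2}}\,e_{2}f,
\end{align*}
and likewise $\frac{1}{\sin\theta}\Leta f=\frac{\R}{q^{2}}\big(e_{2}f-a\sin\theta(\mu e_{3}+e_{4})f\big)$; since $q^{2}=r^{2}+a^{2}\cos^{2}\theta$ is bounded below, these give $\abs{\Lxi f}^{2}+\abs{\nabla_{\Sphere}f}^{2}\lesssim\Endeg[f]$ pointwise, so that a full $\Lxi$-derivative is absorbed by $\Endeg[\,\cdot\,]$ with no smallness of $a$ needed. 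The second structural input trades the spherical Laplacian for the Carter operator through $\Delta_{\Sphere}=\Carter-a^{2}\sin^{2}\theta\,\Lxi^{2}$; applied to $\psih$ and using $\PJ_{\ell=0}\Delta_{\Sphere}=0$ to cancel the $\ell=0$ parts, this reads $\Delta_{\Sphere}\psih=(\Carter\psi)_{\ell\geq1}-\PJ_{\ell\geq1}(a^{2}\sin^{2}\theta\,\Lxi^{2}\psi)$, whence, $\PJ_{\ell\geq1}$ being bounded on $L^{2}$, $\norm{\Delta_{\Sphere}\psih}_{L^{2}(\Sphere)}^{2}\lesssim\norm{(\Carter\psi)_{\ell\geq1}}_{L^{2}(\Sphere)}^{2}+\norm{\Lxi^{2}\psi}_{L^{2}(\Sphere)}^{2}$.

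Estimate \eqref{SI:estimate:sph:1} then follows directly: Poincar\'e reduces $\abs{\psih}_{L^{\infty}}^{2}$ to $\norm{\Delta_{\Sphere}\psih}_{L^{2}}^{2}$; the Carter piece is bounded, again by Poincar\'e on the $\ell\geq1$ function $(\Carter\psi)_{\ell\geq1}$ and the gradient half of the first input, by $\int_{\Sphere}\Endeg[(\Carter\psi)_{\ell\geq1}]\dSp$; and the remaining piece is handled by the $\Lxi$-half of the first input applied to $f=\Lxi\psi$, namely $\norm{\Lxi^{2}\psi}_{L^{2}}^{2}=\int_{\Sphere}\abs{\Lxi(\Lxi\psi)}^{2}\dSp\lesssim\int_{\Sphere}\Endeg[\Lxi\psi]\dSp$, which is exactly the right-hand side of \eqref{SI:estimate:sph:1}. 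For \eqref{SI:estimate:sph:2} I would run the same steps on $g=D\psih$ with $D\in\{\mu e_{3},e_{4}\}$, each of which is again an $\ell\geq1$ quantity since $\mu e_{3}=-\pu$ and $e_{4}=\pv+\frac{a}{\R}\Leta$ commute with the spherical mean; crucially these operators also commute with $\Delta_{\Sphere}$ and with $\sin^{2}\theta$, so $\Delta_{\Sphere}g=D\big((\Carter\psi)_{\ell\geq1}-\PJ_{\ell\geq1}(a^{2}\sin^{2}\theta\,\Lxi^{2}\psi)\big)$ and the single extra $D$-derivative is absorbed by $\Endeg[\,\cdot\,]$, producing $\Endeg[(\Carter\psi)_{\ell\geq1}]$ and one further $\Lxi$-slot, i.e.\ $\Endeg[\Lxi^{2}\psi]$. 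For \eqref{SI:estimate:sph:3} the targets $e_{1}\psih=\Th\psih$ and $e_{2}\psih$ are angular (the latter also transverse); here $e_{1}$ and $\frac{1}{\sin\theta}\Leta$ no longer commute with $\Delta_{\Sphere}$, but $[\Delta_{\Sphere},e_{1}]$ and $[\Delta_{\Sphere},\frac{1}{\sin\theta}\Leta]$ are first order with bounded coefficients and are reabsorbed by the estimate-\eqref{SI:estimate:sph:1}-type bounds, while the $a\sin\theta\,\Lxi$ part of $e_{2}$ generates the $\Lxi(\Carter\psi)_{\ell\geq1}$ and $\Lxi^{2}\psi$ contributions matching the stated right-hand side.

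The one point that requires genuine care—and that I would regard as the crux—is the bookkeeping of the transverse $\Lxi$-derivatives. Because the substitution $\Delta_{\Sphere}=\Carter-a^{2}\sin^{2}\theta\,\Lxi^{2}$ and the vector field $e_{2}$ both carry explicit $\Lxi$ (respectively $\Lxi^{2}$) through the $a\sin\theta$ weights, each reduction of an angular quantity spills a transverse derivative, and a naive treatment would generate an unbounded tower of $\Lxi$-derivatives. The pointwise inversion identity above is exactly what closes this: it lets $\Endeg[\,\cdot\,]$ swallow one $\Lxi$-derivative at fixed order with $a,M$-bounded constants, so that each estimate terminates at precisely the number of $\Lxi$'s on its right-hand side—one for \eqref{SI:estimate:sph:1}, two for \eqref{SI:estimate:sph:2}, and the mixed $\Lxi^{\leq1}$-orders for \eqref{SI:estimate:sph:3}. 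The only remaining bookkeeping is to track the $L^{2}$-bounded commutators $[\PJ_{\ell\geq1},\sin^{2}\theta]$ and, for \eqref{SI:estimate:sph:3}, the sphere-curvature commutators, none of which raise the angular order beyond that of $(\Carter\psi)_{\ell\geq1}$.
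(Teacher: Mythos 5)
Your treatment of \eqref{SI:estimate:sph:1} and \eqref{SI:estimate:sph:2} is correct and is essentially the paper's own argument: the same $H^2(\Sphere)\hookrightarrow L^\infty(\Sphere)$ embedding combined with the Poincar\'e inequality on $\ell\geq 1$ functions, the substitution $\Delta_{\Sphere}\psih=(\Carter\psi)_{\ell\geq 1}-\PJ_{\ell\geq 1}(a^2\sin^2\theta\,\Lxi^2\psi)$, the frame inversion (your identity $\Lxi f=\frac{\R}{q^2}(\mu e_3+e_4)f-\frac{a\sin\theta}{q^2}e_2 f$ is equivalent to the paper's relation $\frac{q^2}{\R}\Leta\psi=\sin\theta\, e_2(\psi)-a\sin^2\theta(e_4\psi+\mu e_3\psi)$), and, for \eqref{SI:estimate:sph:2}, the commutation of $\mu e_3$ and $e_4$ with $\Carter$, with $\sin^2\theta$, and with the mode projection.

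Your proof of \eqref{SI:estimate:sph:3}, however, contains a genuine gap. The claim that $[\Delta_{\Sphere},e_1]$ and $[\Delta_{\Sphere},\tfrac{1}{\sin\theta}\Leta]$ are ``first order with bounded coefficients'' is false. Writing $\Delta_{\Sphere}=\partial_\theta^2+\cot\theta\,\partial_\theta+\csc^2\theta\,\partial_\phi^2$, a direct computation gives
\begin{align*}
[\Delta_{\Sphere},\partial_\theta]=\csc^2\theta\,\partial_\theta+2\cot\theta\csc^2\theta\,\partial_\phi^2 ,
\end{align*}
which is second order in the angular directions and whose coefficients blow up like $\theta^{-2}$ and $\theta^{-3}$ at the poles; the commutator with $\tfrac{1}{\sin\theta}\Leta$ is equally singular. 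This is not a removable technicality: $e_1=\partial_\theta$ and $\tfrac{1}{\sin\theta}\Leta$ are singular vector fields at the poles, and $e_1\psih$ is not even a continuous scalar on $\Sphere$ (its limit at a pole depends on the meridian of approach), so the $H^2(\Sphere)\hookrightarrow L^\infty$ engine cannot be applied to it as a scalar, and the singular commutator terms cannot be absorbed into the right-hand side of \eqref{SI:estimate:sph:3}, which involves only the regular frame derivatives $e_1,e_2,\mu e_3,e_4$ of the listed quantities.

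The repair is the paper's route, which your closing mention of ``sphere-curvature commutators'' gestures at but your actual argument does not implement: one works with the covariant quantity $|\slashed{\nabla}\psi|$ rather than with $e_1$ and $\tfrac{1}{\sin\theta}\Leta$ separately. The gradient Sobolev estimate $|\slashed{\nabla}\psi|^2_{L^\infty(\Sp)}\lesssim\int_{\Sp}\big(|\slashed{\nabla}(\Delta_{\Sphere}\psi)|^2+|\slashed{\nabla}\psi|^2\big)\dSp$ is legitimate because the commutator of $\Delta_{\Sphere}$ with the covariant gradient is the Ricci term of the unit sphere, a bounded zeroth-order operator. One then substitutes $\Delta_{\Sphere}\psi=\Carter\psi-a^2\sin^2\theta\,\Lxi^2\psi$, notes $\slashed{\nabla}(\Carter\psi)=\slashed{\nabla}(\Carter\psi)_{\ell\geq 1}$, and finally recovers the frame derivatives pointwise from $|e_1\psih|^2+|e_2\psih|^2\lesssim|\slashed{\nabla}\psi|^2+|\Lxi\psih|^2$, estimating $|\Lxi\psih|_{L^\infty}$ by applying \eqref{SI:estimate:sph:1} to $\Lxi\psi$; this is exactly where the mixed $\Lxi^{\leq 1}$ orders on the right-hand side of \eqref{SI:estimate:sph:3} come from.
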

  
  \begin{proof}
  Recall the definition $\Carter=\Delta_{\Sphere}+a^2\sin^2\th \Lxi^2$. Hence we have
  \begin{align}
    (\Carter\psi)_{\ell\geq 1}=(\Delta_{\Sphere}\psi+a^2\sin^2\th \Lxi^2\psi)_{\ell\geq 1}=\Delta_{\Sphere}(\psi_{\ell\geq 1})+(a^2\sin^2\th \Lxi^2\psi)_{\ell\geq 1},
  \end{align}
  where we have used $(\Delta_{\Sphere}\psi)_{\ell\geq 1}=\Delta_{\Sphere}\psi=\Delta_{\Sphere}(\psiz+\psih)=\Delta_{\Sphere}(\psih)$. 
  Then, by a standard Sobolev-imbedding estimate on sphere, we deduce
  \begin{align}
    |\psih|_{L^\infty(S_{u,\ub}^2)}^2 &\lesssim 
    \int_{\Sphere_{u,\ub}}(|\Delta_{\Sphere}\psih|^2+|\psih|^2)\dSp\notag\\
  &\lesssim \int_{\Sphere_{u,\ub}}(|(\Carter\psi)_{\ell\geq 1}|^2+|(a^2\sin^2\th \Lxi^2\psi)_{\ell\geq 1}|^2+|\psih|^2)\dSp\notag\\
  &\lesssim \int_{\Sphere_{u,\ub}}(|(\Carter\psi)_{\ell\geq 1}|^2+|a^2\sin^2\th \Lxi^2\psi|^2+|\psih|^2)\dSp\notag\\
  &\lesssim \int_{\Sphere_{u,\ub}}(|\slashed{\nabla}(\Carter\psi)_{\ell\geq 1}|^2+|a|^2|a\sin\th \Lxi^2\psi|^2+|\slashed{\nabla}\psih|^2)\dSp\notag\\
  &\lesssim \int_{\Sphere_{u,\ub}}\bigg(|\slashed{\nabla}(\Carter^{\leq 1}\psi)_{\ell\geq 1}|^2+|e_2 \Lxi\psi|^2+\bigg|\frac{1}{\sin\th}\Leta \Lxi\psi\bigg|^2\bigg)\dSp,
  \end{align}
where we have used the Poincar\'e inequality in the second line line for $(\Carter\psi)_{\ell\geq 1}$ and $\psih$ which are  not supported on $\ell=0$ mode. 
 The estimate \eqref{SI:estimate:sph:1} follows from the above inequality by substituting  $|\slashed{\nabla}\psi|^2=|\partial_\th\psi|^2+\frac{1}{\sin^2\theta}|\Leta\psi|^2$ and $\frac{q^2}{\R} \Leta\psi
  =\sin\th e_2(\psi)-a\sin^2\th (e_4\psi +\mu e_3\psi)$ into the last line.

  By $[\mu e_3, \Carter]=[e_4,\Carter]=0$, one can easily derive 
  \begin{align}\label{proof:sb:im:234}
    |\mu e_3\psih|_{L^\infty(\Sphere_{u,\ub})}^2+|e_4\psih|_{L^\infty(\Sphere_{u,\ub})}^2
    \lesssim &\int_{\Sphere_{u,\ub}}(|\mu e_3(\Carter^{\leq 1}\psi)_{\ell\geq 1}|^2+|\mu e_3\Lxi^2\psi|^2)\dSp\notag\\
    &+\int_{\Sphere_{u,\ub}}(| e_4(\Carter^{\leq 1}\psi)_{\ell\geq 1}|^2+| e_4\Lxi^2\psi|^2)\dSp,
  \end{align}
and this implies \eqref{SI:estimate:sph:2}.

  In the end, note that 
  \begin{align}
    |\slashed{\nabla}\psi|_{L^\infty(\Sphere_{u,\ub})}^2 &\lesssim 
    \int_{\Sphere_{u,\ub}}(|\slashed{\nabla}(\Delta_{\Sphere}\psi)|^2+|\slashed{\nabla}\psi|^2)\dSp\notag\\
    &\lesssim 
    \int_{\Sphere_{u,\ub}}(|\slashed{\nabla}(\Carter\psi)|^2+|\slashed{\nabla}(\sin^2\th \Lxi^2\psi)|^2+|\slashed{\nabla}\psi|^2)\dSp\notag\\
    &\lesssim 
    \int_{\Sphere_{u,\ub}}(|\slashed{\nabla}(\Carter^{\leq 1}\psi)_{\ell\geq 1}|^2+|\slashed{\nabla}^{\leq1}\Lxi^2\psi)|^2)\dSp.
  \end{align}
  This together with \eqref{SI:estimate:sph:1} and $|e_1\psih|^2+|e_2\psih|^2\lesssim |\slashed{\nabla}\psi|^2+|\Lxi\psih|^2$ yields \eqref{SI:estimate:sph:3}.
  \end{proof}

\begin{proof}[Proof of the estimates \eqref{eq:main:ptw:I} in Theorem \ref{thm:regionI}]
We divide the proof into three steps.

\textit{Step 1: Computing the asymptotics of $\Lxi^j\psi$.} By \eqref{energy:decay:redshift-region:zeromode}, we have for the $\ell=0$ mode that
\begin{align}\label{proof:1:234}
  |\Lxi^j\psiz(u,\ub)-\Lxi^j\psiz(-\infty,\ub)|={}&\biggl|\int_{-\infty}^u\pu\Lxi^j\psiz(u',\ub)\di u'\biggr|\notag\\
\leq{}&\biggl(\int_{-\infty}^u-\mu(u',\ub)\di u'\biggr)^\half\biggl(\int_{-\infty}^u(-\mu)^{-1}|\pu\Lxi^j\psiz|^2(u',\ub)\di u'\biggr)^\half\notag\\
={}&\frac{1}{\sqrt{2}} (r_+-r(u,\ub))^\half\biggl(\int_{-\infty}^u(-\mu)^{-1}|\pu\Lxi^j\psiz|^2(u',\ub)\di u'\biggr)^\half\notag\\
\lesssim {}&(-\mu)^\half \ub^{-4-j}.
\end{align}
By  the assumed estimates \eqref{eq:asympp:rred}, one achieves
\begin{align}\label{proof:I:ellz:872}
  |\Lxi^j\psiz(u,\ub)-c_0\Lxi^j(\ub^{-3})|
\lesssim {}&\ub^{-3-j-\epsilon}.
\end{align}

For $\psih$, we integrate
\begin{align} \pu \big(\Lxi^j\psilarge(u,\tub_1+r-r_+,\phiin,\th)\big)=(\pu \Lxi^j\psilarge+\half\mu\pv\Lxi^j\psilarge)(u,\tub_1+r-r_+,\phiin,\th)\end{align}
along $\tub=\tub_1$ from $u=-\infty$ and claim that $|\Lxi^j\psilarge(u,\ub)|\lesssim \ub^{-4-j-\delta}$ in $\I$. In fact,  
\begin{align}
  &\int_{-\infty}^u(\pu \Lxi^j\psilarge+\half\mu\pv\Lxi^j\psilarge)(u',\tub_1+r-r_+,\phiin,\th)\di u'\notag\\
\lesssim &\biggl(\int_{-\infty}^u-\mu(u',\tub_1+r-r_+)\di u'\biggr)^\half\notag\\
&\qquad\times\biggl(\int_{-\infty}^u[(-\mu)^{-1}|\pu\Lxi^j\psih|^2 +(-\mu) |\pv\Lxi^j\psih|^2](u',\tub_1+r-r_+,\phiin,\th)\di u'\biggr)^\half\notag\\
\lesssim &(-\mu)^\half \biggl(\iint\limits_{\{\tub=\tub_1\}\cap\{u'\leq u\}}\big(\Enn[\Lxi^{\leq 1}\Lxi^j(\Carter^{\leq 1}\psi)_{\ell\geq1}]+\Enn[\Lxi^{\leq 1}\Lxi^{j+1}\psi]\big)(-\mu)\dSp\di u'\biggr)^\half\lesssim (-\mu)^\half \tub_1^{-4-j-\delta},
\end{align}
where we have used  $\pv=e_4-\frac{a}{q^2}[\sin\th e_2-a\sin^2\th(e_4+\mu e_3)]$ and the Sobolev imbedding estimate \eqref{SI:estimate:sph} to the second inequality,  and
\eqref{energy:decay:redshift-region:highermode} and \eqref{energy:decay:redshift-region:full} to the last step. The claim then follows from the assumed estimate \eqref{assump:eq:HH:highmodes} on the event horizon. Meanwhile, since $\Carter$ and $\Leta$ commute with the wave operator, we conclude for $k_1\geq 0$, $k_2\geq 0$, and $k_1+k_2\leq 1$,
\begin{align}
\label{eq:ellgeq1:highderi:I}
|\Lxi^j\Phi^{k_1}\Ckpsilarge(u,\ub,\phiin,\th)|\lesssim \ub^{-4-j-\delta},
\end{align}
which together with $\Leta\psi=\Leta\psih$ implies 
\begin{align}\label{pointwise:decay:Leta-scalar:red}
  |\Lxi^j\Leta\psi(u,\ub,\phiin,\th)|\lesssim \ub^{-4-j-\delta}\ \ \text{in}\ \I .
\end{align}

In summary, the estimates \eqref{eq:ellgeq1:highderi:I}, 
 \eqref{proof:1:234} and the assumed estimate \eqref{eq:asympp:rred} together yield
\begin{align}\label{pointwise:decay:scalar:red}
  |\Lxi^j\psi(u,\ub,\phiin,\th)-c_0\Lxi^j(\ub^{-3})|\lesssim \ub^{-3-j-\epsilon}\ \ \text{in}\ \I.
\end{align}
Moreover, we have $|\Carter\psi|\leq |(a^2\sin^2\th\Lxi^2\psi)_{\ell=0}|+|(\Carter\psi)_{\ell\geq 1}|\lesssim \ub^{-4-\delta}$ in $\I$, since $|(a^2\sin^2\th\Lxi^2\psi)_{\ell=0}|\lesssim a^2 \abs{\Lxi^2 \psi}_{L^{\infty}(\Sp)} \lesssim \ub^{-5}$.

\textit{Step 2: Computing the asymptotics of $\pv\psi$.}
 We recast the wave equation \eqref{eq-wave-doublenull-mode:=0} satisfied by $\psiz$ as
\begin{align}
\label{eq:wave:ell=0:step2:I}
\hspace{4ex}&\hspace{-4ex}
\pu\biggl((\R) \pv\psiz-\half(\R)\Lxi\psiz\biggr)\notag\\
  =&-\half(\R)\pu\Lxi\psiz-\frac{1}{4}a^2\mu \PJ_{\ell=0} (\sin^2\theta \Lxi^2\psi)
\end{align}
and integrate this equation along $\ub=constant$ from $u=-\infty$. Note that on $\Horizon$,
\begin{align}
(\R) \pv\psiz-\half(\R)\Lxi\psiz=\half(r_+^2+a^2)\Lxi\psiz=-\frac{3}{2}c_0(r_+^2+a^2)\ub^{-4} + O(\ub^{-4-\epsilon}).
\end{align}
Further, the integral of the first term on the right-hand side of \eqref{eq:wave:ell=0:step2:I} is bounded by $C(-\mu)^\half\ub^{-5}$ by using \eqref{proof:1:234} with $j=1$, and the integral of the second term on the right-hand side of \eqref{eq:wave:ell=0:step2:I} is controlled as follows:
\begin{align}
  &\biggl|\int_{-\infty}^u\mu\PJ_{\ell=0} (\sin^2\theta \Lxi^2\psi)(u',\ub_1)\di u'\biggr|\notag\\
  \lesssim &(-\mu)^\half \biggl(\int_{-\infty}^u-\mu |\PJ_{\ell=0} (\sin^2\theta \Lxi^2\psi)|^2(u',\ub_1)\di u'\biggr)^\half\notag\\
  \lesssim &(-\mu)^\half \biggl(\int_{-\infty}^u\int_{\mathbb{S}_{u',\ub}^2}-\mu |\PJ_{\ell=0} (\sin^2\theta \Lxi^2\psi)|^2(u',\ub_1,\phiin,\th)\dSp\di u'\biggr)^\half\notag\\
  \lesssim &(-\mu)^\half \biggl(\iint\limits_{\{\ub=\ub_1\}\cap\{u'\leq u\}}-\mu |\Lxi^2\psi|^2\dSp\di u'\biggr)^\half\lesssim\ -\mu \ub_1^{-5},
\end{align}
where we have used the inequality \eqref{pointwise:decay:scalar:red} with $j=2$ to the last step.
Therefore, we obtain
\begin{align}
  \bigg|(\R) \pv\psiz(u,\ub)-\half(\R)\Lxi\psiz(u,\ub)+\frac{3}{2}c_0(r_+^2+a^2)\ub^{-4}\bigg|\lesssim \ub^{-4-\epsilon}.
\end{align}
Substituting \eqref{pointwise:decay:scalar:red} back into the above inequality, we arrive at
\begin{align}
\label{eq:precise:pvpsi:ell=0:I}
  \bigg|\pv\psiz(u,\ub)+\frac{3}{2}c_0\bigg(1+\frac{r_+^2+a^2}{\R}\bigg)\ub^{-4}\bigg|\lesssim \ub^{-4-\epsilon}.
\end{align}

We next turn to deriving the pointwise decay estimates for $e_4\psih$.  Instead, we integrate $ \pu \big(e_4\psilarge(u,\tub_1+r-r_+,\phiin,\th)\big)$  along $\tub=\tub_1$ from $u=-\infty$: 
\begin{align}
\label{eq:pvpsielllarge:inte:I}
  \hspace{4ex}&\hspace{-4ex}
  e_4\psilarge(u, \tub_1+r(u,\ub)-r_+,\phiin,\th)-e_4\psilarge(-\infty,\tub_1,\phiin,\th)\notag\\
  =&\int_{-\infty}^u \bigg(\pu e_4\psilarge+\half\mu\pv e_4\psilarge\bigg)(u',\tub_1+r(u',\ub)-r_+,\phiin,\th)\di u'\notag\\
  =&\int_{-\infty}^u\biggl((1+\half\mu)\pu e_4\psilarge+\half\mu e_4\Lxi\psilarge\biggr)(u',\tub_1+r(u',\ub)-r_+,\phiin,\th)\di u'\notag\\
  \lesssim &(-\mu)^\half \biggl(\int_{-\infty}^u\bigl((-\mu)^{-1}|\pu e_4\psilarge|^2+ (-\mu) |e_4\Lxi\psilarge|^2\bigr)(u',\tub_1+r(u',\ub)-r_+,\phiin,\th)\di u'\biggr)^\half\notag\\
  \lesssim &(-\mu)^\half \tub_1^{-4-\delta},
\end{align}
where we have used $\pv=\Lxi+\pu$ to the third line and the following equality
\begin{align}
 & 4(r^2+a^2)\pu e_4(\psilarge)\notag\\
 &\quad =-\mu\biggl(
  2a\Lxi\Leta(\psilarge)
+(\Carter\psi)_{\ell\geq 1}
+2 r(\pu(\psilarge)+ e_4(\psilarge))
+\frac{2ar}{r^2+a^2}\Leta(\psilarge)\biggr),
\end{align}
the Sobolev imbedding \eqref{SI:estimate:sph}  and the estimates
\eqref{energy:decay:redshift-region:highermode} and \eqref{energy:decay:redshift-region:full} to the last line.
Consequently, it holds by the assumed estimate \eqref{assump:eq:HH:highmodes} for $\psilarge$ on $\Horizon$ that
\begin{align}
\label{eq:precise:e4psilarge:I:739}
\abs{e_4\psilarge}\lesssim \ub^{-4-\delta}\ \ \text{in}\ \I.
\end{align} 

 Combining this estimate with the estimate \eqref{eq:precise:pvpsi:ell=0:I} for $\psiz$ and \eqref{pointwise:decay:Leta-scalar:red} which says $|\Leta\psih|\lesssim \ub^{-4-\delta}$, we conclude
\begin{align}
  \label{eq:precise:pvpsi:full:I}
    \bigg|\pv\psi(u,\ub,\phiin,\th)+\frac{3}{2}c_0\bigg(1+\frac{r_+^2+a^2}{\R}\bigg)\ub^{-4}\bigg|\lesssim \ub^{-4-\epsilon}\ \ \text{in}\ \I,
  \end{align}
  and this together with \eqref{pointwise:decay:Leta-scalar:red} gives
  \begin{align}
    \label{eq:precise:e4psi:full:I}
      \bigg|e_4\psi(u,\ub,\phiin,\th)+\frac{3}{2}c_0\bigg(1+\frac{r_+^2+a^2}{\R}\bigg)\ub^{-4}\bigg|\lesssim \ub^{-4-\epsilon}\ \ \text{in}\ \I.
    \end{align}

\textit{Step 3: Computing the asymptotics of $\pu\psi$.} A straightforward application of the estimates \eqref{pointwise:decay:scalar:red} and \eqref{eq:precise:pvpsi:full:I} and the relation $\pu=\pv-T$ yields
 \begin{align}
  \label{eq:precise:pupsi:full:I:1}
    \bigg|\pu\psi(u,\ub,\phiin,\th)-\frac{3}{2}c_0\mu\frac{r+r_+}{r-r_-}\ub^{-4}\bigg|\lesssim \ub^{-4-\epsilon}\ \ \text{in}\ \I.
 \end{align}

This however does not yield the estimate \eqref{eq:main:ptw:I:pupsi} for $e_3\psi$ that we wish to show, thus we shall now improve the estimate \eqref{eq:precise:pupsi:full:I:1} near the event horizon. 
Recall the expression $\partial_r\mu=\frac{2}{\R}(r-M-r\mu)$, thus one can choose $\rred\in (r_-,r_+)$ such that $\partial_r\mu(r)\geq c_0>0$ for $r\in[\rred,r_+]$. It then follows that in $[\rred,r_+)$, 
 \begin{align}\label{proof:1:789}
  e_4\bigl(|\mu|^{-1}|\pu\psi|^2\bigr)
  ={}&\pv((-\mu)^{-1}) |\pu \psi|^2 + 2 \abs{\mu}^{-1} \pu \psi e_4\pu\psi\notag\\
  \leq {}&c\mu^{-1}|\pu\psi|^2+C|\mu|^{-1}|e_4\pu\psi|^2\notag\\
\leq {}&c\mu^{-1}|\pu\psi|^2+C(-\mu)\ub^{-8}
 \end{align}
 for some constants $c>0$ and $C>0$.
Here, we have used $\pv((-\mu)^{-1})=\half\mu^{-1}\partial_r\mu$   and Cauchy-Schwarz in the second line, and $[\pu,e_4]=-\frac{ar\mu}{(\R)^2}\Leta$, equation \eqref{eq-wave-doublenull} and the pointwise decay estimates of $\Carter\psi, \Lxi\Leta\psi, \Leta\psi, \pu\psi, e_4\psi$ to control the term $|\mu|^{-1}|e_4\pu\psi|^2$ in the third line. Integrating \eqref{proof:1:789} in $\{\ub_1<\ub<\ub_2\}\cap \{r_1<r<r_+\}$ for $\rred\leq r_1<r_+$ with respect to $\dSp\di u\di\ub$, we achieve
\begin{align}\label{proof:1:789:1}
  &\iint\limits_{\{\ub=\ub_2\}\cap[r_1,r_+)}|\mu|^{-1}|\pu\psi|^2\dSp\di u
  +\iint\limits_{\{\ub_1\leq\ub\leq\ub_2\}\cap[r_1,r_+)}|\mu|^{-1}|\pu\psi|^2\dSp\di u \di\ub\notag\\
  &{}\ \lesssim \iint\limits_{\{\ub=\ub_1\}\cap[r_1,r_+)}|\mu|^{-1}|\pu\psi|^2\dSp\di u+\iint\limits_{\{\ub=\ub_1\}\cap[r_1,r_+)}(-\mu)\ub^{-8}\di u\di\ub\notag\\
  &{}\ \lesssim \iint\limits_{\{\ub=\ub_1\}\cap[r_1,r_+)}|\mu|^{-1}|\pu\psi|^2\dSp\di u+(r_+-r_1)\int_{\ub_1}^{\ub_2}\ub^{-8}\di\ub,
\end{align}
where we have dropped the boundary term on $\{r=r_1\}$ on the left side. Because of the well-posedness of the scalar wave equation, all the nondegenerate derivatives are finite on $\{\ub=\ub_0\}\cap \DD$ hypersurface for any finite $\ub_0>0$, therefore, it holds $ \iint\limits_{\{\ub=\ub_0\}\cap[r_1,r_+)}|\mu|^{-1}|\pu\Lxi^j\psi|^2\dSp\di u \lesssim (r_+-r_1)$ for any $r_1\in [\rred,r_+)$. Then, we apply Lemma \ref{lemma:mean:decay} with $\alpha=0$ to \eqref{proof:1:789:1} and achieve 
\begin{align}\label{energy:decay:scalar:red}
  \iint\limits_{\{\ub=\ub_1\}\cap[r_1,r_+)}|\mu|^{-1}|\pu\Lxi^j\psi|^2\dSp\di u\lesssim (r_+-r_1)\ub_1^{-8-2j}.
\end{align}

In a similar manner, we get 
\begin{align}\label{energy:decay:phiscalar:red}
  \iint\limits_{\{\ub=\ub_1\}\cap[r_1,r_+)}|\mu|^{-1}(|\pu\Lxi^j\Carter^{\leq 1}\Carter\psi|^2+|\pu\Lxi^j\Carter^{\leq 1}\Leta\psi|^2)\dSp\di u\lesssim (r_+-r_1)\ub_1^{-8-2j-2\delta},
\end{align}
by using  $|\Carter^{k_1}\Leta^{k_2}X\Lxi^j\psi|\lesssim \ub^{-4-j-\delta}$ with $X\in \{\Carter,\Leta,\pu,\pv\}$ and $k_1+k_2=1$. 

An application of \eqref{energy:decay:phiscalar:red} yields
\begin{align}\label{pointwise:decay:scalar:red:1}
  &|\Lxi^j\psi(u,\ub_1,\phiin,\th)-\Lxi^j\psi(-\infty,\ub_1,\phiin,\th)|\notag\\={}&\biggl|\int_{-\infty}^u\pu\Lxi^j\psi(u',\ub_1,\phiin,\th)\di u'\biggr|\notag\\
\leq{}&\biggl(\int_{-\infty}^u-\mu(u',\ub_1)\di u'\biggr)^\half\biggl(\int_{-\infty}^u(-\mu)^{-1}|\pu\Lxi^j\psi|^2(u',\ub_1,\phiin,\th)\di u'\biggr)^\half\notag\\
\lesssim{}&\frac{1}{\sqrt{2}} (r_+-r(u,\ub_1))^\half\biggl(\iint\limits_{\{\ub=\ub_1\}\cap[r(u,\ub_1),r_+)}(-\mu)^{-1}(|\pu \Carter^{\leq 1}\Lxi^j\psi|^2+|\pu\Lxi^{j+2}\psi|^2)\dSp\di u\biggr)^\half\notag\\
\lesssim {}&(-\mu)\ub_1^{-4-j},
\end{align}
where we have used \eqref{proof:sb:im:234} to the second last line.
Similarly, we also have
\begin{align}\label{pointwise:decay:phiscalar:red}
  |\Leta\psi(u,\ub,\phiin,\th)-\Leta\psi(-\infty,\ub,\phiin,\th)|\lesssim (-\mu)\ub^{-4-\delta}.
\end{align}

Integrating the following formula
 \begin{align}
  4\pu\bigg((\R)\bigg(e_4\psi-\half\Lxi\psi\bigg)\bigg)
  =-2(\R)\pu\Lxi\psi-\mu\Carter\psi-2\mu a\Lxi\Leta\psi
\end{align}
 along $\ub=constant$,
we arrive at
\begin{align}
  \biggl|\bigg(e_4\psi-\half \Lxi\psi\bigg)(u,\ub,\phiin,\th)-\frac{r_+^2+a^2}{\R}\cdot\bigg(e_4\psi-\half\Lxi\psi\bigg)(-\infty,\ub,\phiin,\th)\biggr|\lesssim (-\mu)\ub^{-4-\delta}.
\end{align}
Substituting \eqref{pointwise:decay:phiscalar:red} into this inequality and using $\pv\psi=\Lxi\psi$ on $\Horizon$, this yields 
\begin{align}
  \biggl|(\pv\psi+\pu\psi)(u,\ub,\phiin,\th)-\frac{r_+^2+a^2}{\R}\cdot\pv\psi(-\infty,\ub,\phiin,\th)\biggr|\lesssim (-\mu)\ub^{-4-\delta}.
\end{align}
We use \eqref{pointwise:decay:scalar:red:1} again and achieve
\begin{align}
  \biggl|\pu\psi(u,\ub,\phiin,\th)+\half \mu\frac{r+r_+}{r-r_-}\cdot \pv\psi(-\infty,\ub,\phiin,\th)\biggr|\lesssim (-\mu)\ub^{-4-\delta},
\end{align}
which is, by \eqref{eq:precise:pvpsi:full:I} and $e_3=-\mu^{-1} \pu$, simplified to
\begin{align}
  \biggl|e_3\psi(u,\ub,\phiin,\th)+\frac{3}{2}c_0\frac{r+r_+}{r-r_-}\ub^{-4}\biggr|\lesssim \ub^{-4-\epsilon}.
\end{align}
This completes the proof.
\end{proof}

%%%%%%%%%%%%%%%%%
%%%%%%%%%%%%%%%%%
\section{Proof of Theorem \ref{thm:main}}
\label{sect:II}
%%%%%%%%%%%%%%%%%
%%%%%%%%%%%%%%%%%

In this section, we provide a proof of our main Theorem \ref{thm:main}. In Sections \ref{sect:EnerDec:II} and \ref{sect:precise:II}, we derive the energy decay estimates and prove the precise late-time asymptotic estimates in the region $\II\doteq\DD\cap\{\ub\geq 1\}\cap\{r_-\leq r\leq \rblue\}$ (close to the right Cauchy horizon $\CHorizon$) and the region $\II'\doteq \DD\cap\{u\geq 1\}\cap\{r_-\leq r\leq \rblue\}$ (close to the left Cauchy horizon $\CHorizon'$), under the assumed estimates \eqref{ass:thm:horizon:psi} on the event horizon. In the last Section \ref{sect:pf:mainthm}, we complete the proof of Theorem \ref{thm:main}.

We divide the region $\II$ into two subregions by a hupersursurface $\Gamma$, where the  hypersurface $\Gamma$  and the subregion $\IIf$ in $\II$ are defined  by
\begin{align}\label{def:Gamma:curve}
 \Gamma\doteq \II\cap\{2r^*(u,\ub)= \ub^{\gamma}\},\qquad
 \IIf \doteq \II\cap\{2\rblue^*\leq 2r^*(u,\ub)\leq \ub^{\gamma }\},
\end{align}
with $0<\gamma<1$ being a suitably small constant to be fixed. The other subregion is $\II\backslash \IIf$. Similarly, we define a hypersurface $\Gamma'\doteq \II'\cap \{2r^* = u^{\gamma}\}$ and the subregions $\IIfp$ and $\II'\backslash\IIfp$ of region $\II'$. See Figure \ref{fig:7} for each divided subregions of region $\DDr$.

\begin{figure}[htbp]
  \begin{center}
\begin{tikzpicture}[scale=1.2]
  \fill[blue!40] (1.95,2.05) arc(45:60:3.75) arc(60:126:1.35) arc(92:98.5:2.45 and 2.5) arc(219.5:217.94:23.1)--(0,4)--cycle;
  \fill[green!20] (1.95,2.05) arc(45:60:3.75) arc(60:125:1.35) arc(92:45:3 and 2.03);
  \fill[red!60] (1.95,2) arc(45:102:2.75 and 2) arc(219.52:225:23.1)--(1.95,1.95);
  \draw[] (0.05,0.05) -- (1.95,1.95);
  \draw[] (1.95,2.05) -- (0.05,3.95);
  \draw[] (-1.95,1.95) -- (-0.05,0.05);
  \draw[] (-1.95,2.05) -- (-0.05,3.95);
  \draw[] (1.95,2) arc(45:135:2.75 and 2);
  \draw[] (1.95,2.05) arc(45:60:3.75) arc(60:135:1.35);
  \draw[] (-0.96,3.04) arc(218:225:23.1);
  \draw[] (2,2) circle (0.05);
  \draw[] (-2,2) circle (0.05);
  \draw[fill=black] (0,0) circle (0.05);
  \draw[fill=black] (0,4) circle (0.05);
  \node at (0.2,-0.3) {$\Sphere_{\HH}$};
  \node at (0.2,4.3) {$\Sphere_{\CHH}$};
  \node at (1.2,3.3) {$\CHorizon$};
  \node at (-1.2,3.3) {$\CHorizon'$};
  \node at (2.2,2.2) {$i_+$};
  \node at (-2.2,2.2) {$i_+'$};
  \node at (0.55,2.68) {$\IIf$};
  \node at (0.8,1.8) {$\I$};
  \node at (0.08,3.15) {$\II\backslash \IIf$};
  \node[rotate=20] at (-1.1,2.25) {$r=\rblue$};
  \node[rotate=-51] at (0.15,1.5) {$\ub= 1$};
  \draw[->] (1.8,3)--(0.95,2.75);
  \node at (1.9,3) {$\Gamma$};
\end{tikzpicture}
\end{center}
\caption{Subregions $\I$, $\IIf$ and $\II\backslash\IIf$ in $\DDr$}
\label{fig:7}
\end{figure}

For the same reason as stated in Section \ref{sect:stability}, we consider only the region $\II$ in Sections \ref{sect:EnerDec:II} and \ref{sect:precise:II}, and only the region $\DD\cap\{\ub\geq 1\}$ in Section \ref{sect:pf:mainthm}. \textbf{Recall from Section \ref{subsect:almostnull} that we use the coodintates $(u,\ub,\phiout,\th)$ in this region $\II$ and  the derivatives in this coordinates are expressed in formula \eqref{eq:deris:uub:nullframe:II}.}

%%%%%%%%%%%%%%%%%
\subsection{Energy decay estimates in region $\II$}
\label{sect:EnerDec:II}
%%%%%%%%%%%%%%%%%

We first show the energy decay estimates for $\psi$.   

Multiplying the scalar wave equation \eqref{eq:scalar:form:e3e4} by $\mu$ on both sides, it becomes
\begin{align}\label{eq-wave-doublenull-2}
  4(r^2+a^2)\Y \pv\psi+\mu\Delta_{\Sphere}\psi
  +2a\mu\Lxi\Leta\psi+\mu a^2 \sin^2 \theta\Lxi^2\psi
  +2\mu r(\Y\psi+ \pv\psi)
  +\frac{2ar\mu}{r^2+a^2}\Leta\psi=0,
\end{align}
where we have defined $\Y\doteq -\mu e_3=\pu-\frac{a}{r^2+a^2}\Leta$ for convenience.

\begin{prop}
  Under the assumed estimates \eqref{ass:thm:horizon:psi} in Theorem \ref{thm:main}, we have, for any $\tub_1\geq1$ such that  $(\tu_1,\tub_1)$ lies in region $\II$, 
  \begin{align}\label{energy:decacy:full:blue:v}
    &\iint\limits_{\{\tu\leq\tu_1, \tub=\tub_1\}\cap\II}\Enw^{(-\half)}[\Lxi^j\psi]\dSp\di u
    +\iint\limits_{\{\tu=\tu_1,\ \tub\geq \tub_{\rblue}(\tu_1)\}}\Ende^{(-\half)}[\Lxi^j\psi]\dSp\di \ub\notag\\
  &+\iint\limits_{\{\tu\leq\tu_1\}\cap\{r_-<r<\rblue\}}\Endeh^{(-\frac{3}{2})}[\Lxi^j\psi]\dSp\di u\di\ub 
    \lesssim \ (\max\{\tub_{\rblue}(\tu_1), 1\})^{-7-2j},
  \end{align}
  and
   \begin{align}\label{energy:decacy:full:blue:w}
  \iint\limits_{\{ \tub=\tub_1\}\cap\IIf}\Enw^{(-\half )}[\Lxi^j\psi]\dSp\di u &\lesssim \tub_1^{-8-2j+\gamma},
   \end{align}
   where $\tub_{\rblue}(\tu_1)$ is such that $r(\tu_1, \tub_{\rblue}(\tu_1))=\rblue$ and the energies are
   \begin{subequations}
   \begin{align}
    \Ende^{(\alpha)}[\psi]&=|\pv\psi|^2-\mu\lmu^{\alpha}\bigl(|\Y\psi|^2+(e_1(\psi))^2+(e_2(\psi))^2\bigr),\\
    \Enw^{(\alpha)}[\psi]&=\lmu^{\alpha}|\Y\psi|^2-\mu\big(|\pv\psi|^2+(e_1(\psi))^2+(e_2(\psi))^2\big),\\
    \Endeh^{(\alpha)}[\psi]&=\lmu^{\alpha}|\Y\psi|^2+\lmu^{-\frac{3}{2}}|\pv\psi|^2-\mu\big((e_1(\psi))^2+(e_2(\psi))^2\big).
   \end{align}
   \end{subequations}

   Moreover, we have for any $\tub_1\geq 1$ that 
   \begin{align}\label{energy:bound:blue:IniEnergyplus}
    &\iint\limits_{\{ \tub=\tub_1\}\cap\II}\Enw^{(-\half)}[\Lxi^j\psi]\dSp\di u
  +\iint\limits_{\{r_-<r<\rblue\}}\Endeh^{(-\frac{3}{2})}[\Lxi^j\psi]\dSp\di u\di\ub \lesssim 1,
   \end{align}
 \end{prop}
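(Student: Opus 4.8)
The plan is to prove all three estimates from a single weighted multiplier (vector-field) identity for the rescaled equation \eqref{eq-wave-doublenull-2}, adapted to the blue-shift region through the logarithmic weight $\lmu$. Working in the coordinates $(u,\ub,\phiout,\th)$ with $\Y=-\mu e_3$ and $\pv=e_4$, I would contract \eqref{eq-wave-doublenull-2} with a current $X(\psi)=f(r)\Y\psi+g(r)\pv\psi$, where $g$ is comparable to $(r^2+a^2)^p$ and $f$ carries the weight $(r^2+a^2)^p\lmu^{-\half}$, and integrate the resulting divergence identity over the subregion of $\II$ bounded by the constant-$\ub$ slice $\{\tub=\tub_1\}$, the constant-$w$ slice $\{\tu=\tu_1\}$, the interface $\{r=\rblue\}$, and the Cauchy horizon. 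Since the principal term is $4(r^2+a^2)\Y\pv\psi$, the $f\Y\psi$-contraction produces a $\pv$-derivative whose constant-$\ub$ boundary density is comparable to $\Enw^{(-\half)}$, while the $g\pv\psi$-contraction produces a $\Y$-derivative whose constant-$w$ boundary density is comparable to $\Ende^{(-\half)}$; the spacetime density should then dominate $\Endeh^{(-\frac{3}{2})}$, and the flux toward the Cauchy horizon is nonnegative and is discarded.

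The heart of the matter is the coercivity of the spacetime term. Because $-\mu\to 0$ exponentially as $r\to r_-$, one has $\lmu\sim 2|\kappa_-|r^*\to+\infty$ with $\partial_r\lmu<0$, and differentiating the weight $\lmu^{-\half}$ along $\pv$ and $\Y$ produces precisely a $\lmu^{-\frac{3}{2}}$-weighted quadratic term; tuning $p$ and the weight exponent so that these terms are positive is what yields a bulk density $\gtrsim\Endeh^{(-\frac{3}{2})}\ge 0$. The angular terms $(e_1(\psi))^2,(e_2(\psi))^2$ together with the Kerr first-order terms $2a\mu\Lxi\Leta\psi$, $\tfrac{2ar\mu}{r^2+a^2}\Leta\psi$ and the term $\mu a^2\sin^2\th\Lxi^2\psi$ all carry a factor of $\mu$, hence are small toward the Cauchy horizon, and are absorbed by Cauchy--Schwarz into the coercive quadratic terms at the expense of one extra $\Lxi$-derivative; this is why the statement is phrased for $\Lxi^j\psi$ and why the commutation of \eqref{eq-wave-doublenull-2} with $\Lxi$, $\Carter$ and $\Leta$ is used. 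I expect this step --- balancing the logarithmic weights so that all three densities are simultaneously coercive despite the blue-shift instability --- to be the main obstacle.

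Granting coercivity, the decay is entirely inherited from the interface $\{r=\rblue\}$. There the multiplier flux is controlled by the red-shift analysis of Section \ref{sect:stability}: Theorem \ref{thm:regionI} together with the spacetime estimate \eqref{energy:decay:red:spacetime-esti} gives $\Enn[\Lxi^j\psi]\lesssim\ub^{-8-2j}$ on $\{r=\rblue\}$, so that its flux over $\{r=\rblue\}\cap\{\tub\ge\tub_{\rblue}(\tu_1)\}$ is bounded by $\int_{\tub_{\rblue}(\tu_1)}^{\infty}\ub^{-8-2j}\di\ub\lesssim(\max\{\tub_{\rblue}(\tu_1),1\})^{-7-2j}$; the single power lost relative to $-8-2j$ is exactly the one consumed in integrating the flux density in $\ub$. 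Discarding the nonnegative terms on the left as needed then gives \eqref{energy:decacy:full:blue:v}. The uniform bound \eqref{energy:bound:blue:IniEnergyplus} follows from the same identity, now integrating from the compact initial slice $\{\ub=1\}\cap\DD$, on which finite energy is guaranteed by well-posedness, together with the (now finite) $\{r=\rblue\}$ contribution.

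Finally, the improved estimate \eqref{energy:decacy:full:blue:w} in $\IIf=\II\cap\{2\rblue^*\le 2r^*\le\ub^{\gamma}\}$ exploits that the blue-shift is still mild there: on $\IIf$ one has $\lmu\sim 2|\kappa_-|r^*\lesssim\ub^{\gamma}$, and a constant-$\ub$ slice of $\IIf$ has $u$-extent of size $\sim\ub^{\gamma}$. Since along a constant-$u$ characteristic a point of $\{\tub=\tub_1\}\cap\IIf$ is connected to $\{r=\rblue\}$ at a value $\ub\sim\tub_1$ (here $\tub_{\rblue}(\tu_1)\sim\tub_1$), the weighted flux, which does not amplify under the chosen weight, inherits the interface rate $\ub^{-8-2j}$ pointwise in $u$; integrating over the $u$-extent $\sim\tub_1^{\gamma}$ then yields the sharpened bound $\tub_1^{-8-2j+\gamma}$. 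The complementary region $\II\setminus\IIf$, where $-\mu\lesssim e^{-c\ub^{\gamma}}$ is super-polynomially small, is not needed for this refined estimate and is treated separately in the subsequent pointwise asymptotic analysis.
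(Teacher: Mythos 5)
Your framework coincides with the paper's: the multiplier $X(\psi)=f(r)\Y\psi+g(r)\pv\psi$ with logarithmic weights, integration over the region bounded by $\{\tub=\tub_1\}$, $\{\tu=\tu_1\}$ and $\{r=\rblue\}$, and interface flux controlled by \eqref{energy:decay:red:spacetime-esti} (plus Remark \ref{remark:boundedness:rblue} when $\tub_{\rblue}(\tu_1)<1$). However, your coercivity claim fails for the multiplier you actually specify. In \eqref{energy:spacetime-term-2} the diagonal coefficient of $|\pv\psi|^2$ is $2r\mu g(r)-2\Y\bigl((\R)g(r)\bigr)$, determined by $g$ alone (the cross terms can only be absorbed, never create positivity); with $g\sim(\R)^p$ it equals $-2pr\mu(\R)^p\sim p(-\mu)$, and since $-\mu$ vanishes exponentially in $r^*$ while $\lmu^{-\frac{3}{2}}$ decays only polynomially, your bulk controls merely $(-\mu)|\pv\psi|^2$, nowhere near the $\lmu^{-\frac{3}{2}}|\pv\psi|^2$ term in $\Endeh^{(-\frac{3}{2})}$. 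Producing that term forces the log weight onto $g$, so that $\Y$ falls on it and yields the coefficient $-\tfrac{1}{2}(\R)\partial_r\mu\,\lmu^{-\frac{3}{2}}=(M-r+r\mu)\lmu^{-\frac{3}{2}}>0$ near $r_-$ (this positivity is the blue-shift sign $\partial_r\mu<0$ there); but a log-weighted $g$ degrades the constant-$\tu$ flux from $|\pv\psi|^2$ to $\lmu^{-\frac{1}{2}}|\pv\psi|^2$, which no longer dominates $\Ende^{(-\frac{1}{2})}$. This is exactly why the paper runs the identity twice: once with $f=\lmu^{-\frac{1}{2}}$, $g=(\R)^p$ to obtain the fluxes $\Enw^{(-\frac{1}{2})}$, $\Ende^{(-\frac{1}{2})}$ and the $\lmu^{-\frac{3}{2}}|\Y\psi|^2$ bulk, and once more with $f=g=\lmu^{-\frac{1}{2}}$ to upgrade the $\pv\psi$ bulk (equivalently, one may add $\lmu^{-\frac{1}{2}}$ to your $g$). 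As written, your single multiplier cannot yield \eqref{energy:decacy:full:blue:v}, and this is precisely the step you yourself flagged as the main obstacle.

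Second, your derivation of \eqref{energy:decacy:full:blue:w} is a heuristic, not a proof. Constant-$u$ hypersurfaces in Kerr are timelike away from the poles (Section \ref{sect:uub}), so they are not characteristics, and an energy identity controls integrated fluxes, never a ``pointwise in $u$'' density; the assertions that the weighted flux ``does not amplify'' and ``inherits the interface rate'' are exactly what must be established against the blue-shift instability. The paper's mechanism is: the identity on $\{\tub_1\leq\tub\leq\tub_2\}\cap\IIf$, whose boundary flux on $\Gamma$, namely $\mathbb{T}_{\Gamma}[\psi]=\fu[\psi]+(1-\gamma\ub^{\gamma-1})\fv[\psi]$, has a favorable sign only after fixing $\gamma$ small; this yields \eqref{proof:energy:inequality}, and then the relation $\Enw^{(-\frac{3}{2})}\geq\lmu^{-1}\Enw^{(-\frac{1}{2})}\gtrsim\ub^{-\gamma}\Enw^{(-\frac{1}{2})}$ on $\IIf$ turns the bulk into a damping term, so that Lemma \ref{lemma:mean:decay} with $\alpha=\gamma$ delivers the rate $\tub_1^{-8-2j+\gamma}$. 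Without this iteration (or a genuine domain-of-dependence localization, which would require verifying the signs of the weighted fluxes through the null boundary of the causal past), the energy inequality alone gives only boundedness, because the total flux entering through $\{r=\rblue\}\cap\{1\leq\tub\leq\tub_1\}$ does not decay in $\tub_1$. (Your route to the uniform bound \eqref{energy:bound:blue:IniEnergyplus} via the initial slice is acceptable, though the paper instead lets $\tu_1\to+\infty$ in \eqref{energy:decacy:full:blue:v} and invokes Remark \ref{remark:boundedness:rblue}.)
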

 
 \begin{proof}
 Similar to the discussions in Section \ref{sect:ED:scalar:I}, we multiply equation \eqref{eq-wave-doublenull-2} with $X(\psi)\doteq f(r)\Y\psi+g(r)\pv\psi$, integrate on $S_{u,\ub}$ and get
 \begin{align}\label{eq:div:fullfield-2}
  \pv\biggl(\int_{\Sp}\fv[\psi]\dSp\biggr)+\pu\biggl(\int_{\Sp}\fu[\psi]\dSp\biggr) +\int_{\Sp}\mathcal{B}[\psi]\dSp =0,
\end{align}
where
\begin{subequations}
\begin{align}\label{energy:flux:boundary-2}
  \fv[\psi]&=2(r^2+a^2)f(r)|\Y\psi|^2-\half \mu g(r) \big[(e_1(\psi))^2+(e_2(\psi))^2\big]+a\mu \sin\theta e_2(\psi) X(\psi),\\
  \fu[\psi]&=2(r^2+a^2)g(r)|\pv\psi|^2-\half \mu f(r) \big[(e_1(\psi))^2+(e_2(\psi))^2\big]-a\mu \sin\theta e_2(\psi) X(\psi),
\end{align}
and 
\begin{align}\label{energy:spacetime-term-2}
  \mathcal{B}[\psi] =&\bigl(2r\mu g(r)-2\Y((r^2+a^2)g(r))\bigr)|\pv\psi|^2+\bigl(2r\mu f(r)-2\pv((r^2+a^2)f(r))\bigr)|\Y\psi|^2\notag\\
  &+\half[\Y(\mu f(r))+\pv(\mu g(r))]\big((e_1(\psi))^2+(e_2(\psi))^2\big)\notag\\
  &+2\mu r(g(r)+f(r))\pv\psi \Y\psi
  +\frac{2ar\mu}{r^2+a^2}\Leta\psi(-f(r
  )\Y\psi+g(r)\pv\psi).
\end{align}
\end{subequations}

 Taking $f(r)=(\lmu)^{-s}$ and $g(r)=(\R)^p$ with $p>0,s>0$ to be fixed,  the coefficients in $\mathcal{B}[\psi]$ can be estimated as follows:
 \begin{subequations}\label{pf:2:11}
 \begin{align}
   r\mu g(r)-\Y((r^2+a^2)g(r))=&-\mu pr(r^2+a^2)^p,\\
   r\mu f(r)-\pv((r^2+a^2)f(r))=&
   - s(r-M-r\mu)\lmu^{-s-1}\geq C_1 s\lmu^{-s-1},\\
  \Y(\mu f(r))+\pv(\mu g(r))=&
   (\R)^{-1}(r-M-r\mu)\mu(1+s\lmu^{-1})\lmu^{-s}\notag\\
  &+\mu[(r-M)+(p-1)r\mu](\R)^{p-1}\notag\\
  \geq &-C_1(\R)^{-1}\mu \lmu^{-s}-C_1\mu (\R)^{p-1},
 \end{align}
\end{subequations}
 for $r\in (r_-,r_1]$, where $r_1$ is given to satisfy  $\min\limits_{r\in[r_-,r_1]}\{-r+M+(1-p)r\mu\}= C_1>0$ and we have used $\partial_r\mu=2(\R)^{-1}(r-M-r\mu)$, $\pv\mu=\half\mu\partial_r\mu$ and $\pu\mu=\half\mu\partial_r\mu$.

 For any fixed $s>0$,  we deduce that there are constants  $p>0$ and $\rblue\in (r_-,r_1]$, both of which depend on $s$, such that 
 \begin{align}\label{estimate:spacetimeterm:2}
  \mathcal{B}[\psi] &\gtrsim_{p,s,\rblue} \lmu^{-s-1}|\Y\psi|^2+(-\mu)\big(|\pv\psi|^2+(e_1(\psi))^2+(e_2(\psi))^2\big), \ \ r\in(r_-,\rblue].
 \end{align}
 
 In fact, we apply Cauchy--Schwarz inequality to the last line in \eqref{energy:spacetime-term-2} and use $
  \frac{q^2}{\R} \Leta\psi
   =\sin\th e_2(\psi)-a\sin^2\th (\pv\psi -\Y\psi)$, then we get
   \begin{subequations}\label{pf:2:111}
\begin{align}
  2\mu r(g(r)+f(r))\pv\psi \Y\psi \geq& \mu r(\lmu^{-s}+(\R)^p)\bigg(\epsilon_1 |\Y\psi|^2+\frac{1}{\epsilon_1}|\pv\psi|^2\bigg)\notag\\
  =&\mu r\biggl(\lmu +(\R)^p\lmu^{s+1}\biggr)\epsilon_1(\lmu)^{-s-1}|\Y\psi|^2\notag\\
  &+\biggl(\lmu^{-s}(\R)^{-p}+1\biggr)\frac{1}{\epsilon_1}\mu r(\R)^p|\pv\psi|^2\notag\\
  \geq & -2\epsilon_1\lmu^{-s-1}|\Y\psi|^2+\frac{2}{\epsilon_1}\mu r(\R)^p|\pv\psi|^2,\\
  -\frac{2ar\mu f(r)}{r^2+a^2}\Leta\psi \Y\psi
  =&-\frac{2r\mu f(r)}{q^2}\bigg(a^2\sin^2\th |\Y\psi|^2+a\sin\th e_2(\psi)\Y\psi-a^2\sin^2\th \pv\psi \Y\psi \biggr) \notag\\
  \geq &\frac{2r\mu \lmu^{-s}}{q^2}\big(\epsilon_2^{-1}|\Y\psi|^2 +\big(a^2|\pv\psi|^2+\epsilon_2(e_2(\psi))^2\big) \big)\notag\\
  \geq &\frac{2\mu \lmu^{-s}}{\epsilon_2 r}|\Y\psi|^2+\mu\big(a^2r(\R)^p|\pv\psi|^2+\epsilon_2(\R)^{p-1}(e_2(\psi))^2\big),\\ 
 \frac{2ar\mu g(r)}{r^2+a^2}\Leta\psi\pv\psi 
\geq & \frac{a^2r\mu g(r)}{q^2}\frac{2}{\epsilon_3}|\pv\psi|^2 +\frac{r(\R)\epsilon_3}{q^2}\mu (\R)^{p-1}(e_2(\psi))^2\notag\\
& -\epsilon_3\lmu^{-s-1}|\Y\psi|^2,
 \end{align}
\end{subequations}
 for $r\in (r_-,r_2]$, where $r_2$ is chosen depending on $p,s$ such that all the functions $-\mu r\lmu$, $-\mu r(\R)^p\lmu^{s+1}$, $(\R)^{-p}\lmu^{-s}$, $r^{-2}(\R)^{-p}\lmu^{-s}$, $-r^{-1}\mu \lmu^{s+1}$ and $r^{-1}(\R)^{-p+1}\lmu^{-s}$ are  bounded by $1$ in $(r_-,r_2]$. 
 We choose $\epsilon_1=\half C_1 s$, $\epsilon_2=\frac{C_1}{4}$ and $\epsilon_3=\min\{\frac{C_1 r_-^2}{8|a|r_+(r_+^2+a^2)},\frac{C_1 s}{2|a|}\}$, then the terms in \eqref{pf:2:111} are absorbed by the terms in \eqref{pf:2:11} if taking $p$ sufficiently large and choosing $\rblue\in (r_-,\min\{r_1,r_2\}]$ such that $\frac{-2\mu\lmu}{\epsilon_2 r}\leq \frac{1}{4}C_1 s$ for $r\in(r_-,\rblue]$. This yields \eqref{estimate:spacetimeterm:2}.

For $(\tu_1, \tub_1)\in \II$, integrating  \eqref{eq:div:fullfield-2} in $\{(u,\ub)| \tu\leq \tu_1,\ \tub\leq \tub_1\}\cap \{r_-<r<\rblue\}$ with respect to $\di u \di\ub$, we have
\begin{align}\label{energy:identity:blue:1}
  &\iint\limits_{\{\tu_{\rblue}(\tub_1)\leq\tu\leq\tu_1, \tub=\tub_1\}}\mathbb{T}_{\tub}[\psi]\dSp\di u+\iint\limits_{\{\tu=\tu_1, \tub_{\rblue}(\tu_1)\leq \tub\leq \tub_1\}}\mathbb{T}_{\tu}[\psi]\dSp\di \ub\notag\\
  &+\iint\limits_{\{\tu\leq\tu_1,  \tub\leq \tub_1\}\cap\{r_-<r<\rblue\}}\mathcal{B}[\psi]\dSp\di u\di\ub =\iint\limits_{\{r=\rblue\}\cap\{\tub_{\rblue}(\tu_1)\leq \tub\leq \tub_1\}}\mathbb{T}_{r}[\psi]\dSp\di\ub,
\end{align}
where $\Entub=(1-\half\mu)\fv[\psi]-\half\mu\fu[\psi]$, $\Enr=-\half\mu (\fv[\psi]+\fu[\psi])$ and $\mathbb{T}_{\tu}[\psi]=(1-\half\mu)\fu[\psi]-\half\mu\fv[\psi]$. 
Now we consider the boundary terms. Substituting the following inequalities
\begin{subequations}
\begin{align}
 |a\mu \sin\theta e_2(\psi) f(r)\Y\psi|&\leq a^2 f(r)|\Y\psi|^2+\frac{1}{4}\mu^2 f(r)(e_2(\psi))^2\\
 |a\mu \sin\theta e_2(\psi) g(r)\pv\psi|&\leq -\frac{3}{4}a^2\mu g(r)|\pv\psi|^2-\frac{1}{3}\mu g(r)(e_2(\psi))^2
\end{align}
\end{subequations}
into $\Entub$, we obtain
\begin{align}
 \Entub &\thicksim \lmu^{-s}|\Y\psi|^2-\mu\bigl(|\pv\psi|^2+(e_1(\psi))^2+(e_2(\psi))^2\bigr).
\end{align}
Similarly, we also have
\begin{align}
 \mathbb{T}_{\tu}[\psi] &\thicksim 
  |\pv\psi|^2-\mu\lmu^{-s}\bigl(|\Y\psi|^2+(e_1(\psi))^2+(e_2(\psi))^2\bigr).
\end{align}
We fix $s=\half$ in the above equality \eqref{energy:identity:blue:1}. Consequently, we get
\begin{align}\label{energy:decacy:full:blue:v:11}
  &\iint\limits_{\{\tu\leq\tu_1, \tub=\tub_1\}\cap\II}\Enw^{(-\half)}[\Lxi^j\psi]\dSp\di u+\iint\limits_{\{\tu=\tu_1,\ \tub\geq \tub_{\rblue}(\tu_1)\}}\Ende^{(-\half)}[\Lxi^j\psi]\dSp\di \ub\notag\\
&\quad +\iint\limits_{\{\tu\leq\tu_1\}\cap\{r_-<r<\rblue\}}\Enw^{(-\frac{3}{2})}[\Lxi^j\psi]\dSp\di u\di\ub 
  \lesssim \ (\max\{\tub_{\rblue}(\tu_1),1\})^{-7-2j},
\end{align}
in which we have used \eqref{energy:decay:red:spacetime-esti} for $\tub_{\rblue}(\tu_1)\geq 1$ or Remark \ref{remark:boundedness:rblue} for $\tub_{\rblue}(\tu_1)<1$ to control the boundary terms on $r=\rblue$. To show \eqref{energy:decacy:full:blue:v}, it suffices to improve the spacetime estimate of $\pv\psi$. In fact, one can take $f(r)=\lmu^{-\half}$ and $g(r)=\lmu^{-\half}$ in \eqref{energy:identity:blue:1} and run again the above argument, then \eqref{energy:decacy:full:blue:v} follows. Moreover,  taking $\tu_1\to+\infty$ (and thus $\tub_{\rblue}(\tu_1)\to -\infty$) in \eqref{energy:decacy:full:blue:v} and by Remark \ref{remark:boundedness:rblue} on the boundedness of the energy on ${r=\rblue}$, we get the estimate \eqref{energy:bound:blue:IniEnergyplus}.

To prove \eqref{energy:decacy:full:blue:w}, we integrate  \eqref{eq:div:fullfield-2} in $\{(u,\ub)| \tub_1\leq \tub\leq \tub_2\}\cap \IIf$ with respect to $\di u \di\ub$ and obtain
\begin{align}\label{energy:identity:blue:IIf}
  &\iint\limits_{\{ \tub=\tub_2\}\cap\IIf}\mathbb{T}_{\tub}[\psi]\dSp\di u+\iint\limits_{\{\tub_1\leq \tub\leq \tub_2\}\cap\Gamma}\mathbb{T}_{\Gamma}[\psi]\dSp\di \ub
  +\iint\limits_{\{\tub_1\leq \tub\leq \tub_2\}\cap\IIf}\mathcal{B}[\psi]\dSp\di u\di\ub \notag\\
  =&\iint\limits_{\{ \tub=\tub_1\}\cap\IIf}\mathbb{T}_{\tub}[\psi]\dSp\di u
  +\iint\limits_{\{r=\rblue\}\cap\{\tub_1\leq \tub\leq \tub_2\}}\mathbb{T}_{r}[\psi]\dSp\di\ub,
\end{align}
where $\Entub=(1-\half\mu)\fv[\psi]-\half\mu\fu[\psi]$, $\Enr=-\half\mu (\fv[\psi]+\fu[\psi])$ and $\mathbb{T}_{\Gamma}[\psi]=\fu[\psi]+(1-\gamma \ub^{\gamma-1})\fv[\psi]$. We note that $\mathbb{T}_{\Gamma}[\psi]\gtrsim f(r)|\Y\psi|^2 +g(r)|\pv\psi|^2- \mu (g(r)+f(r)) \big[(e_1(\psi))^2+(e_2(\psi))^2\big]$ for $\gamma>0$ sufficiently small.

The same argument as in proving \eqref{energy:decacy:full:blue:v} applies and, using \eqref{energy:decay:red:spacetime-esti} to control the boundary term on $\{r=\rblue\}$, we achieve, for any $s>0$, 
 \begin{align}\label{proof:energy:inequality}
  &\iint\limits_{\{ \tub=\tub_2\}\cap\IIf}\Enw^{(-s)}[\Lxi^j\psi]\dSp\di u+\iint\limits_{\{ \tub_1\leq \tub\leq \tub_2\}\cap\IIf}\Enw^{(-s-1)}[\Lxi^j\psi]\dSp\di u\di\ub \notag\\
  \lesssim &\iint\limits_{\{ \tub=\tub_1\}\cap\IIf}\Enw^{(-s)}[\Lxi^j\psi]\dSp\di u+\tub_1^{-8-2j}+\int_{\tub_1}^{\tub_2}\tub^{-8-2j}\di \tub.
\end{align}
By definition, one finds 
\begin{align}\Enw^{(-s-1)}[\Lxi^j\psi]\geq \lmu^{-1}\Enw^{(-s)}[\Lxi^j\psi]\sim (r^*)^{-1}\Enw^{(-s)}[\Lxi^j\psi]\geq \ub^{-\gamma}\Enw^{(-s)}[\Lxi^j\psi]\sim \tub^{-\gamma}\Enw^{(-s)}[\Lxi^j\psi]
\end{align} in $\IIf$. Hence, 
by applying  Lemma \ref{lemma:mean:decay} with $\alpha=\gamma$ to \eqref{proof:energy:inequality} with $s=\half$, we obtain
\eqref{energy:decacy:full:blue:w}. 
 \end{proof}

 Next, we derive  the energy decay estimates for both $\psi_{\ell=0}$ and $\psi_{\ell\geq 1}$  in  region $\II$.
\begin{prop}
  Under the assumed estimates \eqref{ass:thm:horizon:psi} in Theorem \ref{thm:main}, we have for any $(u_1,\ub_1)\in \II$,
  \begin{align}\label{energy:decay:blue:zeromode}
    &\int_{\ub_{\rblue}(u_1)}^{+\infty}|\pv\Lxi^j\psiz|^2(u_1,\ub)\di \ub 
    +\int_{u_{\rblue}(\ub_1)}^{u_1}\lmu^{-\half}|\pu\Lxi^j\psiz|^2(u,\ub_1)\di u \notag\\
   &+\int_{\ub_{\rblue}(u_1)}^{+\infty}\int_{u_{\rblue}(\ub)}^{u_1}\Big(\lmu^{-\frac{3}{2}}|\pu\Lxi^j\psiz|^2+\lmu^{-\frac{3}{2}}|\pv\Lxi^j\psiz|^2\Big)(u,\ub)\di u\di \ub \notag\\
  & \lesssim (\max\{\ub_{\rblue}(u_1),1\})^{-7-2j},
  \end{align}
  and, for any $\tub_1\geq 1$, $k_1\geq 0$ and $k_2\geq 0$,
  \begin{subequations}
  \begin{align}\label{energy:decay:blue:higher:ub}
    \iint\limits_{\{ \tub=\tub_1\}\cap\IIf} \Enw^{(-\half)}[\Lxi^j\Leta^{k_1}\Ckpsilarge] \dSp\di u &\lesssim \tub_1^{-8-2j-2\delta+\gamma},\\
    \iint\limits_{\{ \tub=\tub_1\}\cap\II} \Enw^{(-\half)}[\Lxi^j\Leta^{k_1}\Ckpsilarge] \dSp\di u &\lesssim 1.\label{energy:bound:blue:higher:ub}
  \end{align}
\end{subequations}
\end{prop}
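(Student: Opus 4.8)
The plan is to run the same multiplier-and-energy scheme that produced the full-field bounds \eqref{energy:decacy:full:blue:v}--\eqref{energy:decacy:full:blue:w}, but now applied separately to each mode equation, taking as data on the lateral boundary $\{r=\rblue\}$ the decay already established in the red-shift region (Lemma \ref{Lemma:mode:red}). Since the $\ell=0$ equation \eqref{eq-wave-doublenull-mode:=0} and the $\ell\geq 1$ equation \eqref{eq-wave-doublenull-mode} share their principal part with \eqref{eq-wave-doublenull-2}, the coercivity of the bulk term \eqref{estimate:spacetimeterm:2} for the multiplier $X=f(r)\Y+g(r)\pv$ with $f=\lmu^{-s}$, $g=(\R)^p$ is inherited verbatim; the only new feature is the presence of inhomogeneous sources built from the commutators $[\PJ_{\ell=0},\sin^2\theta](\Lxi^2\psi)$ and $[\PJ_{\ell\geq 1},\sin^2\theta](\Lxi^2\psi)$, and the whole point is to show these are of strictly faster decay, hence harmless.

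First, for the spherically symmetric mode I would insert $X(\psiz)=f(r)\Y\psiz+g(r)\pv\psiz$ into the energy identity \eqref{eq:div:fullfield-2} associated with \eqref{eq-wave-doublenull-mode:=0}. Because $\Leta\psiz=\Delta_{\Sphere}\psiz=0$, the identity reduces to a flux balance in $(u,\ub)$, and both the left-hand $\mu a^2\sin^2\theta\,\Lxi^2\psiz$ term and the right-hand commutator source are lower order: using $e_2=\sin^{-1}\theta\,\Leta+a\sin\theta\,\Lxi$ each reduces to a frame derivative of $\Lxi\psi$ carrying an extra power of decay, so after a Cauchy--Schwarz split (the coercive part absorbed into \eqref{estimate:spacetimeterm:2}) they are dominated by the full-field spacetime estimate \eqref{energy:decacy:full:blue:v} applied to $\Lxi\psi$, which decays faster than the target rate. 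Feeding in the boundary flux on $\{r=\rblue\}$ through \eqref{energy:decay:redshift-region:zeromode} when $\ub_{\rblue}(u_1)\geq 1$ and through Remark \ref{remark:boundedness:rblue} otherwise, and integrating the resulting inequality in $\ub$---which costs exactly one power and converts the pointwise rate $\ub^{-8-2j}$ into $(\max\{\ub_{\rblue}(u_1),1\})^{-7-2j}$---yields \eqref{energy:decay:blue:zeromode}; the sharpened $\lmu^{-3/2}$ weight on $\pv\psiz$ in the spacetime term is obtained, exactly as for \eqref{energy:decacy:full:blue:v}, by re-running the argument with $f=g=\lmu^{-1/2}$.

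For the higher modes I would repeat the scheme on \eqref{eq-wave-doublenull-mode}; since $\Carter$ and $\Leta$ commute with the wave operator, it suffices to treat $\Lxi^j\Leta^{k_1}\Ckpsilarge$ and reduce to $k_1=k_2=0$. The commutator source is again bounded, via \eqref{energy:decacy:full:blue:v} applied to $\Lxi\psi$, by a faster-decaying quantity, while the lateral data on $\{r=\rblue\}$ is now supplied by the $\delta$-improved red-shift bound \eqref{energy:decay:redshift-region:highermode}, so that the extra factor $\tub^{-2\delta}$ propagates into the exponents. The global bound \eqref{energy:bound:blue:higher:ub} follows by sending the lower time-limit to $-\infty$ and invoking Remark \ref{remark:boundedness:rblue}, exactly as \eqref{energy:bound:blue:IniEnergyplus} was deduced. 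To obtain the refined rate $\tub_1^{-8-2j-2\delta+\gamma}$ on $\IIf$ in \eqref{energy:decay:blue:higher:ub}, I would localize the energy identity to $\{\tub_1\leq\tub\leq\tub_2\}\cap\IIf$ with the flux across $\Gamma$ treated as in the proof of \eqref{energy:decacy:full:blue:w}, arrive at an integrated inequality of the form \eqref{proof:energy:inequality}, and close it with Lemma \ref{lemma:mean:decay} at $\alpha=\gamma$, using $\lmu^{-1}\gtrsim\ub^{-\gamma}$ in $\IIf$.

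The main obstacle I anticipate is not the positivity of $\mathcal{B}$, which transfers unchanged from the preceding proposition, but the careful accounting of the commutator (and $\sin^2\theta\,\Lxi^2$) sources across the three inputs one must splice together: the mode energy identity, the full-field estimate \eqref{energy:decacy:full:blue:v}, and the mean-value Lemma \ref{lemma:mean:decay}. One has to verify that after Cauchy--Schwarz the residual source integrals are genuinely governed by already-established spacetime energies---tracking, in particular, that the $\mu^2$ from the source cancels the $\mu^{-2}$ hidden in $e_3=-\mu^{-1}\Y$ and that the logarithmic weights $\lmu^{\pm s}$ match up so that nothing is lost---and that the $\IIf$-gain $\lmu^{-1}\gtrsim\ub^{-\gamma}$ is inserted correctly before the mean-value lemma is applied. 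This weight bookkeeping, rather than any single estimate, is the delicate part.
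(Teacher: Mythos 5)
Your proposal is correct and follows essentially the same route as the paper, whose entire proof of this proposition is the remark that it coincides with the proof of Lemma \ref{Lemma:mode:red}: the same multiplier scheme run on the mode equations \eqref{eq-wave-doublenull-mode:=0} and \eqref{eq-wave-doublenull-mode}, with the commutator sources $[\PJ_{\ell=0},\sin^2\theta](\Lxi^2\psi)$ and $[\PJ_{\ell\geq 1},\sin^2\theta](\Lxi^2\psi)$ absorbed via Cauchy--Schwarz against the faster-decaying full-field spacetime estimates, lateral data on $\{r=\rblue\}$ supplied by the red-shift-region mode estimates, and Lemma \ref{lemma:mean:decay} with $\alpha=\gamma$ closing the refined bound in $\IIf$. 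Your weight bookkeeping (including the $f=g=\lmu^{-1/2}$ rerun for the sharpened $\lmu^{-3/2}$ spacetime weight and the limit argument for \eqref{energy:bound:blue:higher:ub}) matches the paper's intended argument, so there is nothing to correct.
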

\begin{proof}
  The proof is the same as the one for Lemma \ref{Lemma:mode:red}, and we omit it.
\end{proof}

Finally, we derive the boundedness estimate for $\Lxi^j\psi$ by some Hardy-type argument.

\begin{cor}
  Under the assumed estimates \eqref{ass:thm:horizon:psi} in Theorem \ref{thm:main}, we have 
 \begin{subequations}
    \begin{align}\label{energy:decay:blue:psiz:nearHor}
       \int_{u_{\rblue}(\ub_1)}^{+\infty} \lmu^{-\frac{5}{2}}|\Lxi^j\psiz|^2(u,\ub_1)\di u \lesssim {}&1,\\
\label{energy:decay:blue:psi:nearHor}
    \iint\limits_{\{\ub=\ub_1\}\cap\II}\lmu^{-\frac{5}{2}}|\Lxi^j\psi|^2\dSp\di u\lesssim {}&1.
  \end{align}
  \end{subequations}
  \end{cor}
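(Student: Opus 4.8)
The plan is to deduce \eqref{energy:decay:blue:psiz:nearHor}--\eqref{energy:decay:blue:psi:nearHor} from the first-order fluxes already established, through a weighted one-dimensional Hardy inequality in which the logarithmic weight $\lmu$ plays the role of the radial variable, thereby trading the $\lmu^{-\half}$-weighted control of a transversal derivative for the $\lmu^{-\frac52}$-weighted control of the function. The key geometric fact is that along any curve on which $\ub$ is constant, $r$ is a function of $u$ alone and $\lmu$ is strictly increasing in $u$, with $\frac{\di\lmu}{\di u}=-\half\partial_r\mu=\frac{M(a^2-r^2)}{(\R)^2}$; since $r_-<a<r_+$ (because $r_-r_+=a^2$), this derivative lies between two positive constants on $\{r\le r_\sharp\}$ for any fixed $r_\sharp\in(r_-,\min\{a,\rblue\})$, while on the complementary strip $\{r_\sharp\le r\le\rblue\}$ the weight $\lmu$ is comparable to a constant and the bound is immediate from Section \ref{sect:stability}.

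On $\{r\le r_\sharp\}$ the Hardy step goes as follows. Since $\lmu^{-\frac52}\lesssim-\frac{\di}{\di u}(\lmu^{-\frac32})$, for a real $\varphi(u)$ one integrates by parts on $[u_0,U]$:
\begin{align*}
 \int_{u_0}^{U}\lmu^{-\frac52}|\varphi|^2\di u
 \lesssim{}&\Bigl[-\lmu^{-\frac32}|\varphi|^2\Bigr]_{u_0}^{U}+\int_{u_0}^{U}\lmu^{-\frac32}\,\varphi\,\partial_u\varphi\,\di u\\
 \lesssim{}&\lmu^{-\frac32}(u_0)\,|\varphi(u_0)|^2+\Bigl(\int_{u_0}^{U}\lmu^{-\frac52}|\varphi|^2\di u\Bigr)^{\half}\Bigl(\int_{u_0}^{U}\lmu^{-\half}|\partial_u\varphi|^2\di u\Bigr)^{\half}.
\end{align*}
Crucially, the upper boundary term $-\lmu^{-\frac32}(U)|\varphi(U)|^2\le0$ has a favourable sign and may be discarded, so no a priori boundedness of $\psi$ at the Cauchy horizon is required; absorbing the cross term and letting $U\to+\infty$ gives $\int_{u_0}^{+\infty}\lmu^{-\frac52}|\varphi|^2\di u\lesssim\lmu^{-\frac32}(u_0)|\varphi(u_0)|^2+\int_{u_0}^{+\infty}\lmu^{-\half}|\partial_u\varphi|^2\di u$, with $u_0=u_{\rblue}(\ub_1)$.

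For the spherical mean this closes at once: taking $\varphi=\Lxi^j\psiz$ on $\{\ub=\ub_1\}$, so $\partial_u\varphi=\pu\Lxi^j\psiz=\Y\Lxi^j\psiz$, the derivative integral is $\lesssim1$ by \eqref{energy:decay:blue:zeromode} (after $u_1\to+\infty$, where $\ub_{\rblue}(u_1)\to-\infty$ and the right-hand side saturates at $1$), while the boundary term $|\Lxi^j\psiz(u_0,\ub_1)|^2\lesssim1$ by the estimate \eqref{eq:main:ptw:I:psi} of Theorem \ref{thm:regionI} on $\{r=\rblue\}\cap\{\ub\ge1\}$; this is \eqref{energy:decay:blue:psiz:nearHor}. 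For the full field I run the same one-dimensional inequality along the integral curves of $\Y=\pu-\frac{a}{\R}\Leta$, on which $\ub$ and $\th$ are constant, only $\phiout$ drifts, and $\partial_u$ acts as $\Y$; integrating over the sphere and using the $\phi$-rotation invariance of $\dSp$ (unit Jacobian) converts the curvewise estimate into \eqref{energy:decay:blue:psi:nearHor}, with boundary term $\int_{\{r=\rblue\}\cap\{\ub=\ub_1\}}|\Lxi^j\psi|^2\dSp\lesssim1$ from Theorem \ref{thm:regionI} and Remark \ref{remark:boundedness:rblue}. This reduces the full-field estimate to the constant-$\ub$ first-order flux $\iint_{\{\ub=\ub_1\}\cap\II}\lmu^{-\half}|\Y\Lxi^j\psi|^2\dSp\di u\lesssim1$.

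The main obstacle is supplying this last flux, since the hypersurfaces $\{\ub=\ub_1\}$ are \emph{timelike} and the associated energy flux is not sign-definite --- by contrast the spherical-mode flux $\fv[\Lxi^j\psiz]=2(\R)\lmu^{-\half}|\pu\Lxi^j\psiz|^2$ is manifestly nonnegative, which is exactly why the $\ell=0$ case needed no extra work. I would obtain it by integrating the energy identity \eqref{eq:div:fullfield-2} over the region bounded by $\{\ub=\ub_1\}$, $\{r=\rblue\}$ and $\CHorizon$, where the spacetime term $\mathcal{B}$ is coercive, the flux through the null hypersurface $\CHorizon$ is nonnegative, and the $\{r=\rblue\}$ boundary is controlled by Section \ref{sect:stability}. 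The sole non-coercive contribution is a $-\mu|\pv\Lxi^j\psi|^2$ term forced by the timelike slice; writing $\pv=\Lxi+\Y+\frac{a}{\R}\Leta$ and using the exponential smallness $-\mu\lesssim\lmu^{-\frac52}$, it is reabsorbed by $\lmu^{-\half}|\Y\Lxi^j\psi|^2$ together with the estimates at one higher order in $\Lxi$ and $\Leta$ supplied by the spacelike-slice and bulk bounds \eqref{energy:bound:blue:IniEnergyplus}. Handling this timelike-slice flux is the crux; the remainder is the routine Hardy mechanism above.
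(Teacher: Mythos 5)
Your treatment of the spherical mean \eqref{energy:decay:blue:psiz:nearHor} is essentially correct, though it runs the Hardy inequality in a direction transverse to the paper's: you integrate by parts in $u$ \emph{along} the slice $\{\ub=\ub_1\}$ and feed in the slice flux $\int\lmu^{-\half}|\pu\Lxi^j\psiz|^2\di u\lesssim 1$ (the second term of \eqref{energy:decay:blue:zeromode}, after letting $u_1\to+\infty$), whereas the paper integrates the transport identity $\pv\bigl(\lmu^{-\frac{5}{2}}|\Lxi^j\psiz|^2\bigr)$ over the two-dimensional region $\{u\leq u_1,\ub\leq\ub_1\}\cap\II$, so that the wanted slice integral appears as the future boundary term and the only derivative input is the \emph{bulk} term $\iint\lmu^{-\frac{3}{2}}|\pv\Lxi^j\psiz|^2\di u\di\ub$ of the same proposition. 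Both inputs are available, and both arguments use the blue-shift sign $\partial_r\mu<0$, so for $\ell=0$ your route is a legitimate alternative. Two small repairs: your final display takes $u_0=u_{\rblue}(\ub_1)$, but your monotonicity $\pu\lmu\gtrsim 1$ holds only on $\{r\leq r_\sharp\}$ with $r_\sharp<\abs{a}$, so the Hardy step must start at $u_{r_\sharp}(\ub_1)$; and on the strip $\{r_\sharp\leq r\leq\rblue\}$ the bound is not literally "from Section \ref{sect:stability}" (the strip lies in $\II$) but follows from the $r=\rblue$ data of Theorem \ref{thm:regionI} together with the flux in \eqref{energy:decay:blue:zeromode}.

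The full-field estimate \eqref{energy:decay:blue:psi:nearHor} is where the proposal genuinely breaks. Your Hardy step needs $\iint_{\{\ub=\ub_1\}\cap\II}\lmu^{-\half}|\Y\Lxi^j\psi|^2\dSp\di u\lesssim 1$, a first-order flux on a \emph{timelike} slice, and no such bound exists in the paper: all full-field fluxes in \eqref{energy:decacy:full:blue:v}--\eqref{energy:bound:blue:IniEnergyplus} live on the spacelike hypersurfaces $\{\tub=\mathrm{const}\}$, $\{\tu=\mathrm{const}\}$, $\{r=\rblue\}$, or are bulk integrals carrying the weaker weight $\lmu^{-\frac{3}{2}}$. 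Your proposed derivation cannot supply it: integrating \eqref{eq:div:fullfield-2} over the region bounded by $\{\ub=\ub_1\}$, $\{r=\rblue\}$ and $\CHorizon$ (i.e.\ $\{\ub\geq\ub_1\}\cap\II$) makes $\{\ub=\ub_1\}$ the \emph{past} boundary, so the identity reads (flux on $\CHorizon$) $+$ (flux on $\CHorizon'$) $+$ (bulk) $=$ (wanted flux) $+$ ($r=\rblue$ terms); this bounds the wanted flux from \emph{below}, and an upper bound would require upper bounds on the Cauchy-horizon fluxes, which is precisely the asymptotic information Sections \ref{sect:EnerDec:II}--\ref{sect:precise:II} are in the process of proving. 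Integrating instead over $\{1\leq\ub\leq\ub_1\}\cap\II$ puts the wanted flux on the future boundary, but the past boundary then contains the timelike slice $\{\ub=1\}\cap\II$, which runs into $\CHorizon'$ and carries an identical, equally uncontrolled flux. The absorption step has the same circularity: $\int_{\{\ub=\ub_1\}}(-\mu)|\Lxi^{j+1}\psi|^2\dSp\di u$ is not provided by \eqref{energy:bound:blue:IniEnergyplus} (spacelike-slice and bulk control only), and since $-\mu\lesssim\lmu^{-\frac{5}{2}}$ it is nothing but the corollary itself at one higher order in $\Lxi$. The fix is to drop the slice-wise Hardy for the full field and argue as the paper does for $\psiz$: integrate $\pv\bigl(\lmu^{-\frac{5}{2}}|\Lxi^j\psi|^2\bigr)$ over $\{u\leq u_1,\ub\leq\ub_1\}\cap\II$ against $\dSp\di u\di\ub$; the needed derivative input is then only $\iint\lmu^{-\frac{3}{2}}|\pv\Lxi^j\psi|^2\dSp\di u\di\ub\lesssim 1$, which is contained in the $\Endeh^{(-\frac{3}{2})}$ bulk of \eqref{energy:bound:blue:IniEnergyplus}, and the $r=\rblue$ boundary is controlled by \eqref{eq:main:ptw:I:psi}.
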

  
  \begin{proof}
    A direct calculation yields
    \begin{align}
      \pv\big(\lmu^{-p}(\Lxi^j\psiz)^2\big)={}& \half p\partial_r\mu\lmu^{-p-1}|\Lxi^j\psiz|^2+2\lmu^{-p}\Lxi^j\psiz \pv\Lxi^j\psiz\notag\\
      \leq {}&\half p\partial_r\mu\lmu^{-p-1}|\Lxi^j\psiz|^2
      +\delta \lmu^{-p-1}|\Lxi^j\psiz|^2\notag\\
      &+\frac{1}{\delta}\lmu^{-p+1}|\pv\Lxi^j\psiz|^2 .
    \end{align}
 We integrate the above inequality in $\{(u,\ub)|u\leq u_1, \ub\leq\ub_1\}\cap \{r_-<r<\rblue\}$,  take $\delta$ sufficiently small and obtain for $p>1$ that 
    \begin{align}
      &\int_{u_{\rblue}(\ub_1)}^{u_1}\lmu^{-p}\abs{\Lxi^j\psiz}^2(u,\ub_1)\di u 
      +\int_{\ub_{\rblue}(u_1)}^{\ub_1}\int_{u_{\rblue}(\ub)}^{u_1}\lmu^{-p-1}\abs{\Lxi^j\psiz}^2(u,\ub)\di u \di \ub\notag\\
      \lesssim &{}\int_{\ub_{\rblue}(u_1)}^{\ub_1}\int_{u_{\rblue}(\ub)}^{u_1}\lmu^{-p+1}\abs{\pv\Lxi^j\psiz}^2(u,\ub)\di u \di \ub+\int_{\ub_{\rblue}(u_1)}^{\ub_1}\lmu^{-p}\abs{\Lxi^j\psiz}^2(2\rblue^*-\ub,\ub)\di \ub.
    \end{align}
By \eqref{energy:decay:blue:zeromode}, the first term in the second line is bounded by $(\max\{\ub_{\rblue}(u_1),1\})^{-7-2j}$  for $p=\frac{5}{2}$. Then \eqref{energy:decay:blue:psiz:nearHor}  holds by the estimates on $r=\rblue$.

    The proof of \eqref{energy:decay:blue:psi:nearHor} is similar, so we omit it.
  \end{proof}

%%%%%%%%%%%%%%%%%
\subsection{Precise late-time asymptotics in region $\II$} 
\label{sect:precise:II}
%%%%%%%%%%%%%%%%%

We derive the precise late-time asymptotics for the scalar field and its derivatives in the subregion $\IIf$ of $\II$ and in the entire $\II$ in Sections \ref{sect:IIf} and \ref{sect:IImoduloIIf}, respectively.

  %%%%%%%%%%%%%%%%%%%%%%%%%%%%%%%

%%%%%%%%%%%%%%%%%
\subsubsection{Estimates in $\IIf$}
\label{sect:IIf}
%%%%%%%%%%%%%%%%%

  {\it Step 1. Computing the asymptotics of $\Lxi^j\psi$ in $\IIf$.}  For any $(u,\ub)\in \IIf$ and $\ub\gg 1$, we have
  \begin{align}\label{proof:2:iif}
    0\leq \ub-\ub_{\rblue}(u)\leq \ub^{\gamma}-2\rblue^*,\quad 0\leq u-u_{\rblue}(\ub)\leq \ub^{\gamma}-2\rblue^* , \ \ 1+\frac{2\rblue^*}{\ub}-\ub^{\gamma-1}\leq \frac{\ub_{\rblue}(u)}{\ub}\leq 1.
  \end{align}
  In fact, the definition of $\IIf$ implies $2\rblue^*=\ub_{\rblue}(u)+u \leq \ub_{\rblue}(u)-\ub+\ub^{\gamma}$ and $2\rblue^*=u_{\rblue}(\ub)+\ub \leq u_{\rblue}(\ub)-u+\ub^{\gamma}$, then the above inequalities are straightforward.

  Integrating $\pv\Lxi^j\psiz$ along $\ub=constant$ from $\{r=\rblue\}$, we derive
  \begin{align}
    |\Lxi^j\psiz(u,\ub)-\Lxi^j\psiz(u,\ub_{\rblue}(u))|&=\bigg|\int_{\ub_{\rblue}(u)}^{\ub} \pv\Lxi^j\psiz(u,\ub')\di \ub'\bigg|\notag\\
    &\lesssim (\ub-\ub_{\rblue}(u))^\half
    \biggl(\int_{\ub_{\rblue}(u)}^{\ub} |\pv\Lxi^j\psiz|^2(u,\ub')\di \ub'\biggr)^\half\notag\\
    &\lesssim (\ub^{\gamma}-2\rblue^*)^\half (\ub_{\rblue}(u))^{-\frac{7}{2}-j}\notag\\
    &\lesssim (\ub_{\rblue}(u))^{-\frac{7}{2}-j+\half \gamma}\lesssim \ub^{-\frac{7}{2}-j+\half \gamma},
  \end{align}
  where we have used \eqref{energy:decay:blue:zeromode} in the third step.
  This estimate, the estimate \eqref{proof:1:234}, the second inequality in \eqref{proof:2:iif} and the assumption on event horizon together yield
  \begin{align}\label{pointwise:decay:psiz:blue:away}
    |\Lxi^j\psiz(u,\ub)-c_0\Lxi^j(\ub^{-3}))|\lesssim \ub^{-3-j-\epsilon}\ \ \text{for}\ (u,\ub)\in\IIf.
  \end{align}

  In coordinates $(u,\ub,\phiout,\th)$, for any $\tub_1$ and $(\phiin)_1$, we define a curve $\gamma_{\tub_1,(\phiin)_1}(u): u\mapsto (u,\ub(u), \phiout(u),\th(u))$ with parameter $u$, satisfying that $\ub(u)-r(u,\ub(u))+r_+=\tub_1$, $\phiout(u)+2\rmod(u,\ub(u))=(\phiin)_1$ and $\th(u)=const$.

  For $\Lxi^i\psih$, we integrate $\pu(\Lxi^j\psih\big|_{\Cur})=(\Y\Lxi^j\psih+\half\mu\pv\Lxi^j\psih)\big|_{\Cur}$ along $\Cur$ from $\{r=\rblue\}$ and obtain
  \begin{align}
    &|\Lxi^j\psih(u,\tub_1+r,(\phiin)_1-2\rmod,\th)-\Lxi^j\psih(u_{\rblue}(\tub_1),\tub_1+r,(\phiin)_1-2\rmod,\th)|\notag\\
    \lesssim {}&\biggl(\int_{u_{\rblue}(\tub_1)}^u\lmu^{\half}(u',\tub_1+r)\di u'\biggr)^\half\notag\\
    &\quad \times \biggl(\int_{u_{\rblue}(\tub_1)}^{u} \Big(\lmu^{-\half}\big(|\Y\Lxi^j\psih|^2+\mu^2|\pv\Lxi^j\psih|^2\big)\Big)\Big|_{\Cur(u')}\di u'\biggr)^\half\notag\\
    \lesssim {}&\tub_1^{\frac{3}{4}\gamma}
    \biggl( \iint\limits_{\tub=\tub_1, u_{\rblue}(\tub_1)\leq u'\leq u}\Big\{\lmu^{-\half}\big(|\Y\Lxi^{j+2}\psi|^2+\mu^2|\pv\Lxi^{j+2}\psi|^2\big)\notag\\
    &\qquad\qquad\qquad
    +\sum_{k\leq1}\Big(\lmu^{-\half}\big(|\Y\Lxi^j(\Carter^k\psi)_{\ell\geq 1}|^2+\mu^2|\pv\Lxi^j(\Carter^k\psi)_{\ell\geq 1}|^2\big)\Big)\Big\}\dSp\di u'\biggr)^\half\notag\\
    \lesssim {}&\tub_1^{\frac{3}{4}\gamma}
    \biggl( \iint\limits_{\tub=\tub_1, u_{\rblue}(\tub_1)\leq u'\leq u}\Big\{\sum_{k\leq1}\Enw^{(-\half)}[\Lxi^j(\Carter^k\psi)_{\ell\geq 1}]
    +\Enw^{(-\half)}[\Lxi^{j+2}\psi]\Big\}\dSp\di\ub'\biggr)^\half\notag\\
    \lesssim {}&  \tub_1^{\frac{3}{4}\gamma} \tub_1^{-4-j-\delta+\half\gamma}
    \lesssim {}\tub_1^{-4-j-\delta+\frac{5}{4}\gamma}\lesssim \tub_1^{-4-j-\epsilon }
  \end{align}
  for $\frac{5}{4}\gamma\leq \delta-\epsilon$,   where we have used
  \begin{align}
    \int_{u_{\rblue}(\tub_1)}^u\lmu^{\half}(u',\tub_1+r)\di u'
\lesssim \tub_1^{\half\gamma}(u-u_{\rblue}(\tub_1))\lesssim \tub_1^{\frac{3}{2}\gamma}
  \end{align}
  and  \eqref{SI:estimate:sph}
  to the second inequality, 
and \eqref{energy:decacy:full:blue:w} and \eqref{energy:decay:blue:higher:ub} to the third inequality.  This together with \eqref{eq:ellgeq1:highderi:I} then gives
  \begin{align}\label{pointwise:decay:highermode:blue:away}
    |\Lxi^j\Phi^{k_1}\Ckpsilarge(u,\ub, \phiout,\th)|\lesssim \ub^{-4-j-\epsilon}\ \ \text{in}\ \IIf.
  \end{align}
  
  In summary, in view of \eqref{pointwise:decay:psiz:blue:away} and \eqref{pointwise:decay:highermode:blue:away}, we conclude 
  \begin{align}\label{pointwise:decay:scalar:blue:away}
    |\Lxi^j\psi-c_0\Lxi^j(\ub^{-3}))|\lesssim \ub^{-3-j-\epsilon}\ \ \text{in}\ \IIf.
  \end{align}

  {\it Step 2. Computing the asymptotics of $\pv\psi$ in $\IIf$.}
Same to \eqref{eq:wave:ell=0:step2:I}, we rewrite the equation of $\psiz$ as
  \begin{align}
    \label{blue:eq:wave:ell=0:step1:I}
    \hspace{4ex}&\hspace{-4ex}
    \pv\biggl((\R) \pu\psiz+\half(\R)\Lxi\psiz\biggr)\notag\\
      ={}&\half(\R)\pv\Lxi\psiz-\frac{1}{4}a^2\mu \PJ_{\ell=0} (\sin^2\theta \Lxi^2\psi).
    \end{align}
Then $(\R) \pu\psiz+\half(\R)\Lxi\psiz$ can be calculated through integrating the  above equation from $\{r=\rblue\}$. The integral of $\pv\Lxi\psiz$ is bounded by $\ub^{-\frac{9}{2}+\half\gamma}\lesssim \ub^{-4-\epsilon}$ for $\gamma$ small enough, and the term $\mu\PJ_{\ell=0} (\sin^2\theta \Lxi^2\psi)$ is bounded by $\ub^{-5}$ from the pointwise decay estimate \eqref{pointwise:decay:scalar:blue:away}, hence the absolute value of the integral of the right-hand side of \eqref{blue:eq:wave:ell=0:step1:I} is bounded by $\ub^{-4-\epsilon}$. Using the precise asymptotics estimates \eqref{proof:I:ellz:872} for $\Lxi\psiz$ on $r=\rblue$ and \eqref{pointwise:decay:psiz:blue:away} for $\Lxi \psilarge$ in $\IIf$,  we arrive at
\begin{align}
  \bigg|\pu\Lxi^j\psiz(u,\ub)-\frac{3}{2}c_0\bigg(1-\frac{r_+^2+a^2}{\R}\bigg)\Lxi^j(\ub^{-4})\bigg|\lesssim \ub^{-4-j-\epsilon}\ \ \text{in}\ \IIf,
\end{align}
where the general $j\geq 0$ case follows since $\Lxi$ commutes with the wave operator. Here, we have also used on $r=\rblue$ that $\pu \Lxi^j \psiz =(\pv-\Lxi) \Lxi^j (\psiz) = \frac{3}{2}c_0(1-\frac{r_+^2+a^2}{\rblue^2+a^2})\Lxi^j(\ub^{-4})+O(\ub^{-4-j-\epsilon})$ from \eqref{proof:I:ellz:872} and \eqref{eq:precise:pvpsi:ell=0:I}.

Substituting \eqref{pointwise:decay:scalar:blue:away} with $j=1$ to the above inequality, we get
\begin{align}
  \label{eq:precise:pvpsi:ell=0:IIf}
    \bigg|\Lxi^j\big(\pv\psiz(u,\ub)\big)+\frac{3}{2}c_0\bigg(1+\frac{r_+^2+a^2}{\R}\bigg)\Lxi^j(\ub^{-4})\bigg|\lesssim \ub^{-4-\epsilon} \ \ \text{in}\ \IIf.
  \end{align}

Similarly, we integrate $\pu(\pv\psih\big|_{\Cur})=(\Y\pv\psih+\half\mu\pv^2\psih)\big|_{\Cur}$ along $\Cur$ in $\IIf$ from $\{r=\rblue\}$. Recall the wave equation satisfied by $\psih$, i.e.,
\begin{align}
\label{wave:high:nearCauchy:49}
  4(r^2+a^2)\Y \pv\psih+\mu(\Carter\psi)_{\ell\geq 1}
  +2a\mu\Lxi\Leta\psih
  +2\mu r(\Y\psih+ \pv\psih)
  +\frac{2ar\mu}{r^2+a^2}\Leta\psih=0.
\end{align}
This equation can be used to estimate the integral with respect to $\Y\pv\psih$ as follows:
\begin{align}\label{poof:2:2345}
  &\bigg|\int_{u_{\rblue}(\tub_1)}^{u} \Y\pv\psih\big|_{\Cur(u')}\di u'\bigg|\notag\\
  \lesssim {}&\biggl(\int_{u_{\rblue}(\tub_1)}^{u}-\mu\di u'\biggr)^\half\biggl(\int_{u_{\rblue}(\tub_1)}^{u}-\mu\big(|\Carter\psih|^2+|\Lxi^{\leq 1}\Leta\psih|^2+|\Y\psih|^2+|\pv\psih|^2)\big|_{\Cur}\di u'\biggr)^\half\notag\\
  \lesssim {}&(\rblue-r)^\half \biggl(\iint\limits_{\tub=\tub_1, u_{\rblue}(\tub_1)\leq u'\leq u}\big(\sum_{j\leq 1}\Enw^{(-\half)}[\Lxi^j(\Carter^{\leq 2}\psi)_{\ell\geq 1}]+\Enw^{(-\half)}[\Carter^{\leq 1}\Lxi\psi]\big)\dSp \di u'\biggr)^\half\notag\\
  \lesssim {}&\tub_1^{-4-\delta+\half \gamma}\lesssim \tub_1^{-4-\epsilon}\ \ \text{in}\ \IIf,
\end{align}
for $\half \gamma\leq \delta-\epsilon$, 
where we have used \eqref{SI:estimate:sph} for the second step and \eqref{energy:decacy:full:blue:w} and \eqref{energy:decay:blue:higher:ub} for the third step. On the other hand, $\mu\pv^2\psih$ can be expressed as $\mu(\Y\pv\psih+\frac{a}{\R}\pv\Leta\psih+\pv\Lxi\psih)$, hence one finds the integral of $\mu\pv^2\psih$ along  $\Cur$ is bounded by $\tub_1^{-4-\epsilon}$, which implies that the integral of $\pv\psih\big|_{\Cur}$  along $\Cur$ in $\IIf$ is bounded by $\ub^{-4-\eps}$. Further, since the absolute value of $\pv\psih$ on $\{r=\rblue\}$ is bounded by $\ub^{-4-\eps}$ from the estimate \eqref{eq:precise:e4psilarge:I:739}, we achieve $ |\pv\psih(u,\ub,\phiin,\th)|\lesssim \ub^{-4-\epsilon} \ \ \text{in}\ \IIf$. Commuting with $\Lxi$, we generalize it to the following estimate for any $j\geq 0$:
\begin{align}\label{pointwise:decay:higher:II}
  |\pv\Lxi^j\psih(u,\ub,\phiin,\th)|\lesssim \ub^{-4-j-\epsilon} \ \ \text{in}\ \IIf.
\end{align}

In the end, combining \eqref{eq:precise:pvpsi:ell=0:IIf} and \eqref{pointwise:decay:higher:II} together, and due to $e_4=\ptr_{\ub}$, we deduce
\begin{align}
  \label{eq:precise:pvpsi:scalar:IIf}
    \bigg|\Lxi^j\big(e_4\psi(u,\ub,\phiout,\th)\big)+\frac{3}{2}c_0\bigg(1+\frac{r_+^2+a^2}{\R}\bigg)\Lxi^j(\ub^{-4})\bigg|\lesssim \ub^{-4-j-\epsilon} \ \ \text{in}\ \IIf.
  \end{align}
  Again, the general $j\geq 0$ case is manifest since $\Lxi$ commutes with the wave operator.
  Meanwhile, by $\Y=\pv-\Lxi -\frac{a}{\R}\Leta$, and by $\abs{\Leta\psi}=\abs{\Leta(\psi_{\ell\geq 1})}\lesssim \ub^{-4-\epsilon}$ which follows from \eqref{pointwise:decay:highermode:blue:away}, one finds
\begin{align}
  \label{eq:precise:pupsi:full:IIf}
    \bigg|\Lxi^j\big(\Y\psi(u,\ub,\phiout,\th)\big)-\frac{3}{2}c_0\bigg(1-\frac{r_+^2+a^2}{r^2+a^2}\bigg)\Lxi^j(\ub^{-4})\bigg|\lesssim \ub^{-4-j-\epsilon}\ \ \text{in}\ \IIf.
\end{align}

%%%%%%%%%%%%%%%%%
\subsubsection{Estimates in region $\II$}
\label{sect:IImoduloIIf}
%%%%%%%%%%%%%%%%%

  {\it Step 1. Computing the asymptotics of $\pv\psi$.}
  Integrating equation
  \begin{align}
    \pu\bigl((\R)\pv\psiz\bigl)
  =\half \mu r\Lxi\psiz-\frac{1}{4}a^2\mu\PJ_{\ell=0} (\sin^2\theta \Lxi^2\psi)
  \end{align}
  along $\ub=\ub_1$ from $\Gamma$, we can show 
  \begin{align}\label{proof:3:decay:psiz}
    |(\R)\pv\psiz(u,\ub_1)-(\R)\pv\psiz(u_{\Gamma}(\ub_1),\ub_1)|\lesssim e^{-(1-\epsilon')|\kappa_-|\ub_1^\gamma}.
  \end{align}
  In fact, the error terms are controlled as follows:
  \begin{subequations}
  \begin{align}
    &\biggl|\int_{u_{\Gamma}(\ub)}^u\mu r\Lxi\psiz(u',\ub)\di u'\biggr|\notag\\
    \lesssim {}&\biggl(\int_{u_{\Gamma}(\ub)}^u\mu^2\lmu^{\frac{5}{2}}(u',\ub)\di u'\biggr)^\half\biggl(\int_{u_{\Gamma}(\ub)}^u\lmu^{-\frac{5}{2}}|\Lxi\psiz|^2(u',\ub)\di u'\biggr)^\half\notag\\
    \lesssim {}&e^{-(1-\epsilon')|\kappa_-|\ub^\gamma},\\
    &\biggl|\int_{u_{\Gamma}(\ub)}^u\mu \PJ_{\ell=0}(\sin^2\th \Lxi^2\psi)(u',\ub)\di u'\biggr|\notag\\
    \lesssim {}&\biggl(\int_{u_{\Gamma}(\ub)}^u\mu^2\lmu^{\frac{5}{2}}(u',\ub)\di u'\biggr)^\half \times \biggl(\int_{u_{\Gamma}(\ub)}^u\lmu^{-\frac{5}{2}}|\PJ_{\ell=0}(\sin^2\th \Lxi^2\psi)|^2(u',\ub)\di u'\biggr)^\half\notag\\
    \lesssim {}&e^{-(1-\epsilon')|\kappa_-|\ub^\gamma}\biggl(\iint\limits_{\ub=\ub_1, u_{\Gamma}(\ub_1)\leq u\leq u_1}\lmu^{-\frac{5}{2}}|\sin^2\th \Lxi^2\psi|^2\dSp\di u\biggr)^\half\notag\\
    \lesssim {}&e^{-(1-\epsilon')|\kappa_-|\ub^\gamma}\biggl(\iint\limits_{\ub=\ub_1, u_{\Gamma}(\ub_1)\leq u\leq u_1}\lmu^{-\frac{5}{2}}| \Lxi^2\psi|^2\dSp\di u\biggr)^\half\notag\\
    \lesssim {}&e^{-(1-\epsilon')|\kappa_-|\ub^\gamma},
  \end{align}
  \end{subequations}
  where we have used \eqref{energy:decay:blue:psiz:nearHor}, \eqref{energy:decay:blue:psi:nearHor} and 
  \begin{align}
    \int_{u_{\Gamma}(\ub)}^u\mu^2\lmu^{\frac{5}{2}}(u',\ub)\di u' \lesssim & \int_{r}^{r(u_{\Gamma}(\ub),\ub)}(r-r_-)|\log(r-r_-)|^{\frac{5}{2}}\di r\notag\\
    \lesssim & |\mu(u_{\Gamma}(\ub),\ub)|^{2-2\epsilon'}\lesssim e^{-(2-2\epsilon')|\kappa_-|\ub^\gamma},
  \end{align}
  for any $1>2\epsilon'>0$.

 Substituting \eqref{eq:precise:pvpsi:ell=0:IIf} to \eqref{proof:3:decay:psiz}, and using 
 \begin{align}
  |r^2(u,\ub)-r^2(u_{\gamma}(\ub),\ub)|\ub^{-4}\lesssim -\mu(u_{\gamma}(\ub),\ub) \ub^{-4}\lesssim e^{-|\kappa_-|\ub^\gamma}\ub^{-4},
 \end{align} 
  we obtain
 \begin{align}
  \label{eq:precise:pvpsi:ell=0:nearHor}
    \bigg|\pv\psiz(u,\ub)+\frac{3}{2}c_0\bigg(1+\frac{r_+^2+a^2}{\R}\bigg)\ub^{-4}\bigg|\lesssim \ub^{-4-\epsilon} \ \ \text{in}\ \II\backslash \IIf.
  \end{align}

Similarly,  for higher modes $\psih$, we integrate $\pu(\pv\psih\big|_{\Cur})=(\Y\pv\psih+\half\mu\pv^2\psih)\big|_{\Cur}$ along $\Cur$ from $\Gamma$. One can obtain from \eqref{wave:high:nearCauchy:49} that 
\begin{align}
  \bigg|\int_{u_{\Gamma}(\tub_1)}^{u} \Y\pv\psih\big|_{\Cur(u')}\di u'\bigg| \lesssim (r(u_{\Gamma}(\tub_1),\tub_1)-r)^\half\ \lesssim |\mu (u_{\Gamma}(\tub_1),\tub_1)|^\half\lesssim e^{-\half|\kappa_-|\ub^\gamma},
\end{align}
which  can be proved in a similar way as proving \eqref{poof:2:2345} by using the boundedness estimates \eqref{energy:bound:blue:higher:ub} and \eqref{energy:decacy:full:blue:v}. Further, the integral of $\mu\pv^2\psih$ is bounded by $e^{-\half|\kappa_-|\ub^\gamma}$. Combined with the estimate \eqref{pointwise:decay:higher:II} of $\pv\psih=e_4\psih$ on $\Gamma$, we achieve for any $k\geq 0$ and $j\geq 0$,
\begin{align}\label{pointwise:decay:higher:II:nearHor}
  |\pv\Leta^k\Lxi^j\psih(u,\ub)|\lesssim \ub^{-4-j-\epsilon} \ \ \text{in}\ \II\backslash \IIf,
\end{align}
where the general case with $k\geq 0$ and $j\geq 0$ holds by the simple fact that $\Phi$ and $\Lxi$ commute with the wave operator.

In the end, we conclude from  \eqref{eq:precise:pvpsi:scalar:IIf},  \eqref{eq:precise:pvpsi:ell=0:nearHor} and \eqref{pointwise:decay:higher:II:nearHor}  that 
 \begin{align}
    \label{eq:precise:pvpsi:scalar:nearHor}
      \bigg|\Lxi^j\big(e_4\psi(u,\ub,\phiout,\th)\big)+\frac{3}{2}c_0\bigg(1+\frac{r_+^2+a^2}{r^2+a^2}\bigg)\Lxi^j(\ub^{-4})\bigg|\lesssim \ub^{-4-j-\epsilon} \ \ \text{in}\ \II.
    \end{align}
    
    By the same argument in region $\II'$, we obtain
 \begin{align}
    \label{eq:precise:pvpsi:scalar:nearHor:LHS}
      \bigg|\Lxi^j\big(e_3'\psi(u,\ub,\phiin,\th)\big)+\frac{3}{2}c_0'\bigg(1+\frac{r_+^2+a^2}{r^2+a^2}\bigg)\Lxi^j(u^{-4})\bigg|\lesssim u^{-4-j-\epsilon} \ \ \text{in}\ \II'.
    \end{align}

  {\it Step 2. Stability on Cauchy horizon.}
Since the decay rate of $\pv\psi(u,\ub,\phiout,\th)$ in \eqref{eq:precise:pvpsi:scalar:nearHor} is integrable, we conclude
  \begin{align}\label{0000}
    \Psi(u,\phiout,\th)\doteq \lim\limits_{\ub\to +\infty}\psi(u,\ub,\phiout,\th) \ \ \text{exists and is finite}.
  \end{align}
  Integrating \eqref{eq:precise:pvpsi:scalar:nearHor} with respect to $\di \ub$, and using the following inequality
  \begin{align}
    &\biggl|\int_{\ub}^{+\infty}\bigg(1+\frac{r_+^2+a^2}{\R}\bigg)(\ub')^{-4}\di \ub'-\bigg(1+\frac{r_+^2+a^2}{r_-^2+a^2}\bigg)\int_{\ub}^{+\infty}(\ub')^{-4}\di \ub'\biggr|\notag\\
   \lesssim &\int_{\ub}^{+\infty}(-\mu)(\ub')^{-4}\di \ub'\lesssim -\mu(u,\ub)\ub^{-4},
  \end{align}
  we get 
  \begin{align}\label{stability:pointwise:scalar}
    \biggl|\psi(u,\ub,\phiout,\th)-\Psi(u,\phiout,\th)-\half c_0\bigg(1+\frac{r_+^2+a^2}{r^2+a^2}\bigg)\ub^{-3}\biggr|\lesssim \ub^{-3-\epsilon} \ \ \text{in}\ \II.
  \end{align}
  Similar to \eqref{0000}, we can also conclude that $\lim\limits_{\ub\to\infty}T\psi(u,\ub,\phiout,\th)$ exists. In view that  $\partial_{\ub}\psi$ converges to $0$ uniformly, and using $T=\partial_{\ub}-\partial_u$, this yields the uniform convergence of $\pu \psi$ as $\ub\to +\infty$. Together with the uniform convergence of $\psi$ to $\Psi$ as $\ub\to +\infty$, we obtain from the above that
   \begin{align}
  \lim_{\ub\to+\infty}\pu\psi(u,\ub,\phiout,\th)=\pu\Psi(u,\phiout,\th)=-\Lxi\Psi(u,\phiout,\th)\  \text{in}\ \II.
  \end{align}
  Combining this with \eqref{eq:precise:pvpsi:scalar:nearHor}, and in view that $\Lxi=\pv-\pu$, we obtain
  \begin{align}\label{stability:pointwise:scalar:Lxi}
    \biggl|\Lxi^j\psi(u,\ub,\phiout,\th)-\Lxi^j\Psi(u,\phiout,\th)-\half c_0\bigg(1+\frac{r_+^2+a^2}{r^2+a^2}\bigg)\Lxi^j(\ub^{-3})\biggr|\lesssim \ub^{-3-j-\epsilon} \ \ \text{in}\ \II .
  \end{align}
Here, the general $j\geq 0$ follows by an induction.

  {\it Step 3. Computing the asymptotics of $\Psi(u,\phiout,\th)$ for $u\leq 1$.} Recall the estimate \eqref{pointwise:decay:scalar:blue:away} on $\Gamma$:
  \begin{align}
    |\Lxi^j\psi(u_{\Gamma}(\ub),\ub,\phiout,\th)-c_0\Lxi^j(\ub^{-3})|\lesssim \ub^{-3-j-\epsilon}.
  \end{align}
  Substituting this inequality back into \eqref{stability:pointwise:scalar:Lxi}, we obtain
  \begin{align}
    \biggl|\Lxi^j\Psi(u_{\Gamma}(\ub),\phiout,\th)-\half c_0\bigg(1-\frac{r_+^2+a^2}{r_-^2+a^2}\bigg)\Lxi^j(\ub^{-3})\biggr|\lesssim \ub^{-3-j-\epsilon} .
  \end{align}
  The definition of the hypersurface $\Gamma$, on which $u_{\Gamma}(\ub)+\ub=2r^*(u_{\Gamma}(\ub),\ub)=\ub^{\gamma}$, implies $\frac{u_{\Gamma}(\ub)}{\ub}=-1+\ub^{\gamma-1}$. Thus, we get
  \begin{align}
  \label{eq:Lxij:Psi:NearCH:unega}
    \biggl|\Lxi^j\Psi(u,\phiout,\th) + \half c_0\bigg(1-\frac{r_+^2+a^2}{r_-^2+a^2}\bigg)\Lxi^j({u}^{-3})\biggr|\lesssim \abs{u}^{-3-j-\epsilon}\ \ \text{as}\ u\to -\infty.
  \end{align}

  {\it Step 4. Computing the asymptotics of $\Y\psi$.}
By rewriting \eqref{eq-wave-doublenull-2} as
  \begin{align}
    4(r^2+a^2)^\half \pv\big((\R)^\half\Y\psi\big)+\mu\bigg(\Carter\psi
  +2a\Lxi\Leta\psi
  +2 r\pv\psi
  -\frac{2ar}{r^2+a^2}\Leta\psi\bigg)=0,
  \end{align}
and using the boundedness of $\Carter\psi$, $\Lxi^{\leq 1}\Leta\psi$ and $\pv\psi$ and the fact that $r^2+a^2 = r_{-}^2 +a^2 + O(\abs{\mu})$, the limit $\Psi_1(u,\phiout, \th):=\lim_{\ub\to +\infty} \Y\psi(u,\ub,\phiout,\th)$ exists and it holds
  \begin{align}
  \label{eq:limit:Ypsi:closetoCHorizon}
   \bigl|\Y\psi(u,\ub,\phiout,\th)-\Psi_1(u,\phiout,\th)\bigr|\lesssim -\mu(u,\ub)\ \ \text{in}\ \II\backslash \IIf.
   \end{align}
    Recall that $\Y=\pu-\frac{a}{\R}\Leta=\pu-\frac{a}{r_-^2+a^2}\Leta +O(\abs{\mu}) \Leta=\Y\vert_{\CHorizon} +O(\abs{\mu}) \Leta$. This yields $|\Y\vert_{\CHorizon}\psi(u,\ub,\phiout,\th)-\Psi_1(u,\phiout,\th)\bigr|\lesssim -\mu(u,\ub)$ in $\II\backslash \IIf$, and thus $\Psi_1= \Y\vert_{\CHorizon}(\Psi)$. Plugging this back to \eqref{eq:limit:Ypsi:closetoCHorizon}, we achieve $\bigl|\Y\psi(u,\ub,\phiout,\th)-\Y\vert_{\CHorizon}(\Psi(u,\phiout,\th))\bigr|\lesssim -\mu(u,\ub)$ in $\II\backslash \IIf$. Notice that this argument works for $\II$ as well, hence, by recalling $\Y=-\mu e_3$, we obtain
    \begin{align}\label{pointwise:decay:Ypsi:nearHor}
    \bigl|\Lxi^j\big((-\mu e_3)\psi(u,\ub,\phiout,\th)\big)-\Lxi^j\big((-\mu e_3)\vert_{\CHorizon}(\Psi(u,\phiout,\th))\big)\bigr|\lesssim -\mu(u,\ub)\ \ \text{in}\ \II,
  \end{align}
  where the general $j\geq 0$ case is trivial by commuting the wave equation with $\Lxi^j$.
  
  Similar to the argument in Step 3, we have from \eqref{eq:precise:pupsi:full:IIf} that
 \begin{align}
 \big|\Lxi^j\big(\hat{e}_3 \psi (u_{\Gamma}(\ub),\phiout,\th)\big)-\frac{3}{2}c_0\frac{r_-^2-r_+^2}{r_{-}^2 +a^2}\Lxi^j\big(\ub^{-4}\big)\big|\lesssim \ub^{-4-j-\epsilon},
 \end{align}
  and thus, $\big|\Lxi^j\big(\Y\vert_{\CHorizon}(\Psi(u,\phiout,\th))\big)-\frac{3}{2}c_0(1-\frac{r_+^2+a^2}{r_-^2+a^2})\Lxi^j\big(u^{-4}\big)\big|\lesssim u^{-4-j-\epsilon}$, which then yields
 \begin{align}
 \label{eq:YPsi:CH:precise:unega}
 \bigg|\Lxi^j\big((-\mu e_3)\vert_{\CHorizon}(\Psi(u,\phiout,\th))\big)-\frac{3}{2}c_0\bigg(1-\frac{r_+^2+a^2}{r_-^2+a^2}\bigg)\Lxi^j\big({u}^{-4}\big)\bigg|\lesssim \abs{u}^{-4-j-\epsilon} \ \ \text{as}\ u\to -\infty.
 \end{align}

{\it Step 5. Computing the asymptotics of $\Psi(u,\phiout,\th)$ in $u\geq 1$.} 
In view of  the two  estimates  \eqref{eq:precise:pvpsi:scalar:nearHor:LHS}  and \eqref{pointwise:decay:Ypsi:nearHor} in $\II\cap \II'$, it actually holds that
	\begin{align}
		\label{eq:Lxij:Psi:NearCH:uposi}
		\biggl|\Lxi^j\big((-\mu e_3)\vert_{\CHorizon}(\Psi(u,\phiout,\th))\big)+\frac{3}{2}c_0'\bigg(1+\frac{r_+^2+a^2}{r_-^2+a^2}\bigg)\Lxi^j(u^{-4})\biggr|\lesssim {}& u^{-4-j-\epsilon} \ \ \text{as}\ u\to +\infty,
	\end{align}
where we recall that $e_3'=-\mu e_3$. Integrating the above inequality along the integral curve  $\gamma_{(\phiout,\th)}(u)$ of $(-\mu e_3)\vert_{\CHorizon}=\pu-\frac{a}{r_-^2+a^2}\partial_{\phiout}$ which starts from $\Sphere_{1,+\infty}$, we achieve
\begin{align}\label{oooo}
	\lim\limits_{u\to+\infty}\Lxi^j\Psi(\gamma_{(\phiout,\th)}(u))=\Upsilon_j(\phiout, \th),
\end{align}
for some function $\Upsilon_j$ defined on $\Sphere$, and 
\begin{align}\label{asy:Psi:positiveu}
	\Biggl|\Lxi^j\Psi(\gamma_{(\phiout,\th)}(u))
	-\Upsilon_j(\phiout, \th)-\frac{1}{2}c_0'\bigg(1+\frac{r_+^2+a^2}{r_-^2+a^2}\bigg)\Lxi^j(u^{-3})\biggr|\lesssim u^{-3-j-\epsilon}\ \ \text{as}\ u\to+\infty.
\end{align}
Similar to \eqref{stability:pointwise:scalar:Lxi} and \eqref{asy:Psi:positiveu}, we have
\begin{align}\label{stability:pointwise:scalar:Lxi:leftH}
	\biggl|\Lxi^j\psi(u,\ub,\phiin,\th)-\Lxi^j\Psi'(\ub,\phiin,\th)-\half c_0'\bigg(1+\frac{r_+^2+a^2}{r^2+a^2}\bigg)\Lxi^j(u^{-3})\biggr|\lesssim u^{-3-j-\epsilon} \ \ \text{in}\ \II' ,
\end{align}
and
\begin{align}\label{asy:Psi:positivev:left}
	\Biggl|\Lxi^j\Psi'(\gamma_{(\phiin,\th)}(\ub))-\Upsilon_j'(\phiin, \th)-\frac{1}{2}c_0\bigg(1+\frac{r_+^2+a^2}{r_-^2+a^2}\bigg)\Lxi^j(\ub^{-3})\biggr|\lesssim \ub^{-3-j-\epsilon}\ \ \text{as}\  \ub\to+\infty,
\end{align}
for  some functions $\Psi'$ defined on $\CHorizon'$ and  $\Upsilon_j'$ defined on $\Sphere$.

%% %%%%%%%%%%%%%%
  \subsection{Complete the proof of main Theorem \ref{thm:main}}
   \label{sect:pf:mainthm}
%%%%%%%%%%%%%%%%

By the estimates \eqref{eq:main:ptw:I:psi}, \eqref{stability:pointwise:scalar:Lxi}, \eqref{eq:Lxij:Psi:NearCH:unega} and \eqref{eq:Lxij:Psi:NearCH:uposi}, for each $j\in \mathbb{N}$, there exist smooth functions $\Psi(u,\omega)$ and $\Upsilon_j(\omega)$,  $\omega$ being the spherical coordinates on $\Sp$, such that
 \begin{subequations}
 \label{eq:proof:Main:psi:23}
 \begin{align}
 \label{pointwise:decay:scalar:blue:away:23}
    |\Lxi^j\psi-c_0\Lxi^j(\ub^{-3})|\lesssim {}& \ub^{-3-j-\epsilon}\ \ \text{in}\ \I\cup\IIf,\\
\label{stability:pointwise:scalar:Lxi:23}
    \biggl|\Lxi^j\psi-\Lxi^j\Psi(u,\omega)-\half c_0\bigg(1+\frac{r_+^2+a^2}{r^2+a^2}\bigg)\Lxi^j(\ub^{-3})\biggr|\lesssim {}&\ub^{-3-j-\epsilon} \ \ \text{in}\ \II ,
  \end{align}
  where
   \begin{align}
  \label{eq:Lxij:Psi:NearCH:unega:23}
 \biggl|\Lxi^j\Psi(u,\omega) +\half c_0\bigg(1-\frac{r_+^2+a^2}{r_-^2+a^2}\bigg)\Lxi^j({u}^{-3})\biggr|\lesssim{}& \abs{u}^{-3-j-\epsilon}\ \ \text{as}\ u\to -\infty,\\
 \label{eq:Lxij:Psi:NearCH:uposi:sum}
    \biggl|\Lxi^j\Psi(\gamma_\omega(u))-\Upsilon_j(\omega)-\half c_0'\bigg(1+\frac{r_+^2+a^2}{r_-^2+a^2}\bigg)\Lxi^j({u}^{-3})\biggr|\lesssim {}& \abs{u}^{-3-j-\epsilon} \ \ \text{as}\ u\to +\infty,
  \end{align}
 \end{subequations}
with $\gamma_{\omega}(u)$ being the integral curve  of $(-\mu e_3)\vert_{\CHorizon}$ starting from $\Sphere_{1,+\infty}$. 
 These justify the estimates \eqref{eq:main:R:psi} for any choice of $\gamma_0\in (0,1)$.\footnote{Here, the choice of $\gamma_0$ is independent of the $\gamma$ in Subsection \ref{sect:IIf}.}  In fact, in $\II\cap\{2r^*\leq \ub^{\gamma_0}\}$, we have 
 \begin{align}
 \label{eq:coordrelas:IIGamma0}
 \bigg|1+\frac{u}{\ub}\bigg|\lesssim \ub^{\gamma_0-1}, 
 \end{align}
  thus, applying this relation,  one finds the inequality \eqref{pointwise:decay:scalar:blue:away:23}  holds true in $\II\cap \{ 2r^*\leq \ub^{\gamma_0}\}$ by combining the estimates  \eqref{stability:pointwise:scalar:Lxi:23} and \eqref{eq:Lxij:Psi:NearCH:unega:23}.

 For $e_4\psi$, by the estimate \eqref{eq:main:ptw:I:pubpsi}  in region $\I$ and the estimates \eqref{eq:precise:pvpsi:scalar:IIf} and \eqref{eq:precise:pvpsi:scalar:nearHor} in region $\II$, we conclude
 the following precise late-time asymptotics for $e_4\psi$:
 \begin{align}
    \label{eq:precise:pvpsi:scalar:RHS:23}
      \bigg|\Lxi^j(e_4\psi)+\frac{3}{2}c_0\bigg(1+\frac{r_+^2+a^2}{r^2+a^2}\bigg)\Lxi^j(\ub^{-4})\bigg|\lesssim \ub^{-4-j-\epsilon} \ \ \text{in}\ \DD\cap \{\ub\geq 1\},
    \end{align}
    which is exactly \eqref{eq:main:R:e4psi}.

For $(-\mu e_3)\psi$, we conclude from the estimates \eqref{eq:main:ptw:I:pupsi}, \eqref{eq:precise:pupsi:full:IIf}, \eqref{pointwise:decay:Ypsi:nearHor} and \eqref{eq:YPsi:CH:precise:unega}  the following precise late-time asymptotics:
\begin{subequations}
\begin{align}
  \label{eq:precise:pupsi:full:IIf:23}
    \bigg|\Lxi^j\big((-\mu e_3)\psi\big)-\frac{3}{2}c_0\bigg(1-\frac{r_+^2+a^2}{r^2+a^2}\bigg)\Lxi^j(\ub^{-4})\bigg|\lesssim (r_+-r) \ub^{-4-j-\epsilon}\ \ \text{in}\ \I\cup \IIf,
\end{align}
  \begin{align}\label{pointwise:decay:Ypsi:nearHor:23}
    \bigl|\Lxi^j\big((-\mu e_3)\psi\big)-\Lxi^j\big((-\mu e_3)\vert_{\CHorizon}(\Psi(u,\omega))\big)\bigr|\lesssim -\mu\ \ \text{in}\  \II,
  \end{align}
  where
\begin{align}
\label{eq:YPsi:CH:precise:unega:23}
\bigg|\Lxi^j\big((-\mu e_3)\vert_{\CHorizon}(\Psi(u,\omega))\big)-\frac{3}{2}c_0\bigg(1-\frac{r_+^2+a^2}{r_-^2+a^2}\bigg)\Lxi^j\big({u}^{-4}\big)\bigg|\lesssim \abs{u}^{-4-j-\epsilon} \ \ \text{as}\ u\to -\infty.
\end{align}
\end{subequations}
These prove the estimates \eqref{eq:main:R:e3psi}  for any choice of $\gamma_0\in (0,1)$ by the same argument above via using the relation \eqref{eq:coordrelas:IIGamma0}.

%%%%%%%%%%%%%%%%%%%%%%
\section*{Acknowledgement}
%%%%%%%%%%%%%%%%%%%%%%

The first author S. M. acknowledges the support by the ERC grant ERC-2016 CoG 725589 EPGR and the Alexander von Humboldt postdoc fellowship. The second author L. Z.  acknowledges the support by the National Nature Science Foundation of China (Grant No. 12201083). The authors are grateful to the anonymous referees for many valuable comments and suggestions.

%%%%%%%%%%%%%%%%%%%%%%%%%%%%%%%%%%%

\bibliographystyle{amsplain}
\newcommand{\arxivref}[1]{\href{http://www.arxiv.org/abs/#1}{{arXiv.org:#1}}}
\newcommand{\mnras}{Monthly Notices of the Royal Astronomical Society}
\newcommand{\prd}{Phys. Rev. D }
\newcommand{\prl}{Phys. Rev. Lett. }
\newcommand{\apj}{Astrophysical J. }

\providecommand{\MR}{\relax\ifhmode\unskip\space\fi MR }
\providecommand{\MRhref}[2]{%
  \href{http://www.ams.org/mathscinet-getitem?mr=#1}{#2}
}
\providecommand{\href}[2]{#2}

\end{document}